\documentclass[a4paper,11pt]{article}

\usepackage{relsize}
\usepackage{epigraph}
\usepackage{nicefrac}
\usepackage[margin=1in]{geometry}
\usepackage{amsmath,amssymb,amsthm}
\usepackage{hyperref}
\usepackage{fullpage}
\hypersetup{colorlinks=true,linkcolor=black,citecolor=blue}

\usepackage{amsmath, amsthm, amssymb, amstext, comment, graphicx, fullpage}

\usepackage{mathtools}
\usepackage{float}
\usepackage[titletoc,title]{appendix}
\usepackage{graphicx, color}
\usepackage[classfont=sanserif,langfont=roman,funcfont=italic]{complexity}
\usepackage{caption}
\usepackage{subcaption}
\usepackage{environ}
\NewEnviron{box1}[1]{%
	\begin{center}\fbox{\parbox{6in}{%
				{\centering\scshape #1\par}%
				\parskip=1ex
				\everypar{\hangindent=1em}%
				\BODY
}}\end{center}}

\bibliographystyle{alpha}

\newtheorem{theorem}{Theorem}[section]

\newtheorem{definition}{Definition}

\newtheorem{corollary}{Corollary}

\newtheorem{lemma}{Lemma}

\newtheorem{open}{Open}
\newtheorem{example}{Example}

\usepackage{booktabs} 

\renewcommand{\vec}[1]{\mathbf{#1}}

\usepackage{comment}
\newcommand{\floor}[1]{\lfloor #1 \rfloor}

\newcount\Comments
\Comments=1
\definecolor{darkgreen}{rgb}{0,0.7,0}
\newcommand{\kibitz}[2]{\ifnum\Comments=1\textcolor{#1}{#2}\fi}

\newtheorem{observation}{Observation}

\title{Walrasian Dynamics in Multi-unit Markets \footnote{This project has received funding from the European Research Council (ERC) under the European Unions Horizon 2020 research and innovation programme (grant agreement No 740282), and from the ISF grant 1435/14 administered by the Israeli Academy of Sciences and Israel-USA Bi-national Science Foundation (BSF) grant 2014389.
Aris Filos-Ratsikas was supported by the ERC Advanced Grant 321171 (ALGAME)}}
\author{
	Simina Br\^anzei\footnote{Purdue University, USA. E-mail: \textcolor{blue}{\href{mailto:simina.branzei@gmail.com}{simina.branzei@gmail.com}}.}\\
	\newline
	\and
	Aris Filos-Ratsikas\footnote{Oxford University, United Kingdom. E-mail: 
		\textcolor{blue}{\href{mailto:aris.filos-ratsikas@cs.ox.ac.uk}{aris.filos-ratsikas@cs.ox.ac.uk}}.}
}

\date{}

\begin{document}
	
	\maketitle
\begin{abstract}
	
In a multi-unit market, a seller brings multiple units of a good and tries to sell them to a set of buyers that have monetary endowments. While 
a Walrasian equilibrium does not always exist in this model, natural relaxations of the concept that retain its desirable fairness properties do exist.

We study the dynamics of (Walrasian) envy-free pricing mechanisms in this environment, showing that for any such pricing mechanism, the best response dynamic starting from truth-telling converges to a pure Nash equilibrium with small loss in revenue and welfare. Moreover, we generalize these bounds to capture all the Nash equilibria for a large class of (monotone) pricing mechanisms. We also identify a natural mechanism, which selects the minimum Walrasian envy-free price, in which for $n=2$ buyers the best response dynamic converges from any starting profile, and for which we conjecture convergence for any number of buyers.

\end{abstract}

\maketitle
\section{Introduction}

The question of allocating scarce resources among participants with heterogeneous preferences is one of the most important problems faced throughout the history of human society, starting from millenary versions---such as the division of land---to modern variants---such as allocating computational resources to the users of an organization or selling goods in eBay auctions.
A crucial development in answer to this question occurred with the invention of money, i.e. of pricing mechanisms that aggregate information about the supply and demand of the goods in order to facilitate trade. The price mechanism was formalized and studied systematically starting with the 19th century, in the works of Fisher \cite{BS00}, \cite{Walras74} and \cite{AD54}.

The basic setting is that of a set of participants that come to the market with their initial endowments and aim to purchase goods in a way that maximizes their utility subject to the initial budget constraints. Walrasian equilibria are outcomes where demand and supply meet, have been shown to exist under mild conditions when the goods are perfectly divisible, and satisfy very desirable efficiency properties (see, e.g., \cite{AD54, AGT_book}).

The beautiful general equilibrium theory rests on several idealized conditions that are not always met. In particular, the real world is fraught with instances of allocating indivisible goods, such as a house or a piece of jewelry. Walrasian equilibria can disappear in such scenarios, and the classes of utilities for which they continue to exist are very small \cite{KC82,GS99}, requiring among other things that buyers have essentially unbounded budgets of intrinsic value to them (i.e. monetary endowments so large as to always ``cover'' their valuations for the other goods in the market). However this is a strong assumption that violates basic models such as the Fisher and exchange market, where bounded endowments are an essential element of the profiles of the agents \cite{dobzinski2012multi,AD54}.

Nevertheless, we would like to retain at least partially the crisp predictions of Walrasian equilibrium theory in the face of indivisibilities, such as the fairness properties\footnote{One such fairness property is envy-freeness.} of the pricing and the fact that it allows each buyer to freely purchase their favorite bundle at those prices. One method that has been proposed to accomplish this is relaxing the clearing requirement; for instance, in the case of one seller that brings multiple goods, it can be acceptable that at the end of the trade there are some leftover items in stock. The crucial optimality condition of Walrasian equilibria can be maintained: each buyer purchases an optimal bundle at the current prices. This notion of equilibrium is known as ``(Walrasian) envy-free pricing'' and refers to the fact that no buyer should ``envy'' any bundle that it could afford in the market \cite{guruswami2005profit}. Envy-free pricing is guaranteed to exist in very general models and the question becomes to compute one that recovers some of the efficiency properties of exact Walrasian equilibria.

A high level scenario motivating our analysis is an online marketplace where a seller posts goods for sale, while interested buyers compete for obtaining them by submitting bids that signal their interest in the goods. The buyers have time to repeatedly update their bids after seeing the bids submitted by others, in order to
get better allocations for themselves or potentially lower the price.
This process continues until a stable state is reached, point at which the seller allocates the goods based on the final bids.
We are interested in understanding the outcomes at the stable states of this dynamic in multi-unit markets, and in particular in quantifying the social welfare of the participants and revenue extracted by the seller. 

Multi-unit auctions have been advocated as a lens to the entire field of (algorithmic) mechanism design \cite{Nisan14} and our scenario is in fact
\emph{an environment where repeated best-response provides reasonable approximation guarantees}'',
stated explicitly as a future research question by Nisan et al. \cite{nisan2011best}.

\subsection{Our Results}

We study the (Walrasian) envy-free pricing problem in one of the most basic scenarios possible, namely linear multi-unit markets with budgets. There is one seller who comes equipped with $m$ units of some good (e.g. chairs), while the buyers bring their budgets. The seller has no value for the items, while the buyers value both their money and the goods. 

\smallskip

The buyers are strategic and a mechanism for envy-free pricing will have to elicit the valuations from the buyers. The budgets are known. In the economics literature, budgets are viewed as hard information (quantitative), as opposed to preferences, which represent soft information and are more difficult to verify (see, e.g., \cite{Petersen}). Bulow, Levin and Milgrom \cite{bulow2009winning} provide an example of such an inference in a real-life actions. Known budgets are also studied in the auctions literature \cite{dobzinski2012multi,fiat2011single,laffont1996optimal}. 

\smallskip

Our goal is to understand the best response dynamics as well as the entire Nash equilibrium set of such a mechanism, together with the social welfare and revenue attained at the stable states. For the best response dynamic, we assume that at each step of the process a buyer observes the bids of others, then updates its own bid optimally given the current state of the market. For the Nash equilibrium set, we will consider equilibria in which the buyers do not overbid.\footnote{We discuss and motivate this assumption in detail in Section \ref{sec:nashset}.}
Our results are for (Walrasian) envy-free pricing mechanisms and can be summarized as follows.

\begin{theorem}[Convergence] \label{thm:maininformally}
	Let $\textrm{A}$ be any mechanism. Then the best response dynamic starting from the truth-telling profile converges to a pure Nash equilibrium of $\textrm{A}$. Compared to the truth-telling outcome, the equilibrium reached has the property that 
	\begin{itemize}
		\item the utility of each buyer is (weakly) higher, and 
		\item the number of units received by each buyer is (weakly) larger, possibly with the exception of the last deviator.
	\end{itemize} 
	The best response may cycle when starting from a non-truthful profile.
\end{theorem}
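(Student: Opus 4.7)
The plan is to establish that the best-response dynamic started from truth-telling has a strong step-wise monotonicity property: at each step, (i) the deviator strictly improves its utility, (ii) every other buyer's utility weakly increases, and (iii) every buyer other than the deviator receives weakly more units (the deviator possibly receives fewer). Iterating this across all deviations yields immediately the comparison to the truth-telling outcome stated in the theorem. The convergence and Nash-equilibrium conclusions then follow because utilities are bounded above (by budgets or by per-unit valuation times $m$), each deviation strictly improves the deviator's utility, and a halting profile is by definition a pure Nash equilibrium.

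The heart of the argument is the inductive proof of the monotonicity property, which I would set up as follows. First I would show by induction that, starting from the truthful profile, no buyer ever has reason to bid above its true value: overbidding can only cause one to pay more than its valuation on at least one marginal unit, which is dominated by the truthful bundle at every time step (using the induction hypothesis that utilities remain at least the truthful utility). Consequently each best-response weakly decreases the deviator's bid, so the bid vector is coordinate-wise non-increasing across the dynamic. Now I would invoke the structural property of (Walrasian) envy-free pricing in linear multi-unit markets---the envy-free price is determined by a cut in the sorted bid vector---to conclude that when one bid decreases, the envy-free price weakly decreases as well. Lower prices mean that a non-deviating buyer's previously demanded bundle remains affordable and envy-free, so its utility cannot decrease; moreover, since the deviator now demands weakly fewer units at the lower price, the freed units can be absorbed by the other buyers, giving (iii). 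The last-deviator exception appears naturally: the deviator may have shed units precisely in order to trigger the price drop that raised its utility.

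The main obstacle is Step (ii)--(iii) of the inductive step, namely pinning down exactly which structural facts about envy-free pricing the argument uses and handling ties, since the theorem is stated for \emph{any} envy-free pricing mechanism $\textrm{A}$. One must be careful that tie-breaking in $\textrm{A}$ does not perversely reassign units from non-deviators to the deviator when a bid decreases, and that ``bids remain below truthful'' survives even after many rounds of mixed deviations. For termination I would additionally note that without loss of generality each best-response bid lies in a finite set of break points (the values and budgets of the other buyers), so the state space is finite and the strictly monotone utility of the most recent deviator rules out cycles. Finally, for the last sentence of the theorem I would exhibit a small explicit market (two or three buyers suffice) together with a non-truthful starting profile for which the ``bids stay below truthful'' invariant fails at the outset, allowing best-response to oscillate between two profiles; the construction is concrete but not conceptually deep.
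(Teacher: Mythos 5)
Your step-wise monotonicity plan reaches the right conclusion, but two of the pivotal claims in your sketch are not available for \emph{arbitrary} envy-free pricing mechanisms, and these are exactly what the paper's proof is engineered to avoid.

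First, you argue that no buyer ever overbids, so the bid vector is coordinate-wise non-increasing, and then invoke ``the envy-free price is determined by a cut in the sorted bid vector'' to conclude the price weakly decreases. Neither half survives generality. The mechanism $\textrm{A}$ may select \emph{any} envy-free price, in an arbitrary (possibly non-monotone) way; indeed the paper explicitly constructs such mechanisms (e.g.\ in the cycling example of Theorem~\ref{thm:bestresponsecycles}, where the price map is essentially a lookup table). There is no ``cut in the sorted bid vector.'' Consequently, a best response may even strictly \emph{increase} a bid above the truthful value if the mechanism happens to reward that with a lower price---the domination argument you gesture at compares overbidding against truth-telling holding others fixed, but best responses compare against all deviations, and nothing forces the winning deviation to sit below $v_i$. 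The paper sidesteps both issues: it never claims bids are non-increasing, and it never assumes price-monotonicity of $\textrm{A}$. Instead it argues directly from the deviator's utility that a best response \emph{must} strictly lower the price (raising or holding the price cannot help a hungry/semi-hungry/uninterested buyer, given the inductive invariant), and maintains an invariant on which buyers appear hungry, semi-hungry, and uninterested at the current price and how their reports relate to their true values.

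Second, your termination argument is a genuine gap. ``Strictly monotone utility of the most recent deviator'' together with a finite state space does not rule out cycles---that is precisely the textbook failure mode of best response dynamics in games without a potential, and the paper's own Theorem~\ref{thm:bestresponsecycles} exhibits such a cycle from a non-truthful start. The paper obtains termination by establishing that the \emph{price} strictly decreases on the discrete output grid at every step, which is a global potential for the dynamic; your plan provides no such potential. Once you adopt the paper's invariant (decreasing price, plus honesty of the non-deviating semi-hungry and uninterested buyers), the improvement conclusions you want---weakly higher utility for all, weakly more units for all but the last deviator---drop out of the characterization of the sets at the halting round, which is how the paper closes the argument.
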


Our next theorem shows that convergence can be slow in the worst case. Note the input to the mechanism (i.e. valuations and budgets) are drawn from a discretized domain (i.e. a grid), while the price is selected from an output domain that is also discretized.\footnote{Note that if the input and output domain are continuous, the best response may not be well defined; see Example \ref{ex:one}.}

\begin{theorem}[Lower bound on convergence time] (informal)
	There exists a mechanism such that for arbitrary (but fixed) market parameters, the best response dynamic takes $\Omega(1/\epsilon)$ steps to converge, where $\epsilon > 0$ is the maximum distance between consecutive entries in the input and output domains.
\end{theorem}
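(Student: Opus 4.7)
The plan is to exhibit a concrete envy-free pricing mechanism $A$, a fixed small instance (a constant number $n$ of buyers, a fixed number $m$ of units, fixed budgets and valuations), and a starting profile (namely the truth-telling profile) such that every sequence of best responses to a pure Nash equilibrium must consist of at least $\Omega(1/\epsilon)$ moves. Since the number of buyers and the budget/valuation magnitudes are fixed constants, a bid profile lying within an $\epsilon$-grid admits only $\epsilon$-resolution moves, so it is enough to show that (i) the truth-telling profile is at bid-space distance $\Omega(1)$ from every Nash equilibrium that best response could reach, and (ii) each best-response step moves exactly one buyer's bid by only $\epsilon$.

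For the mechanism $A$, I would take a rule that selects a specific envy-free price which depends \emph{monotonically} on each buyer's bid; a natural candidate is the one that picks the maximum Walrasian envy-free price, breaking ties deterministically. Under such a rule, each buyer's payment per unit is strictly increasing in their own bid over the relevant range, so truth-telling is generically not an equilibrium and buyers have an incentive to shade downward. I would then pick an instance with $n=2$ buyers, $m$ units, and valuations/budgets calibrated so that at truth-telling both buyers purchase a positive number of units, and so that the (essentially unique) Nash bid profile lies at bid-space distance $\Omega(1)$ below the truth-telling profile, thereby guaranteeing (i).

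The delicate step is (ii): showing that along the best response path each deviator's bid changes by exactly $\epsilon$, and not by a larger jump. To force this, I would choose the valuations and budgets so that each buyer's demand sits \emph{exactly at a threshold} of the other buyer's current bid. Concretely, whenever buyer $i$ contemplates a deviation, lowering their bid by $\epsilon$ strictly reduces the envy-free price they face while preserving their allocation, so it is profitable; but lowering by $2\epsilon$ would push them below the envy-free boundary induced by the opponent's current bid and cost them an entire unit (or, by symmetry, shift demand so that they forfeit more utility than they gain). This discrete kink in the demand/price correspondence is what bounds the per-step improvement to an $\epsilon$-move, and thus forces slow convergence. By monotonicity of $A$ and the structure of Theorem~\ref{thm:maininformally} (utilities weakly increase along the path from truth-telling), the dynamic cannot shortcut the process by oscillating past the equilibrium.

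Putting these pieces together, each player in turn decreases their bid by exactly $\epsilon$ in response to the other, and this must continue until the accumulated decrease equals the $\Omega(1)$ gap between the truth-telling profile and the equilibrium, giving $\Omega(1/\epsilon)$ steps. The main obstacle I anticipate is the calibration in step (ii): verifying that in the particular instance, the envy-free-price function actually has the sharp threshold behavior that makes an $\epsilon$-move optimal and a $2\epsilon$-move strictly worse. This is where one must work through the combinatorics of the demand correspondence at prices on the $\epsilon$-grid; the remaining accounting is essentially counting rounds.
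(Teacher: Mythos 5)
Your plan --- pick a bid-monotone mechanism whose price tracks the highest report, then calibrate a small instance so that two buyers shade their bids downward by exactly one grid step per round --- is the same strategy the paper uses. The paper's construction has precisely two ``active'' buyers (the rest are inert filler with tiny budgets) who alternate $\epsilon$-decrements until the grid is exhausted, matching your outline almost exactly.

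Two points need repair before this is a proof rather than a plan. First, the mechanism you nominate, ``the one that picks the maximum Walrasian envy-free price,'' is not well defined: for any reported profile, every price strictly above all reported valuations supports the all-zero allocation and is therefore envy-free, so the set of envy-free prices is unbounded above and has no maximum. You need something like the paper's \textsc{Almost-Top} rule, which sets the price to the highest report (or that minus one grid step whenever this remains envy-free), with greedy tie-breaking. Second, your step (ii) --- showing that a deviator cannot gain by jumping more than $\epsilon$ --- is precisely the content of the argument, and you flag it as the obstacle without resolving it. In the paper's construction it is handled by the observation that once a buyer's report drops below the other active buyer's bid, the price becomes pinned by the opponent's bid and the deviator turns uninterested, losing its unit entirely; so every deviation beyond a single grid step yields no further price reduction while destroying the allocation, forcing the $\epsilon$-kink you anticipate. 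Until a concrete mechanism and a verified instance exhibiting that kink are produced, the $\Omega(1/\epsilon)$ lower bound remains unestablished.
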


\noindent On the positive side, we identify a natural class of ``consistent'' mechanisms, containing welfare and revenue maximizing (and approximating) ones, for which convergence is much faster.

\begin{theorem}[Convergence time of consistent mechanisms]
	For any consistent mechanism $\textrm{A}$, the best response dynamic starting from the truth-telling profile converges to a pure Nash equilibrium of $\textrm{A}$ in at most $n$ steps.
\end{theorem}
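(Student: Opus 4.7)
The plan is to show that under consistency every buyer deviates from its truthful bid at most once along the best-response path, which immediately yields the $n$-step bound. I will lean almost entirely on the monotonicity package supplied by Theorem 1: starting from truth-telling, each step weakly increases every buyer's utility and weakly increases every buyer's allocation, with at most an exception for the most recent deviator.

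The core step would be to prove a \emph{lock-in lemma}: once buyer $i$ has played a best response $b_i^{*}$ at some step, this bid remains a best response for $i$ at every subsequent state visited by the dynamic. To prove this I would unpack the definition of a consistent mechanism (which the paper supplies, and which by assumption contains the welfare/revenue-maximizing ones) and argue that when another buyer $j \ne i$ later changes their bid, the menu of (bundle, price) outcomes that $i$ would face under any unilateral change can only move in a direction that preserves the optimality of $b_i^{*}$ for $i$, for example by weakly lowering the price of the bundle $i$ currently receives while weakly shrinking the set of bundles that could strictly improve on $i$'s current utility. Here I expect the monotonicity built into ``consistent'' to kick in, matching well with the utility- and allocation-monotonicity already delivered by Theorem 1.

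With the lock-in lemma in hand, the conclusion is a counting argument: each non-trivial best-response step converts a previously passive (still-truthful) buyer into a locked-in buyer, and after at most $n$ such conversions the profile is necessarily a pure Nash equilibrium. The main obstacle I foresee is the ``last deviator'' exception in Theorem 1, which allows the most recent mover's allocation to shrink and therefore, a priori, threatens the lock-in by giving that particular buyer a new profitable deviation at the next step. I plan to sidestep this by combining the allocation-monotonicity that does hold for the non-last buyers with the exception-free utility-monotonicity half of Theorem 1: since any later update leaves every buyer weakly better off, the previously optimal $b_i^{*}$ cannot be strictly improved upon among envy-free alternatives, so the ``last deviator'' role passes cleanly from buyer to buyer without re-activating anyone already locked in.
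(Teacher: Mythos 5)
Your high-level plan is right: show that under consistency each buyer best-responds at most once along the dynamic, which together with Theorem~\ref{thm:convergence} gives the $n$-step bound. The paper's proof (Appendix~\ref{app:nashdynamic}, Theorem~\ref{appdyn:conv}) does exactly this by a contradiction argument assuming some buyer best-responds twice. Where your sketch diverges — and where I think it would break down — is in how consistency is supposed to enter.

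You write that you ``expect the monotonicity built into `consistent' to kick in.'' But consistency is \emph{not} a monotonicity property. The definition says: if $A$ outputs $(p,\vec{x})$ on market $\mathcal{M}$, and $\mathcal{M}'$ has the \emph{same} sets of hungry and semi-hungry buyers at $p$, then $A$ must output $(p,\vec{x})$ on $\mathcal{M}'$ as well. It is a stability-of-output property under changes to the profile that leave the demand structure at the chosen price invariant; it says nothing about what happens when you raise or lower valuations. Your ``lock-in lemma'' justification — that later changes by $j\neq i$ can only move $i$'s available (bundle, price) menu in a favorable direction — is not something consistency delivers, and the utility/allocation monotonicity from Theorem~\ref{thm:convergence} only tracks the utility $i$ gets by \emph{staying}, not the utility $i$ could gain from the best available deviation.

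The actual argument has to go through the following observation, which your sketch does not reach. Suppose $i$ best-responds at step $k$ by moving from $v_i$ to $v_i'$ with others at $\vec{s}_{-i}$, and later from $v_i'$ to $v_i''$ with others at $\vec{s}'_{-i}$. Using the monotone price-decrease and honesty structure from Theorem~\ref{thm:convergence}, every buyer who changed their report between $\vec{s}_{-i}$ and $\vec{s}'_{-i}$ is hungry at the later, strictly lower price regardless of which of their two reports you consider; hence the profiles $(v_i'',\vec{s}_{-i})$ and $(v_i'',\vec{s}'_{-i})$ have \emph{identical} hungry and semi-hungry sets at the price $p_A(v_i'',\vec{s}'_{-i})$. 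Consistency then forces $A$ to output the same price and the same allocation to $i$ on both, so the later deviation $v_i''$ would already have been strictly profitable against $\vec{s}_{-i}$ — contradicting that $v_i'$ was a best response there. This invariance-of-demand-sets step is the content of ``consistent'' and is what your proposal is missing; without it the ``lock-in'' claim is unsupported, because a priori the shrinking of the interested set as others lower their bids could open up new profitable deviations for $i$.
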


\medskip

\noindent We also quantify the social welfare and revenue attained at the end of the best response process. Some of our bounds use the notion of budget share, which captures the level of competition in the market by quantifying the effect that the budget of a single buyer can have on the revenue. The more competitive the market is, the smaller the budget share (for similar notions of competitiveness, see \cite{cole2015does,dobzinski2012multi}).

\begin{theorem}[Revenue and welfare of the dynamic starting from truth-telling]
	Let $A$ be any mechanism. Then the best response dynamic starting from truth-telling has the property that the Nash equilibrium reached, compared to the truth-telling state, is such that:
	\begin{itemize}
		\item the (additive) loss in social welfare (compared to the maximum possible) is at most the maximum budget.
		\item the revenue is approximated within a (multiplicative) factor of at least $\left(\beta-\alpha\right)/2$, where $ \beta \in [0,1]$ is the approximation ratio of mechanism $A$ for the optimal revenue and $\alpha$ is the budget share of the market.
	\end{itemize}
\end{theorem}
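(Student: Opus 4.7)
My plan starts from Theorem~1, which guarantees that the best response dynamic from the truthful profile converges to a pure Nash equilibrium where (i) every buyer has weakly higher utility than at truth-telling and (ii) every buyer except possibly the last deviator receives weakly more units. I will also use the defining property of an envy-free pricing mechanism: at any reported profile, each buyer receives a utility-maximizing bundle at the posted price, so a buyer allocated a positive quantity has reported value at least the price, and a buyer with reported value strictly below the price receives nothing.

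\textbf{Welfare bound.} I would decompose the welfare gap telescopically,
\[
\mathrm{OPT} - \mathrm{SW}(\text{Nash}) \;=\; \bigl(\mathrm{OPT} - \mathrm{SW}(\text{truth})\bigr) \;+\; \bigl(\mathrm{SW}(\text{truth}) - \mathrm{SW}(\text{Nash})\bigr).
\]
The first term is controlled by envy-freeness under truthful reports: units go to buyers whose true value exceeds the posted price, so any welfare loss relative to the optimal allocation is concentrated in a single ``marginal'' buyer whose welfare contribution is bounded by her spending, hence by her budget. For the second term, the monotonicity in Theorem~1 tells us that every buyer except the last deviator $j$ weakly gains units, so the only possible decrease in welfare is absorbed by buyer $j$; since $j$'s welfare contribution at truth-telling is at most $B_j \le \max_i B_i$, combining the two terms yields the claimed additive bound.

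\textbf{Revenue bound.} Starting from $\mathrm{Rev}(\text{truth}) \ge \beta \cdot \mathrm{OPT}_{\mathrm{Rev}}$, I would show that at the Nash equilibrium, per-buyer spending drops by at most a constant factor. The key lever is again envy-freeness: for any buyer $i$ whose utility weakly improved and whose quantity weakly increased, the Nash price $p^*$ cannot be much smaller than the truth-telling price---otherwise $i$ would have strictly more profitable demand at $p^*$ than was realized---so the product $p^* \cdot x_i^*$ remains within a factor of two of $p^{\text{truth}} \cdot x_i^{\text{truth}}$. Summing over all buyers other than the last deviator $j$, and subtracting $j$'s truth-telling spending (which is at most $B_j$, and by the definition of budget share is at most $\alpha$ times the reference revenue), leads to $\mathrm{Rev}(\text{Nash}) \ge \tfrac{1}{2}(\beta - \alpha) \cdot \mathrm{OPT}_{\mathrm{Rev}}$.

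\textbf{Main obstacle.} I expect the hardest step to be formalizing the factor-of-two guarantee on per-buyer spending from the Nash side. The Theorem~1 monotonicity only gives weak increases in utility and in quantity, and converting this into a multiplicative bound on the product $p \cdot x$ requires a careful case analysis according to whether each buyer is budget-constrained or value-constrained at each of the two profiles. A secondary obstacle is pinning down the role of the last deviator precisely: one must argue that the envy-free response to a single buyer's unilateral deviation cannot move the market price by more than an amount commensurate with that buyer's own budget, which is exactly the role the budget share $\alpha$ is designed to capture and is what ultimately yields the clean $(\beta-\alpha)/2$ form.
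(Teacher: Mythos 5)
Your high-level plan is right about one thing: the last deviator is the buyer on whom all the loss concentrates, and that follows from Theorem~\ref{thm:convergence}. But both of your bounding steps contain errors, and the actual mechanism of the factor of~$2$ is different from what you describe.

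\paragraph{Welfare.} Your telescoping decomposition introduces the term $\mathrm{OPT}-\mathcal{SW}(\text{truth})$, but the theorem is for \emph{any} mechanism $A$, so that term can be unbounded (consider a mechanism that posts an astronomical price and sells nothing). The claim really is about loss relative to the truth-telling \emph{outcome of $A$}, not relative to $\mathrm{OPT}$, which is what the paper's proof bounds. More importantly, your second term argument has a bug: ``$j$'s welfare contribution at truth-telling is at most $B_j$'' is false in general, since $v_j\cdot x_j^A(\vec v)$ can exceed $B_j$ whenever $v_j>p_A(\vec v)$. What is bounded by $B_j$ is $j$'s \emph{spend} $p_A(\vec v)\cdot x_j^A(\vec v)$, and the step you are missing is the use of Theorem~\ref{thm:convergence}'s utility-improvement property. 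Since $j$'s utility at $\vec s$ is at least its utility at $\vec v$,
\[
(v_j-p_A(\vec s))\,x_j^A(\vec s)\ \geq\ (v_j-p_A(\vec v))\,x_j^A(\vec v),
\]
which rearranges to
\[
v_j\bigl(x_j^A(\vec v)-x_j^A(\vec s)\bigr)\ \leq\ p_A(\vec v)\,x_j^A(\vec v)-p_A(\vec s)\,x_j^A(\vec s)\ \leq\ p_A(\vec v)\,x_j^A(\vec v)\ \leq\ B_j.
\]
The left side is precisely the welfare loss, and all other buyers only gain units. That is the whole proof.

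\paragraph{Revenue.} The key claim you propose --- that envy-freeness forces the Nash price $p_A(\vec s)$ to be close to $p_A(\vec v)$, so that $p\cdot x$ stays within a factor of~$2$ per buyer --- is not true. The best-response dynamic can drive the price down by a large factor (indeed that is the point of Theorem~\ref{badrevenuemonop}). The factor of~$2$ in the paper does \emph{not} come from a bound on the price ratio; it comes from a floor argument. Each buyer $i$ (other than the last deviator $\ell$) with $v_i\geq p_A(\vec v)$ who can afford at least one unit is still hungry at the lower price $p_A(\vec s)$ (their report never changed to below the price, by the invariants of Theorem~\ref{thm:convergence}), so they spend at least
\[
\Bigl\lfloor \tfrac{B_i}{p_A(\vec s)}\Bigr\rfloor\cdot p_A(\vec s)\ \geq\ \tfrac{B_i}{2},
\]
using $\lfloor \alpha_i\rfloor\geq 1$ and $\alpha_i\leq 2\lfloor\alpha_i\rfloor$, while at $p_A(\vec v)$ they spent at most $B_i$. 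You also need the paper's intermediate structural step: show that $p_A(\vec s)$ is an envy-free price for the \emph{reduced market} $(\vec v_{-\ell},\vec B_{-\ell},m)$, which again relies on the fact (from Theorem~\ref{thm:convergence}) that at termination only the last deviator can appear semi-hungry among buyers in $\mathcal{I}_{p_A}(\vec v)$. Finally, $\alpha$ does not measure the price movement caused by a unilateral deviation; it enters because removing $\ell$ from the market drops the reference revenue by at most $B_\ell\leq\alpha\cdot\mathcal{REV}(\vec v,\vec B)$, and chaining with the $\beta$-approximation gives $\tfrac{1}{2}(\beta-\alpha)$.
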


\noindent We show that similar approximation guarantees hold for all the (non-overbidding) Nash equilibria of monotone mechanisms.

\begin{theorem}[Revenue and welfare in any non-overbidding Nash equilibrium] \label{thm:all_equilibria_revsw}
	Let $\textrm{A}$ be a monotone mechanism. Then in any pure Nash equilibrium of $\textrm{A}$ where buyers do not overbid, compared to the truth-telling outcome of $A$:
	\begin{itemize}
		\item the (additive) loss in welfare is at most $\gamma_A \cdot B^*$, where $B^*$ is the maximum budget
		\item the revenue is approximated within a multiplicative factor of at least $\left(\beta-\gamma_A \cdot \alpha\right)/2$, 

where $\beta \in [0,1]$ is the approximation ratio of the mechanism $A$ for the optimal revenue and $\alpha$ is the budget share of the market
	\end{itemize}
	such that $\gamma_A$ is the maximum number of buyers that can receive partial allocations by $\mathrm{A}$.
\end{theorem}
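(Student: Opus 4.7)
The plan is to extend the argument used for Theorem 4 (which handles the Nash equilibrium reached by best response from truth-telling) to \emph{arbitrary} non-overbidding pure Nash equilibria of a monotone mechanism, using monotonicity to relate the equilibrium outcome back to what $A$ would produce at truth-telling.

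First I would fix notation: let $b = (b_1, \ldots, b_n)$ be a non-overbidding Nash equilibrium, so $b_i \le v_i$ for every buyer, let $v = (v_1, \ldots, v_n)$ denote truth-telling, and write $(x, p) = A(b)$ and $(x^*, p^*) = A(v)$ for the resulting allocations and prices. The Nash condition supplies, for each $i$, the one-shot deviation inequality $u_i(b) \ge u_i(v_i, b_{-i})$, which combined with non-overbidding and envy-freeness at $p$ controls how far the equilibrium allocation can drift from the truth-telling allocation.

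For the welfare bound, I would argue that the buyers whose equilibrium bundle strictly differs from their truth-telling bundle must be confined to the at most $\gamma_A$ buyers receiving a partial allocation under $A(b)$: any buyer fully served (or fully unserved) at $b$ would, by monotonicity and $b_i \le v_i$, be served at least as well at $v$, while the envy-free structure rules out welfare-harming swaps among fully served buyers. Each partial buyer contributes at most $B^*$ to the welfare gap, since the value of their bundle is bounded by what they could have spent under envy-free prices, i.e.\ by their budget. Summing yields the $\gamma_A \cdot B^*$ additive welfare loss, exactly mirroring the ``single last deviator'' term of Theorem 4 but now with $\gamma_A$ potential deviators in place of one.

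For the revenue bound I would combine the same structural localization with the $\beta$-approximation guarantee of $A$ at truth-telling. Revenue at $b$ can differ from revenue at $v$ only on the bundles of the $\gamma_A$ partial buyers; by the budget-share argument used in Theorem 4, each such buyer's contribution to the revenue drop is at most $\alpha \cdot \mathrm{REV}^{*}$, where $\mathrm{REV}^{*}$ is the optimal revenue. This yields revenue at least $(\beta - \gamma_A \alpha)\,\mathrm{REV}^{*}$ before the standard factor-of-two loss that arises, exactly as in Theorem 4, from comparing the equilibrium revenue to the ``fully priced'' benchmark. The main obstacle I foresee is the structural step: showing that under monotonicity and non-overbidding, no buyer outside the $\gamma_A$ partial recipients can have their allocation changed in a welfare- or revenue-relevant way by the collective underbidding at equilibrium. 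Without monotonicity, strategic underbidding could in principle cascade through the entire allocation, and this is precisely where the hypothesis on $A$ enters; once the localization lemma is in place, both bullets follow by the accounting of Theorem 4 with $\gamma_A$ replacing the last deviator.
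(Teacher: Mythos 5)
Your high-level strategy matches the paper's (localize all loss to the at most $\gamma_A$ partial-allocation buyers, charge each to its budget, reuse the factor-of-two price argument for revenue), but several of the explicit steps you outline would not survive scrutiny as written. The per-buyer welfare bound is justified by saying the value of the lost bundle is bounded by the budget, which is false: it is the \emph{payment} $p\cdot x_i$, not the \emph{value} $v_i\cdot x_i$, that is bounded by $B_i$, and $v_i$ may greatly exceed the price. The paper's bound instead comes from the Nash condition applied to the deviation to $v_i$: after showing (using price-monotonicity plus, crucially, supply-monotonicity to handle the case where the deviating buyer lands semi-hungry at the unchanged price) that $x_i^{A}(v_i,\vec{s}_{-i}) \geq x_i^{A}(\vec{v})$ at a price no larger than $p_{A}(\vec{v})$, the inequality $u_i(\vec{s}) \geq u_i(v_i,\vec{s}_{-i})$ yields $v_i\bigl(x_i^{A}(\vec{v}) - x_i^{A}(\vec{s})\bigr) \leq p_{A}(\vec{v})\, x_i^{A}(\vec{v}) \leq B_i$; without invoking the equilibrium condition this inequality has no reason to hold.

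Likewise, the claim that ``revenue at $b$ can differ from revenue at $v$ only on the bundles of the $\gamma_A$ partial buyers'' is incorrect: the price drop from $p_{A}(\vec{v})$ to $p_{A}(\vec{s})$ changes \emph{every} buyer's payment, not just the partial buyers'. The paper handles this by removing the partial buyers to form a reduced market, showing $p_{A}(\vec{s})$ is an envy-free price there, and using a per-buyer floor argument ($\alpha_i \leq 2\lfloor\alpha_i\rfloor$ for buyers that afford at least one unit) to show the lower price still extracts at least half the revenue of $p_{A}(\vec{v})$ on that reduced market --- this is the actual source of the factor of two; the $\gamma_A\cdot\alpha$ term only accounts for the cost of the removal. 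Finally, the localization lemma you correctly flag as the crux (that underbidding cannot send a buyer with positive truth-telling allocation to zero) genuinely needs both halves of ``monotone'' together with no-overbidding: when the deviation to $v_i$ leaves the buyer semi-hungry at the unchanged price $p_{A}(\vec{v})$, no-overbidding ensures the set of competing interested buyers has weakly shrunk, and supply-monotonicity then forces the buyer to keep at least $x_i^{A}(\vec{v})$ units. Deferring this lemma as an ``obstacle'' rather than proving it leaves the central step of the theorem unaddressed.
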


The notion of ``partial'' allocations refers to buyers that are indifferent between multiple bundles.\footnote{In this model, the buyers that are indifferent among multiple bundles are precisely those whose valuation exactly matches the price, and their demand set contains all the bundle sizes that they can afford at that price. Note the buyers whose valuation is strictly higher than the price want the maximum number of units they can afford, while the buyers whose valuation is strictly lower than the price want zero units, so for these two latter types the allocations are never ``partial''.} A partial allocation for a buyer is a bundle that is neither the largest in its demand nor the smallest. In principle if a mechanism gives to many buyers bundles that are neither the largest nor the smallest desired by that buyer, the revenue and welfare of that mechanism can vanish in the equilibrium. We identify natural mechanisms, such as with greedy allocation rules (which take the indifferent buyers in some order and allocate each of them fully before moving to the next one), for which the number of buyers receiving partial allocations is at most $1$.

\smallskip
Finally, we study the \textsc{All-or-Nothing} mechanism, which selects the minimum (Walrasian) envy-free price and allocates the semi-hungry buyers either all the units they can afford or none. 
\begin{theorem} \label{thm:aon_awesome2}
	For $n=2$ buyers, the best-response dynamic starting from \emph{any profile} of the \textsc{All-Or-Nothing} mechanism converges to a pure Nash equilibrium. 
\end{theorem}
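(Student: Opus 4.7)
The plan is to exploit the limited combinatorics of the 2-buyer case. Since the two players alternate, any best-response cycle contains a minimal 4-state pattern $(b_1,b_2) \to (b_1',b_2) \to (b_1',b_2') \to (b_1,b_2') \to (b_1,b_2)$, and it suffices to rule such patterns out using properties specific to the minimum envy-free price and the all-or-nothing tie-breaking rule.

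First, I would characterize the best-response correspondence. For each fixed opponent bid $b_{-i}$, the minimum Walrasian envy-free price that \textsc{All-or-Nothing} selects is a piecewise constant function of $b_i$ taking only a handful of critical values, determined by $b_{-i}$, the budgets, and the supply $m$, and the all-or-nothing rule then fixes the allocation at each price. Consequently each buyer's best response lies in a small finite set of canonical bids: one that drives the price down to a budget or valuation threshold of the opponent, one that keeps the current price while converting the buyer from semi-hungry to fully allocated, and the truthful bid.

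Next, I would classify states by the role configuration of the two buyers (hungry, semi-hungry, or unsatisfied at the current price) and by which buyer receives the larger bundle. There are only constantly many such configurations, and within each the price and allocation are monotone functions of the two bids. I would then argue that a best-response step can change the role configuration only a bounded number of times along the dynamic, because a buyer that has been fully allocated at an all-or-nothing price has no incentive to revert to zero. Within a fixed configuration, I would exhibit a lexicographic potential such as (deviator's utility, negative price) that strictly improves at each non-terminal step, giving termination.

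The main obstacle is the last step: Theorem \ref{thm:maininformally} warns that cycles are a priori possible when starting from non-truthful profiles, so the argument must genuinely use both the minimum-price selection and the all-or-nothing tie-breaking, together with the restriction $n=2$. I expect the decisive technical step to be a direct check on the hypothetical 4-cycle above, verifying that among the four induced prices and allocations, the minimum envy-free selection plus the all-or-nothing rule forces at least one of the four deviations to be non-improving. This is also the step that fails when $n\geq 3$, consistent with the general convergence question remaining a conjecture in the paper.
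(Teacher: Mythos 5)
Your outline has the right flavor—classify states by the buyers' roles and use a termination/potential argument—but it contains a genuine gap that would need to be closed before it constitutes a proof. The central problem is the claim that ``any best-response cycle contains a minimal 4-state pattern $(b_1,b_2)\to(b_1',b_2)\to(b_1',b_2')\to(b_1,b_2')\to(b_1,b_2)$, and it suffices to rule such patterns out.'' This reduction is false in general: a best-response cycle for two alternating players can have arbitrary even length (e.g.\ $(a,b)\to(a',b)\to(a',b')\to(a'',b')\to(a'',b'')\to(a,b'')\to(a,b)$), and a longer cycle need not contain a length-4 subcycle, so ruling out the 4-state pattern alone does not rule out cycling. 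Similarly, the proposed lexicographic potential ``(deviator's utility, negative price)'' is not coherent across steps: the deviator alternates, and the price is not monotone along the dynamic (it can decrease in one step and increase in the next). You would need to specify a single global potential that is provably monotone, and it is not clear from the outline how to do so. Finally, the observation that ``a buyer that has been fully allocated at an all-or-nothing price has no incentive to revert to zero'' is only true if the buyer's utility is currently non-negative; a fully-allocated buyer whose true value is below the current price (i.e.\ one who has been tricked into paying too much) absolutely does want to deviate, and handling such ``wrongfully interested'' buyers is an essential part of the analysis.

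The paper's proof instead classifies each buyer by the \emph{sign of its current utility} (type $-$, $0$, $+$), proves structural lemmas restricting which transitions are possible (a type-$+$ buyer is already best-responding; a type-$-$ buyer can only become type $0$, and such a move weakly decreases the price; a type-$0$ buyer can only become type $+$, and such a move weakly increases the price), and then performs an exhaustive case analysis over all nine $2$-buyer type-profiles. The only genuinely nontrivial cases are the alternating ones, $(0,+)\leftrightarrow(+,0)$ and $(-,0)\leftrightarrow(0,-)$, and termination there follows because in each such sequence either the price moves monotonically (increasing, bounded by the budget sum) or the deviator's bid strictly decreases. Your role-configuration idea could be repaired into something like this, but as written it neither identifies the right invariant (utility sign, not just hungry/semi-hungry/unsatisfied) nor supplies the transition lemmas that make the case analysis finite.
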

\begin{figure}[h!]
	\centering
\begin{subfigure}{0.5\textwidth}
		\includegraphics[width=1\textwidth, trim={0 0 0 0},clip]{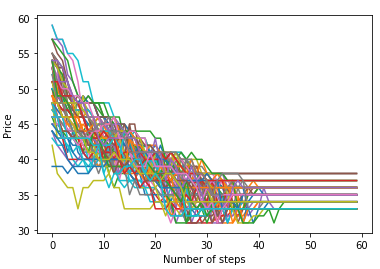}
	\end{subfigure}

		\caption{The convergence properties of the \textsc{All-Or-Nothing} mechanism when the dynamic starts from 100 different arbitrary profiles, for $n=25, m=20$. The budgets are drawn uniformly at random from 
		 $[1,125]$. 
		The differently coloured lines indicate different starting profiles $\mathbf{s}^{(0)}$.} \label{figfig}
\end{figure}

This mechanism was shown to be both truthful and to simultaneously approximate the optimal revenue and welfare revenue within constant factors (at the truthful state) in previous work \cite{BFMZ16}. 
Using our theorems above, since the mechanism is monotonic, we get the following performance guarantees for \textsc{All-or-Nothing}:

\begin{corollary} \label{cor:aon_awesome}
In every non-overbidding equilibrium of \textsc{All-Or-Nothing}, for any number of buyers,
\begin{itemize}
	\item [-]the (additive) loss in welfare is at most the maximum budget,
	\item [-]the revenue is approximated within a (multiplicative) factor of at least $\min \left\{0.25\right.$, $\left.(1-\alpha)/2\right\} - \alpha/2$ where  $\alpha$ is the budget share of the market.
\end{itemize}
In particular, in competitive markets (where $\alpha$ approaches $0$), mechanism \textsc{All-Or-Nothing} approximates the optimal revenue within a factor of $4$ in any non-overbidding equilibrium.
\end{corollary}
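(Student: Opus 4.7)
The plan is to instantiate Theorem~\ref{thm:all_equilibria_revsw} with $A = \textsc{All-Or-Nothing}$. This requires three inputs: monotonicity of $A$, which is asserted in the paragraph immediately preceding the corollary; a bound on the partial-allocation parameter $\gamma_A$; and a value for the truth-telling revenue approximation ratio $\beta$, which I would import from the prior analysis in \cite{BFMZ16}.

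The crux is bounding $\gamma_A$, and I would argue it directly from the definition of \textsc{All-Or-Nothing}. A buyer whose valuation is strictly above the posted price receives the largest affordable bundle (the top of its demand set); a buyer whose valuation is strictly below the price receives zero units (the bottom of its demand set); and a semi-hungry buyer, whose valuation equals the price, receives by fiat either all the units it can afford or none at all. Hence every buyer's allocation coincides with either the maximum or the minimum element of its demand set, so no buyer is allocated a strictly partial bundle and $\gamma_A \leq 1$ suffices to invoke the master theorem.

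For $\beta$, I would appeal to the truth-telling guarantee of \cite{BFMZ16}, namely that at the truthful profile \textsc{All-Or-Nothing} extracts revenue at least $\min\{1/2,\,1-\alpha\}$ times optimal, so one may take $\beta = \min\{1/2,\,1-\alpha\}$. Substituting $\gamma_A = 1$ and this choice of $\beta$ into the two bullets of Theorem~\ref{thm:all_equilibria_revsw} directly yields the additive welfare bound of $B^{*}$ (the maximum budget) together with the revenue ratio $(\beta - \alpha)/2 = \min\{1/4,\,(1-\alpha)/2\} - \alpha/2$ after a line of algebra. Sending $\alpha \to 0$ collapses this to $\min\{1/4,\,1/2\} = 1/4$, i.e.\ the advertised $4$-approximation. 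The only substantive step here is the $\gamma_A$ argument; everything else is bookkeeping, and the main risk in executing the plan is that the precise tie-breaking convention used in the semi-hungry allocation rule deviates slightly from what I have assumed, but any such discrepancy would merely shift constants rather than change the structure of the derivation.
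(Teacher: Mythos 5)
Your proof is correct and follows essentially the same route as the paper: instantiate the general welfare and revenue bounds for monotone mechanisms (Theorem~\ref{thm:all_equilibria_revsw}, formally Theorems~\ref{thm:welfare-extension} and~\ref{thm:revenue-extension}) with the \textsc{All-Or-Nothing} mechanism, using its monotonicity, $\gamma_A\le 1$, and the truth-telling revenue guarantee $\beta=\min\{1/2,\,1-\alpha\}$ from \cite{BFMZ16}. One small remark: your argument actually shows $\gamma_A=0$ (no semi-hungry buyer ever receives a strictly partial bundle under \textsc{All-Or-Nothing}), which would yield slightly stronger bounds than the corollary states; the paper, by contrast, only invokes the generic $\mathcal{S}$-greedy bound $\gamma_A\le 1$, and you deliberately fall back to the same weaker bound, so the final statements coincide.
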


\subsection{Related work}

The notion of the Walrasian equilibrium was formulated and studied systematically as early as the beginning of the 19th century, with the foundational works of Fisher \cite{BS00}, \cite{Walras74}, and \cite{AD54}. Its simple pricing scheme and it superior fairness properties, have lead to the employement of the Walrasian equilibrium as an auction mechanism for selling goods \cite{guruswami2005profit,babaioff2014efficiency}. The 2000 SMRA auction for 3G licences to UK providers (coined as ``The Biggest Auction Ever'' in \cite{binmore2002biggest}) as well as the Product-Mix auction \cite{klemperer2010product} run by the Bank of England are prime real-life examples of the employment of Walrasian pricing mechanisms.

Mechanisms for Walrasian pricing have been studied in the literature for various settings, with the focus being mainly on computational issues, i.e. designing polynomial time mechanisms to achieve good approximations for the revenue and the social welfare objectives \cite{feldman2012revenue,monaco2015revenue,cheung2008approximation,balcan2008item}. Crucially, most of these works do not take the incentives of buyers into account. Our results provide worst-case guarantees for all mechanisms for the setting of linear multi-unit auctions with budgets.

The multi-unit model is central in combinatorial auctions and has been studied in a large body of literature \cite{dobzinski2007mechanisms,dobzinski2012multi,feldman2012revenue,dobzinski2015multi,BCIMS05,bartal2003incentive,GMP13,DHP17}, in particular for the case of additive valuations \cite{dobzinski2012multi, feldman2012revenue}.
The literature on best-response dynamics on games is rich \cite{roughgarden2009intrinsic,chien2007convergence,awerbuch2008fast}, with questions raised about convergence and convergence time and the properties of the equilibrium which is reached as the results of the dynamic. A line of work has also considered best-response dynamics in auctions \cite{dutting2017best,nisan2011best,cary2008best,BBN17,CD11,CDEG+14}; this is relevant to this work as pricing mechanisms in our setting can be seen as simple and fair auction mechanisms for selling indivisible goods. 

Our work is also related to the literature on understanding dynamics of auctions under behavioral assumptions such as no-regret learning and equilibria of auctions (e.g. for Nash, correlated, and coarse-correlated equilibria) with corresponding price of anarchy bounds; see, e.g. the work of  \cite{RST17,DS16,MT12,HKMN11,bhawalkar2011welfare,lucier2010price}.
One could view our results in Section \ref{sec:nashdynamic} as \emph{dynamic Price of Anarchy} results \cite{BCMP13} and the results in Section \ref{sec:nashset} as \emph{undominated Price of Anarchy} results \cite{caragiannis2015bounding}. Our setting is also related to an ascending auction proposed by Ausubel \cite{Ausubel04}, in which the price keeps increasing while the buyers reduce their demands accordingly, with the feature that whenever the total demand of all except one of the bidders (say $i$) drops below the supply, the remaining items are allocated to bidder $i$ at the current price. Unlike Ausubel's auction, we study auctions in which the items (which are identical units) are allocated at the same price to everyone.

Contrary to some of the related work on dynamics (e.g. see \cite{Kesselheim16}), we do not need to make any assumptions about the order in which the buyers best-respond; our results are independent of this order. A summary on dynamics for deciding the allocation of public goods can be found in \cite{Laffont87}. Best response dynamics starting from the true profile have been studied before in voting (see, e.g. \cite{BCMP13,MPRJ10}). We emphasize here that while a truthful mechanism with good welfare and revenue guarantees exists for our setting  \cite{BFMZ16}, providing guarantees for very general classes of mechanisms as we do here is quite important, as the mechanisms that might actually be employed in reality might not be truthful. A prominent example of this phenomenon is the Generalized Second Price Auction, which is actually used in practice in favour of its famous truthful counter-part, the VCG mechanism (see \cite{AGT_book}) and in fact, understanding such non-truthful mechanisms theoretically has been a focus of the related literature \cite{caragiannis2015bounding,de2013inefficiency}


Finally, the question of understanding what can be achieved with item pricing is particularly important  given the recent emphasis on the theme of complexity versus simplicity in mechanism design \cite{hartline2009simple}, which is inspired by the computational perspective. Item pricing is a qualitative notion of simple pricing (as opposed to more complex pricing schemes, such as pricing different bundles arbitrarily), and is frequently used for selling goods in supermarkets. More formal definitions of complexity include communication and time, while recently notions of simplicity have been proposed specifically for economic environments, such as the menu-size of auctions \cite{HN12},
or simplicity in the sense that participants can easily reason about the mechanism's properties, studied using a framework of verifiable mechanisms \cite{BP15} and obvious strategy-proofness \cite{Li17}.

\section{Model and Preliminaries}\label{sec:model}

A linear \emph{multi-unit market} is composed of a set $N = \{1, \ldots, n\}$ of buyers and one seller that brings $m$ (indivisible) units of a good.
Each buyer $i$ has a budget\footnote{Monetary endowment of intrinsic value to all the buyers and the seller.} $B_i > 0$, and a valuation $v_i$ per unit, such that the value of buyer $i$ for $k$ units is $v_{i} \cdot k$.
The seller has no value for the units and its goal is to extract money from the buyers, while the buyers value both the money and the good and their goal is to maximize their utility by purchasing units of the good as long as the transaction is profitable.


The seller will set a price $p$ per unit, such that any buyer can purchase $k$ units at a price of $k \cdot p$.
The outcome of the market will be a pair $(\vec{x},p)$, where $p$ is the unit price and $\vec{x} = (x_1, \ldots, x_n) \in \mathbb{Z}^{n}_{+}$ is an \emph{allocation}, with the property that $x_i$ is the number of units received
by buyer $i$ and  $\sum_{i=1}^{n} x_i \leq m$.

\smallskip

The \emph{utility} of buyer $i$ for receiving $k$ units at price $p$ is:
\begin{equation*}
u_i(p, k)=\begin{cases}
v_{i} \cdot k - p \cdot k, & \text{if $p \cdot k \leq B_i$}\\
- \infty, & \text{otherwise}
\end{cases}
\end{equation*}
The \emph{demand} of a buyer $i$ at price $p$ is defined as the set of optimal bundles for the buyer given the price and its budget. Since the units are indistinguishable, the demand set will simply contain all the bundle sizes that the buyer would maximally prefer:
\begin{equation*}
D_i(p)=\begin{cases}
\left\{\min\left\{\lfloor \frac{B_i}{p} \rfloor,m\right\} \right\}, & \text{if $v_i > p$}\\[4pt]
\left\{0, 1, \ldots,\min\left\{\lfloor\frac{B_i}{p}\rfloor ,m\right\} \right\},  & \text{if $v_i = p$}\\[4pt]
\ \ \{0\}, & \text{if $v_i < p$}.
\end{cases}
\end{equation*}
Thus, if the valuation is higher than the price per unit, the buyer demands as many units as it can afford (up to exhausting all the units), while if the valuation is lower, the buyer wants zero units.

A buyer $i$ is said to be \emph{hungry} at a price $p$ if $v_i > p$, \emph{semi-hungry} if $v_i=p$, \emph{interested} if $v_i \geq p$, and \emph{uninterested} if $v_i < p$. 
\medskip

\noindent \textbf{Walrasian (Envy-Free) Pricing:} An allocation and price $(\vec{x},p)$ represent an \emph{envy-free pricing} \footnote{We note that more complex forms of envy-free pricing are possible, such as setting a different price per bundle, but our focus will be on the simplest type of unit pricing.} if the price per unit is $p$ and $x_i \in D_i(p)$ for all $i \in N$, i.e. each buyer gets a number of units in its demand set at $p$. A price $p$ is an \emph{envy-free price} if there exists an allocation $\vec{x}$ such that $(\vec{x},p)$ is an envy-free pricing. An allocation $\vec{x}$ \emph{can be supported} at price $p$ if there is an envy-free pricing $(\vec{x},p)$.

While an envy-free pricing can always be obtained (e.g. set $p = 1 + B_1 + \ldots + B_n$), it is not necessarily possible to sell \emph{all} the units in an envy-free way as can be seen from the next example.

\begin{example}[Non-existence of envy-free clearing prices]
	Consider a market with $m=3$ units and two buyers, $Alice$ and $Bob$, with valuations $v_{Alice} = v_{Bob} = 1.1$ and budgets $B_{Alice} = B_{Bob} = 1$. At any price $p > 0.5$, no more than $2$ units can be sold in total due to budget constraints. At $p \leq 0.5$, both Alice and Bob are interested and want at least $2$ units each, but there are only $3$ units in total.
\end{example}



\noindent \textbf{Mechanisms and Objectives:} We study envy-free pricing mechanisms and will be interested in the social welfare of the buyers and the revenue of the seller. An \emph{mechanism $\textrm{A}$ for envy-free pricing} receives as input a market $\mathcal{M} = (\vec{v,B},m)$ and outputs an envy-free pricing $(\vec{x},p)$. The valuations are \emph{private} and will be elicited by $A$ from the agents. The budgets and number of units are known. 

The \emph{social welfare} at an envy-free pricing $(\vec{x}, p)$ is the total value of the buyers for the goods allocated to them, while the \emph{revenue} is the amount of money received by the seller, respectively:
$$\mathcal{SW}(\vec{x}, p) = \sum_{i=1}^n v_{i}\cdot x_i, \ \ \ \ \ \ \mathcal{REV}(\vec{x}, p) = \sum_{i=1}^n x_i \cdot p.$$

\noindent\textbf{Incentives:} Buyers are rational agents who strategize when reporting their valuations, in order to gain better allocations at a lower price. The truthful valuation profile will be denoted by $\vec{v}= (v_1, \ldots, v_n)$. However, each buyer $i$ can report any number $v_i' \in \Re_{+}$ as a value, and this will be its \emph{strategy}.

Given a mechanism $A$ and market $\mathcal{M}= (\vec{v}, \vec{B}, m)$, a strategy profile $\vec{s} = (s_1, \ldots, s_n)$ is a \emph{pure Nash equilibrium} (or simply an equilibrium) if no buyer $i$ can find an alternative strategy $s_i'$ that would improve its utility under mechanism $A$, given that the strategies of the other buyers are fixed.\\

\noindent\textbf{Discrete Domain:} The valuations and the budgets come from a discrete domain (an infinite grid): $\mathbb{V} = \{k\cdot \epsilon \; | \; k \in \mathbb{N}\}$ for some $\epsilon > 0$. An mechanism will be allowed to choose the price from an output grid $\mathbb{W} = \{k\cdot \delta \; | \; k \in \mathbb{N}\}$, for some $\delta > 0$.  

If the input and output domains are continuous, the best-response may not be well-defined.

\begin{example}[Best-response dynamic]\label{ex:one}
	Consider a market with $m=2$ units and two buyers, Alice and Bob, with valuations $v_{Alice} = v_{Bob} = 2$ and budgets $B_{Alice} = B_{Bob} = 2$. Consider a mechanism $A$ that selects the second-highest valuation as the price and allocates the semi-hungry buyers by taking them in lexicographic order and giving each as many units as possible before moving to the next one. Then the price set by $A$ is $p = 2$, with allocation $x_{Alice} = 2$ and $x_{Bob} = 0$. 
	
	If Alice deviates to $s_{Alice}=1$, then Alice receives $0$ units, since Bob will demand $2$ units. At any report $s_{Alice} >1$, Alice receives $1$ unit for a strictly positive utility and therefore to minimize the price Alice would have to find the minimum number in the open set $(1,\infty)$. Finally, any report $s_{Alice} > 2$ would result in Alice having negative utility, while a report $s_{Alice} < 1$ would result in her getting no units for a utility of zero, and so no improvement compared to the truth telling outcome.
	
	On the other hand, on the discrete domain, Alice has a best response by deviating to reported valuation $1+\epsilon$, case in which the profile $(1+\epsilon,2)$ is a pure Nash equilibrium.
\end{example}

\section{Dynamics}\label{sec:nashdynamic}

In this section, we consider best-response dynamics and will focus on the truth-telling profile as a natural starting point of the dynamic. Given a market $\mathcal{M} = (\vec{v}, \vec{B}, m)$, we will denote by $\mathcal{H}_p(\vec{v})$, $\mathcal{S}_p(\vec{v})$, and $\mathcal{I}_p(\vec{v})$ the hungry, semi-hungry, and interested buyers at $p$, respectively.
For a strategy profile $\vec{s}$, $\vec{s_{-i}}$ will denote the strategy profile of all other buyers except the $i$'th one.
The omitted proofs from this section can be found in Appendix \ref{app:nashdynamic}.

\medskip

Our main theorem in this section is that the best response dynamic always converges to a pure Nash equilibrium when starting from the truth telling profile.

\begin{theorem}[Convergence]\label{thm:convergence}
	Let $\textrm{A}$ be any mechanism. Then the best response dynamic starting from the truth-telling profile converges to a pure Nash equilibrium of $\textrm{A}$. Compared to the truth-telling outcome, the Nash equilibrium reached has the property that
	\begin{itemize}
		\item the utility of each buyer is (weakly) higher, and 
		\item the number of units received by each buyer is (weakly) larger, possibly with the exception of the last deviator.
	\end{itemize} 
\end{theorem}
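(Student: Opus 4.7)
The plan is to proceed by induction on the step $t$ of the best-response dynamic, maintaining at every visited state $\vec{s}^{(t)}$ the invariants that (i) for every buyer $j$, the utility $u_j(A(\vec{s}^{(t)}))$ is weakly larger than the truth-telling utility $u_j(A(\vec{v}))$, and (ii) $x_j(A(\vec{s}^{(t)})) \geq x_j(A(\vec{v}))$ for every $j$ other than the most recent deviator. The base case $t=0$ is immediate, and together these invariants imply the two bullets of the theorem at any terminal state.

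The heart of the inductive step is a price-monotonicity lemma: when buyer $i$ best-responds at step $t$, the new price satisfies $p^{(t)} \leq p(A(\vec{v}))$. This follows because a strict improvement from $u_i^{(t-1)} \geq u_i(A(\vec{v}))$ forces $u_i^{(t)} > 0$, hence $v_i > p^{(t)}$ and $x_i^{(t)} \geq 1$. A case analysis on $i$'s truth-telling type (hungry, semi-hungry, uninterested), combined with the budget cap $x_i^{(t)} p^{(t)} \leq B_i$ and the monotonicity of $\min(\lfloor B_i/p\rfloor,m)$ in $p$, rules out $p^{(t)} > p(A(\vec{v}))$. Given the price bound, the allocation and utility invariants for $j \neq i$ follow from a case analysis on $j$'s truth-telling type combined with envy-freeness at $\vec{s}^{(t)}$: a buyer $j$ hungry at truth remains hungry by true value at the weakly lower price $p^{(t)}$ and inherits at least as many units via monotonicity of $\lfloor B_j/p\rfloor$; a semi-hungry $j$ either stays at the same price or becomes hungry at a strictly lower price; an uninterested $j$ has truth-telling allocation $0$, making the bound trivial.

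Convergence will then follow by combining the utility invariant with the discreteness of the input and output grids $\mathbb{V}$ and $\mathbb{W}$: each best response strictly increases the moving buyer's utility by at least one grid quantum, while the invariant $u_j^{(t)} \geq u_j(A(\vec{v}))$ pins each buyer's utility above a fixed floor, so the dynamic must halt in finitely many steps at a pure Nash equilibrium of $A$. The main obstacle I anticipate is handling buyers $j$ who have themselves deviated earlier in the dynamic, so that $s_j^{(t)} \neq v_j$ when $i$ moves at step $t$; the reported bid of such a $j$ governs the envy-free constraint on $x_j^{(t)}$, and careful bookkeeping of the reported types is needed to ensure that $x_j^{(t)} \geq x_j(A(\vec{v}))$ continues to hold after $i$'s move. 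The subtlest case is when the previous deviator's allocation dropped below truth-telling level at the preceding step and must recover under the new bid of $i$, which is precisely the situation where the ``last deviator'' escape clause is used and released as the deviator identity shifts from the previous one to $i$.
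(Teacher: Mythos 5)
The core gap is in your convergence argument. You claim that each best response strictly raises the deviator's utility by at least a grid quantum while the invariant $u_j^{(t)} \geq u_j(A(\vec{v}))$ ``pins each buyer's utility above a fixed floor, so the dynamic must halt.'' But a floor plus per-deviation strict improvement does not give termination: a buyer's utility can rise when they move and fall again when others move, so no monotone potential is implied, and cycling is not excluded by these two facts alone (Theorem~\ref{thm:bestresponsecycles} shows cycling does occur from non-truthful starts, so something specific to the truthful start is needed). Likewise, your price lemma only gives the single fixed ceiling $p^{(t)} \leq p_\textrm{A}(\vec{v})$, which permits the price to oscillate below that ceiling indefinitely. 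What actually drives convergence in the paper's proof is the strictly stronger per-step claim that $p^{(k-1)} > p^{(k)}$ at every step; since the price lives on the discrete, nonnegative output grid $\mathbb{W}$, this forces termination.

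To establish that strict per-step decrease, the paper carries a structural invariant absent from your proposal: at every step, all buyers who appear uninterested at the current price are reporting honestly, all who appear semi-hungry are honest except possibly the last deviator (whose true value is still at least the price), and all who appear hungry have true value strictly above the price. This is precisely what lets one argue, by a case analysis on the deviator's apparent type, that a deviation keeping the price constant or raising it cannot be a strict improvement, so every best response strictly lowers the price. It is also exactly what resolves the bookkeeping difficulty you flag at the end of your proposal --- handling buyers $j$ with $s_j^{(t)} \neq v_j$ --- because the invariant pins down which such buyers exist and how their reports relate to the current price. Your invariants (i) and (ii) are correct consequences of this structural invariant and do yield the theorem's two bullets at a terminal state, but on their own they are not strong enough to prove that a terminal state is ever reached.
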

\begin{proof}
	For any step $k$ in the best response process, let $\vec{s}^k = (s_1^k, \ldots, s_n^k)$ be the vector of valuations reported by all the buyers at $k$, and $p^k$ the price output by $\textrm{A}$ in this round. For $k=0$ we have $\vec{s}^0 = \vec{v}$ as the initial (true) valuations.
	We show by induction the following properties:
	\begin{enumerate}
		\item The price strictly decreases throughout the best response process, i.e. $p^{k-1} > p^{k}$ for every step $k \geq 0$.
		\item The buyers that appear hungry at $p^k$, i.e. $\mathcal{H}_{p^k}(\vec{s}^k)$, are also hungry at this price with respect to their true valuations, i.e. $v_i > p^k$ for all $i \in \mathcal{H}_{p^k}(\vec{s}^k)$. The buyers that appear semi-hungry at $p^k$, i.e. $\mathcal{S}_{p^k}(\vec{s}^k)$, have not changed their inputs, possibly with the exception of the last deviator, whose true valuation is greater or equal to $p^k$. The buyers that appear uninterested at $p^k$ are honest, i.e. $N\setminus \mathcal{I}_{p^k}(\vec{s}^k)$ is such that $s_i^k = v_i$ for all $i \in \mathcal{I}_{p^k}(\vec{s}^k)$.
	\end{enumerate}
	At $k=0$ property $1)$ holds trivially since there is no previous price (we can define $p^{-1} = \infty$), while $2)$ holds for the truth-telling valuations. Suppose properties $(1-2)$ hold for all steps up to $k-1$. We show they also hold at step $k$. Let $i$ be the buyer that deviates in step $k$ to some valuation $s_i^k$. Let $v_j^k = v_j^{k-1}$ for all buyers $j \neq i$ and $p^k$ the new price selected by $\textrm{A}$ on input $\vec{v}^k$. We consider a few cases depending on the deviating buyer.\\
	
	\noindent \emph{Case 1:} \emph{Buyer $i$ appears hungry in round $k-1$: $s_i^{k-1} > p^{k-1}$}. By the induction hypothesis, $i \in \mathcal{H}_{p^{k-1}}(\vec{s}^{k-1})$, and so $v_i > p^{k-1}$. Then buyer $i$ receives a maximal allocation in round $k-1$ given the price. If $i$ increases the price in round $k$, i.e. $p^k > p^{k-1}$, this can only result in buyer $i$ getting (weakly) fewer units at a higher price, which worsens $i$'s utility compared to round $k-1$. This cannot be a best response. Similarly, if $i$'s deviation resulted in the same price for round $k$, $p^k = p^{k-1}$, buyer $i$ would receive at most as many units at the same price, which is also not an improvement. Thus it must be the case that $p^k < p^{k-1}$. 
	
	Moreover, buyer $i$'s new valuation cannot be smaller than $p^k$ since that would result in $i$ getting no units in round $k$, which cannot be better than round $k-1$ where buyer $i$ was hungry with respect to its true value. Since $v_i > p^{k-1} > p^k$, we get $s_i^k \geq p^k$ and $v_i > p^k$.
	The set $\mathcal{H}_{p^k}(\vec{s}^k)$ of buyers that appear hungry in round $k$ contains, in addition to possibly buyer $i$, also
	\begin{itemize}
		\item the buyers in $\mathcal{H}_{p^{k-1}}(\vec{s}^{k-1})\setminus \{i\}$, which remain hungry with respect to their true valuations in round $k$ since the price decreased compared to round $k-1$ while their reports have not changed,
		\item the buyers with valuations $s_i^{k-1} \in (p^k, p^{k-1})$, which are honest in round $k-1$ and remain honest in round $k$, with true values strictly above $p^k$, since none of them deviated in between.
	\end{itemize}
	The set $\mathcal{S}_{p^k}(\vec{s}^k)$ of buyers that appear semi-hungry in round $k$ contains only buyers that were honest in round $k-1$, possibly with the exception of $i$. Thus all the buyers in $\mathcal{S}_{p^k}(\vec{s}^k) \setminus \{i\}$ have their true (and reported) valuations equal to $p^k$. If buyer $i$ deviates to $s_i^k = p^k$, then by previous arguments, $v_i > p^k$. Finally, the buyers $N\setminus \mathcal{I}_{p^k}(\vec{s}^k)$ that appear uninterested in round $k$ were honest in round $k-1$ and remain so in round $k$, since the deviator $i$ is not one of them. Thus properties \emph{(1-2)} hold in round $k$.
	\medskip
	
	\noindent \emph{Case 2:} \emph{Buyer $i$ appears semi-hungry in round $k-1$: $s_i^{k-1} = p^{k-1}$}. By the description of $\mathcal{S}_{p^{k-1}}(\vec{s}^{k-1})$, if buyer $i$ is not honest in round $k-1$, it must be the case that $i$ was the deviator in round $k-1$. If buyer $i$ manages to strictly increase its utility from $k-1$ to $k$, then $i$'s input in round $k-1$ was not a best response to round $k-2$, which is a contradiction. By the induction hypothesis, it follows that buyer $i$ is honest in round $k-1$. Then $i$ cannot wish to keep the price the same or increase it, since that would go above its true valuation, which is $v_i = s_i^{k-1}$. Thus $i$ can only decrease the price, so $p^k < p^{k-1}$ and $v_i > p^k$. Similarly, for $s_i^k$ to be an improvement, buyer $i$ should appear interested on the new instance, so $s_i^k \geq p^k$. 
	
	The set of buyers $\mathcal{H}_{p^k}(\vec{s}^k)$, which appear hungry in round $k$, is a superset of $\mathcal{H}_{p^{k-1}}(\vec{s}^{k-1})$ and contains all the buyers with valuations $s_i^{k-1}$ in $(p^k, p^{k-1})$, as well as $i$ in case $s_i^k > p^k$. The set of semi-hungry buyers in round $k$, $\mathcal{S}_{p^k}(\vec{s}^k)$, contains only buyers that were honest in round $k-1$, possibly with the exception of buyer $i$ in case $s_i^k = p^k$. Finally, the uninterested buyers in round $k$, $N \setminus \mathcal{I}_{p^k}(\vec{s}^k)$, form a subset of $N\setminus\mathcal{I}_{p^{k-1}}(\vec{s}^{k-1})$, were honest in round $k-1$, and have not changed their inputs in between. Thus the sets in round $k$ satisfy the required properties and the price decreased compared to round $k-1$.
	\medskip
	
	\noindent \emph{Case 3:} \emph{Buyer $i$ appears uninterested in round $k-1$: $s_i^{k-1} < p^{k-1}$}. Then $i \in N\setminus \mathcal{I}_{p^{k-1}}(\vec{s}^{k-1})$, which by the induction hypothesis only contains honest buyers. Then buyer $i$'s utility can only improve by decreasing the price and appearing hungry or semi-hungry in the next round, so $p^k < p^{k-1}$, $s_i^k \geq p^k$, and $v_i > p^k$. Similarly to the previous cases, $\mathcal{H}_{p^k}(\vec{s}^k)$ contains all of $\mathcal{H}_{p^{k+1}}(\vec{s}^{k+1})$ (which continue to be hungry in round $k$ with respect to both their true and reported valuations), the buyers with valuations $s_i^{k-1} \in (p^k, p^{k-1})$ (which were honest in round $k-1$ and have not changed in between), and buyer $i$ in case $s_i^k > p^k$ (recall $v_i > p^k$). The set $\mathcal{S}_{p^k}(\vec{s}^k)$ contains buyers which were honest in round $k-1$ and have the same reports in round $k$, possibly with the exception of buyer $i$ in case $s_i^k = p^k$. The set $N \setminus \mathcal{I}_{p^k}(\vec{s}^k)$ is a subset of $N \setminus \mathcal{I}_{p^{k-1}}(\vec{s}^{k-1})$, all of which were honest at $k-1$ and kept the same valuations at $k$.
	
	In all three cases, we obtain that properties $(1-2)$ are maintained in round $k$, which completes the proof by induction. Then the price strictly decreases in every iteration. Since the price output is chosen from a discrete grid of non-negative entries, the best response process either stops or reaches the smallest grid point available above zero. At this point the buyers cannot decrease the price further, which implies there are no more best responses. 
	
	The improvement properties follow from the description of the sets $\mathcal{H}_{p^k}(\vec{s}^k), \mathcal{S}_{p^k}(\vec{s}^k), N \setminus \mathcal{I}_{p^k}(\vec{s}^k)$ in the round $k$ where the best response process stops, which completes the argument.
\end{proof}

The loss in the allocation of the last deviator is sometimes inevitable. 

\begin{theorem}[Necessary allocation loss]\label{thm:necessaryloss}
	There exist markets where the best response dynamic of a mechanism converges to an allocation where some buyer loses units compared to the truth-telling outcome.
\end{theorem}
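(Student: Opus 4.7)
The plan is to exhibit an explicit market together with a mechanism for which best response from truth-telling provably terminates at a Nash equilibrium in which the last deviator holds strictly fewer units than at the truth-telling profile. The construction I would use is essentially the two-buyer, two-unit instance from Example~\ref{ex:one}: Alice and Bob with common valuation $v=2$ and common budget $B=2$, under the mechanism that posts the second-highest reported valuation as the price and breaks semi-hungry ties lexicographically (Alice first), all on the discretized grid of width~$\epsilon$.

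First I would compute the truth-telling outcome. Both buyers report $2$, so $p^{0}=2$ and both are semi-hungry; the lexicographic tie-breaking awards Alice all $2$ units and Bob none, giving Alice utility $0$. Next I would identify Alice's best response $s_{Alice}=1+\epsilon$: this drops the price to $1+\epsilon$, makes Bob strictly hungry with demand $\lfloor 2/(1+\epsilon)\rfloor=1$, and leaves Alice semi-hungry with $\{0,1\}\subseteq D_{Alice}(1+\epsilon)$, so she receives the one remaining unit for utility $1-\epsilon>0$. Reporting $1+k\epsilon$ for $k\ge 2$ leaves Alice with $1$ unit at a higher price, so $1+\epsilon$ is optimal among inputs that keep her interested, and any report strictly below $1$ makes Bob hungry for both units, collapsing Alice's utility to $0$.

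I would then verify that the profile $(1+\epsilon,\,2)$ is a pure Nash equilibrium. For Bob, any report $\ge 1+\epsilon$ keeps the price at $1+\epsilon$ and preserves his allocation of $1$ unit and utility $1-\epsilon$; any lower report makes his own bid the second-highest, turning Alice hungry and handing Alice both units, so Bob loses his unit. For Alice, the analysis of the previous paragraph rules out all profitable deviations. Hence best response starting from truth-telling terminates at this profile, which is precisely a state where the last deviator, Alice, receives strictly fewer units ($1$ versus $2$) than at the truth-telling outcome.

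The only mildly delicate step is the Nash check, because one must rule out every alternative report on the grid; but since $B_{Alice}=2$, Alice cannot afford two units at any price strictly greater than $1$, so once $p>1$ her allocation is at most one unit, and this observation together with the monotonicity of the second-price rule closes all cases. This yields the claimed example and completes the proof of Theorem~\ref{thm:necessaryloss}.
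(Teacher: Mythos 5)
There is a genuine gap: your truth-telling allocation is miscomputed, and once that is fixed the example no longer exhibits any loss of units. At $p = 2$ with $B_{\text{Alice}} = 2$, Alice's demand set is $\{0, 1, \ldots, \min\{\lfloor B_{\text{Alice}}/p\rfloor, m\}\} = \{0,1\}$, since $\lfloor 2/2 \rfloor = 1$; envy-freeness therefore forces $x_{\text{Alice}} \le 1$. (The line ``$x_{\text{Alice}}=2$'' in Example~\ref{ex:one} appears to be a typo.) The greedy tie-break gives Alice $1$ unit and Bob the remaining $1$ unit. After Alice's deviation to $1+\epsilon$, Bob becomes hungry with demand $\lfloor 2/(1+\epsilon)\rfloor = 1$ and Alice, semi-hungry, picks up the other unit---so she again holds exactly $1$ unit. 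Her utility rises from $0$ to $1-\epsilon$, which confirms the best response, but her allocation is unchanged, so this instance does not witness ``loses units compared to the truth-telling outcome.'' You cannot patch this by inflating the budgets either: if $B_{\text{Alice}}=B_{\text{Bob}}=4$ then Alice does get $2$ units at truth-telling, but any undercutting report by Alice makes Bob hungry for both units (since $\lfloor 4/p\rfloor \ge 2$ for $p \le 2$), leaving Alice with $0$ units and utility $0$, so there is no profitable deviation at all.

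The paper's own proof uses a structurally different construction precisely to sidestep this obstruction. It takes $n=3$ buyers, $m=3$ units, asymmetric budgets ($B_{\text{Alice}}=B_{\text{Bob}}=2.2$, $B_{\text{Carol}}=1$), valuations $(1.1, 1.1, 1)$, and the mechanism that posts the \emph{lowest} envy-free price among reported values with a fixed tie-break order $1,3,2$ for the semi-hungry. At truth-telling the price is $1.1$ and Alice affords and receives $\lfloor 2.2/1.1\rfloor = 2$ units. After Alice deviates to $1$, Bob is hungry and served first (getting $2$ units), Alice is semi-hungry and gets only the remaining $1$ unit, so she strictly gains utility ($0 \to 0.1$) while strictly dropping from $2$ units to $1$. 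The asymmetry is what lets Alice actually hold more than one unit pre-deviation while still having a profitable downward move; your symmetric two-buyer instance cannot produce both features simultaneously. You would need to rebuild the example along these lines (more buyers, a semi-hungry buyer who at truth-telling affords strictly more than one unit, and a third party who absorbs supply after the price drops) for the argument to go through.
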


In the worst case, convergence can be slow: the buyers can force a mechanism to traverse the entire input grid by slowly decreasing their reported valuations with each iteration.

\begin{theorem}[Lower bound on convergence time]
	There is a mechanism such that for arbitrary (fixed)  market parameters $\mathcal{M} = (\vec{v}, \vec{B}, m)$, the best response dynamic takes $\Omega(1/\epsilon)$ steps to converge, where the distance between consecutive entries in both input and output domain is at most $\epsilon > 0$ \footnote{E.g., the entries are at most $\epsilon$-apart when the input and output domain are discretized to have the form $\{k \cdot \epsilon \; | \; k \in \mathbb{N}\}$.}.
\end{theorem}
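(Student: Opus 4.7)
The plan is to exhibit a specific market together with a specific mechanism whose best-response dynamic, starting from the truth-telling profile, shrinks the price by exactly one grid step of size $\epsilon$ per iteration, so that traversing the whole relevant range of envy-free prices already takes $\Omega(1/\epsilon)$ rounds. I would fix $n=2$ buyers with $v_1 = v_2 = V$ and $B_1 = B_2 = V$, together with $m = 2$ units, where $V$ is a fixed grid point with $V/2$ also on the grid. The key arithmetic fact driving the construction is that $\lfloor V/p \rfloor = 1$ throughout $p \in (V/2,\, V]$ and jumps to $2$ only once $p \le V/2$; the interval $(V/2,V]$ contains $\Theta(V/\epsilon) = \Omega(1/\epsilon)$ grid points, so walking through it in unit-grid steps already yields the desired lower bound.

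The mechanism $A$ would be defined as follows. On input $(s_1,s_2)$ with $s_1 = s_2$, output $p = s_1$ with allocation $(1,1)$. When $s_1 \ne s_2$, first try $p = \max(s_1,s_2) - \epsilon$ with allocation $(1,1)$ and output this pair whenever it is envy-free at $p$; this holds precisely when the larger report stays at or above $p$ (so its holder is hungry with singleton demand $\{1\}$) and the smaller report equals $p$ (so its holder is semi-hungry with $1$ in its demand set). Otherwise fall back to any default envy-free rule: for concreteness, set $p = \max(s_1,s_2)$ and allocate both units to the top bidder, breaking ties lexicographically.

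Next I would trace the dynamic by induction on the round number $k$: after round $k$, the profile is $(V - k\epsilon,\, V - (k-1)\epsilon)$ up to a swap, the price is $V - k\epsilon$, and each buyer holds one unit. The deviator in round $k+1$ strictly prefers to drop its report to $V - (k+1)\epsilon$ for three reasons: reporting at least the current value leaves the price weakly unchanged; reporting exactly $V - (k+1)\epsilon$ preserves the $(1,1)$ allocation in an envy-free way (the budget choice $B = V$ guarantees $\lfloor V / (V - (k+1)\epsilon) \rfloor = 1$ as long as $V - (k+1)\epsilon > V/2$) and raises the deviator's utility by $\epsilon$; reporting strictly below $V - (k+1)\epsilon$ makes the deviator appear uninterested at the mechanism's proposed price $\max(s_1,s_2) - \epsilon$, so the $(1,1)$ attempt is no longer envy-free and the fallback rule fires, handing both units to the other buyer and zeroing the deviator's allocation. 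Hence the price decreases by exactly $\epsilon$ per round, and the dynamic runs for $\Omega((V - V/2)/\epsilon) = \Omega(1/\epsilon)$ iterations before the single-unit regime ends and $(1,1)$ ceases to be envy-free, at which point no further improvement is available and an equilibrium is reached.

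The main obstacle is the case analysis for the third bullet: verifying across every intermediate profile of the dynamic that any downward deviation larger than a single grid step both triggers the fallback and strictly hurts the deviator, while smaller deviations (or upward ones) give weakly smaller utility. This is delicate because it depends on the exact specification of the default rule and on the budget parameter being tuned so that the regime $\lfloor B/p \rfloor = 1$ persists throughout the entire interval $(V/2,V]$; additional care is required near the boundary $p \approx V/2$ to confirm that the process terminates cleanly and at the equilibrium there.
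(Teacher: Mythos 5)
Your proposal is correct in approach and uses essentially the same underlying idea as the paper's proof: design a mechanism that, on the chosen instance, forces each best response to shave the price by exactly one grid step of size $\epsilon$, so that traversing a fixed interval of envy-free prices takes $\Omega(1/\epsilon)$ rounds. The paper's instantiation is the \textsc{Almost-Top} mechanism, which posts the highest report and then drops by one grid step if that remains envy-free, run on an instance with $n$ buyers, $m=4$ units, and two buyers with valuations $1+\epsilon$ and $1$; yours uses $n=2$ buyers, $m=2$ units, equal budgets/valuations $V$, and a custom ``try allocation $(1,1)$ at $\max(s_1,s_2)-\epsilon$ first'' rule. Both rely on the same arithmetic driver, namely that a floor function (the buyers' unit demand) stays flat over an interval of $\Theta(1/\epsilon)$ grid points while a buyer best-responds by undercutting by one step at a time. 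One small difference worth noting: the paper additionally verifies that \textsc{Almost-Top} is price-monotone (perhaps for cosmetic reasons, since the theorem statement does not require it), while your mechanism is likely not monotone---e.g.\ at profile $(V-\epsilon,\,V-2\epsilon)$ your rule outputs price $V-2\epsilon$, but lowering $s_2$ further to $V-3\epsilon$ triggers the fallback and raises the price to $V-\epsilon$. That is harmless for the lower bound itself. The gaps you flag are real but routine: the default/fallback branch must respect budgets (``allocate both units to the top bidder'' can overshoot the top bidder's demand and should be ``allocate as many as it demands''), and the $s_1=s_2$ branch with allocation $(1,1)$ is not envy-free when $s_1=s_2>V$ and needs a guard. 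Patching these is straightforward and does not affect the traced dynamic, so the plan is sound and parallels the paper's argument.
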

\begin{proof}
	Let $\epsilon > 0$. Consider the following mechanism, defined on an input grid with step size $\epsilon$: 
	
	\begin{box1}{\emph{\textbf{\textsc{Almost -- Top Mechanism:}}}}
		\emph{Input}: market with valuations $v_1 \ldots v_n$ and budgets $B_1 \ldots B_n$. The valuations are drawn from an $\epsilon$-grid, i.e. $v_i = k_i \cdot \epsilon$, where $k_i \in \mathbb{N}$ for all $i$.
		\begin{itemize}
			\item Set the price to $p = v_{i_1}$, where $v_{i_1} \geq \ldots v_{i_n}$.
			\item If $v_{i_1} - \epsilon$ is an envy-free price, adjust the price to $p = v_{i_1} - \epsilon$. 
			\item Satisfy the demands of the hungry buyers at $p$, then allocate to the semi-hungry buyers using greedy tie-breaking.
		\end{itemize}
	\end{box1}
	
	\medskip
	We claim the \textsc{Almost-top} mechanism forces the buyers to traverse the entire grid in the worst case, which will give the required lower bound.
	First note the mechanism uses greedy tie-breaking for the semi-hungry buyers. Consider any valuations $\vec{v} = (v_1, \ldots, v_n) \leq (v_1', \ldots, v_n') =  \vec{v}'$, where the prices selected by the mechanism are $p$ and $p'$, respectively. W.l.o.g., $v_{i_1} \geq \ldots \geq v_{i_n}$ and $v_{j_1}' \geq \ldots \geq v_{j_n}'$. If $v_{j_1}' > v_{i_1}$, since each value is a multiple of $\epsilon$, we get $p' \geq v_{j_1}' - \epsilon \geq v_{i_1} \geq p$. Otherwise, $v_{j_1}' = v_{i_1}$. Then $p,p' \in \{v_{i_1}, v_{i_1} - \epsilon\}$. Assume by contradiction that $p > p'$, i.e. $p = v_{i_1}$ and $p' = v_{j_1}' - \epsilon$. Since $p'$ is envy-free on $\vec{v}'$ and $\vec{v} \leq \vec{v}'$, $p'$ is also envy-free on $\vec{v}$. Moreover, $p' = v_{i_1} - \epsilon$, so $\textsc{Almost-Top}$ should have chosen $p'$ on $\vec{v}$. This is a contradiction, so $p \leq p'$. Thus the $\textsc{Almost-Top}$ mechanism is monotonic.
	
	Given $0 < \epsilon < 1$, 
	consider an instance with $n$ buyers, $m = 4$ units, valuations $v_1 = 1 + \epsilon$, $v_2 = 1$, $v_3 = \ldots = v_n = \epsilon$, budgets $B_1 = B_2 =1$ and $B_3 = \ldots = B_n = \epsilon$. On this input, \textsc{Almost-top} sets the price to $p_1 = 1$, allocating one unit to each buyer. Then buyer $1$ has a best response at $v_1' = 1 - 1/N$, since on input $(v_1', v_2)$, the mechanism sets the price to $p' = 1 - 1/N$, giving at least as many units to each buyer for a lower price. Note there is no valuation between $v_1'$ and $v_1$, while any valuation lower than $v_1'$ would still result in a price of $p'$; moreover. We argue that buyer $2$ then has a best response at $v_2'' = 1 - 2/N$, since on input $(1-1/N, 1 - 2/N)$ the mechanism would set the price to $p'' = 1 - 2/N$ and give both buyers at least as many units at a lower price. Again buyer $2$ cannot move the price lower or get more units with any lower valuation at this point. Iteratively, at every step $k$, one of the buyers has a best response at the valuation $1 - k/N$. This process stops exactly at inputs $(1/N, 2/N)$, which will result in a price of $1/N$, giving each buyer exactly $2N$ units. The price cannot drop any further since the only valuation available that is lower than $1/N$ is zero, which would not change the price from $1/N$. This completes the argument.
\end{proof}

However, many natural mechanisms, such as ones that maximize or approximate welfare and revenue \cite{BFMZ16,feldman2012revenue}) converge faster, which we capture under the umbrella of consistent mechanisms.

\begin{definition}
	An mechanism $A$ is \emph{consistent} if the following holds.
	Let $\mathcal{M} = (\vec{v}, \vec{B}, m)$ be any market on which $A$ outputs a price $p$ and an allocation $\vec{x}$. Then in every market $\mathcal{M}' = (\vec{v}', \vec{B}, m)$ for which the sets of hungry and semi-hungry buyers are the same at $p$, $A$ must output $(p,\vec{x})$ as well.	
\end{definition}

For such pricing mechanisms each buyer will only best respond at most once.

\begin{theorem}[Convergence time of consistent mechanisms]\label{dyn:conv}
	Let $\textrm{A}$ be any consistent mechanism. The best response dynamic starting from the truth-telling profile converges to a pure Nash equilibrium of $\textrm{A}$ in at most $n$ steps.
\end{theorem}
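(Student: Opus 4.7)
\medskip

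\noindent\textbf{Proof Plan.} The goal is to show that each buyer best-responds at most once during the dynamic, so the total number of steps is bounded by $n$. Fix a buyer $i$ who best-responds at some step $j$ with new bid $s_i^j$, producing price $p^j$ and allocation $x_i^j$. I will show that at any subsequent step $\ell>j$, buyer $i$ has no strictly profitable deviation.

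First I would extract the relevant structural information from Theorem \ref{thm:convergence}. Property (1) gives $p^{\ell-1} < p^j$, and since $i$'s deviation at step $j$ was a strict improvement, $s_i^j \geq p^j$ (otherwise $i$ would receive $0$ units at step $j$, and one checks this case separately using the fact that $i$ was then honest before step $j$). Combining, $s_i^{\ell-1}=s_i^j \geq p^j > p^{\ell-1}$, so $i$ appears hungry at $p^{\ell-1}$ and receives the singleton-demand bundle $x_i^{\ell-1} = \min\{\lfloor B_i/p^{\ell-1}\rfloor, m\}$, which is the maximum $i$ can afford at that price. Since also $v_i > p^j > p^{\ell-1}$ by property (2), $i$'s per-unit utility is strictly positive.

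Next I argue why $i$ cannot strictly improve. Any profitable deviation $s_i^\ell$ must yield a strictly lower price $p^\ell < p^{\ell-1}$: at weakly higher prices $i$ cannot get strictly more than the maximum affordable bundle, and at the same price $i$ already has it. Suppose for contradiction that such an improving deviation exists, with outcome $(p^\ell, x_i^\ell)$. Consider the \emph{hypothetical} profile $(\vec{s}_{-i}^{j-1}, s_i^\ell)$ that would have arisen had $i$ played $s_i^\ell$ at step $j$ instead. The key move is to use consistency to show that on this profile the mechanism also outputs $(p^\ell, x_i^\ell)$; this would contradict $s_i^j$ being a best response at step $j$, because it would mean $s_i^\ell$ was strictly better against $\vec{s}_{-i}^{j-1}$ than $s_i^j$ (indeed, the resulting utility $(v_i-p^\ell)x_i^\ell$ would strictly exceed $u_i^j$, as $u_i^{\ell-1} \geq u_i^j$ follows from comparing $(v_i-p^{\ell-1})x_i^{\ell-1}$ against $(v_i-p^j)x_i^j$).

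To invoke consistency, I need the hungry and semi-hungry sets at price $p^\ell$ to coincide in the two profiles $(\vec{s}_{-i}^{j-1}, s_i^\ell)$ and $(\vec{s}_{-i}^{\ell-1}, s_i^\ell)$. Here I would use property (2) of Theorem \ref{thm:convergence} applied at step $\ell-1$ and at step $j-1$: at both of these steps the hungry buyers are truly hungry, the semi-hungry ones (except the most recent deviator) are honest with true value equal to the current price, and the uninterested ones are honest. A careful case analysis on each buyer $k \neq i$ whose bid differs between $\vec{s}_{-i}^{j-1}$ and $\vec{s}_{-i}^{\ell-1}$ (such a buyer must have deviated once in $(j,\ell-1]$ and must have been semi-hungry at the moment of deviation) shows that the status of every buyer relative to the specific price $p^\ell$ is unchanged across the two profiles. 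Given this, consistency of $A$ forces the same output $(p^\ell, x_i^\ell)$ on both, yielding the desired contradiction.

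The main obstacle is precisely the set-equality verification in the last paragraph: one must trace how the intermediate deviators' bids are constrained by Theorem \ref{thm:convergence} and verify that none of their changes can toggle the hungry/semi-hungry membership at the particular price $p^\ell < p^{\ell-1}$.
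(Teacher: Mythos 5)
Your proposal matches the paper's proof in structure and in every essential step: assume a buyer $i$ best-responds twice; use the price-decrease and honesty invariants from Theorem~\ref{thm:convergence} to obtain a chain of utility inequalities showing that $i$'s later bid would have been strictly better at the time of its first deviation; form the hypothetical profile $(\vec{s}_{-i}^{j-1}, s_i^\ell)$; invoke consistency to transport the later outcome to that hypothetical profile; and conclude a contradiction with $s_i^j$ having been a best response. The paper's Appendix-\ref{app:nashdynamic} proof is the same argument, just stated with explicit $v_i', v_i''$ notation. You are also right that the only nontrivial part of the consistency step is verifying that the hungry/semi-hungry sets at the target price $p^\ell$ coincide across the two profiles --- the paper simply asserts this as ``implied by the properties of the best-response sequence,'' whereas you flag it as needing a case analysis, which is fair.

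One factual slip worth correcting: you parenthetically claim that any intermediate deviator $k$ ``must have been semi-hungry at the moment of deviation.'' That is not so; the case analysis in the proof of Theorem~\ref{thm:convergence} explicitly allows hungry, semi-hungry, \emph{and} uninterested buyers to deviate (all of them can profit by lowering the price; in particular an uninterested buyer with true value below the current price can deviate so as to push the price below its true value). The set-equality still holds in all three cases --- for any intermediate deviator $k$, both its pre-deviation and post-deviation reports exceed $q = p^\ell$ (using $v_k > p^t > q$ for some intermediate deviation step $t$, together with the honesty/invariants of Theorem~\ref{thm:convergence}) --- so the conclusion is unaffected, but the ``must have been semi-hungry'' restriction should be dropped.
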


Finally, the best response dynamic can cycle when starting from non-truthful profiles.

\begin{theorem}[Best-response cycles]\label{thm:bestresponsecycles}
	There exist mechanisms for which the best response dynamic can cycle when starting from non-truthful profiles.
\end{theorem}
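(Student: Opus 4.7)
The plan is to establish this by explicit construction: exhibit a specific envy-free pricing mechanism $A$, a market $\mathcal{M}$, and a non-truthful starting profile $\vec{s}^{(0)}$ such that iterated best responses revisit $\vec{s}^{(0)}$. Conceptually, the contrast with Theorem \ref{thm:convergence} is illuminating: the convergence proof crucially used the invariant that every buyer appearing hungry at $p^k$ is truthful below $p^k$, which forced every best response to \emph{strictly decrease} the price. Starting from an arbitrary profile this invariant fails, and prices become free to move upward as well: a buyer who currently appears uninterested because it underreports can profit by raising its bid, while a buyer who currently wins a large bundle may later be undercut and want to move back. This bidirectional motion is precisely what opens the door to cycles.

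For the concrete construction I would keep the market as small as possible --- $n = 2$ or $n = 3$ buyers with a handful of units --- so that the relevant portion of the input grid is finite and enumerable by hand. I would pick a deterministic mechanism $A$ that always selects a fixed point of the envy-free interval (for instance the minimum envy-free price, which is well defined on the discrete output grid $\mathbb{W}$) together with a fixed tie-breaking rule among the semi-hungry buyers, e.g.\ a greedy lexicographic rule which fully satisfies one semi-hungry buyer before moving to the next. Starting from a carefully chosen $\vec{s}^{(0)}$, I would trace best responses step by step: buyer $i_1$ updates to $\vec{s}^{(1)}$, then $i_2$ updates to $\vec{s}^{(2)}$, and so on, and verify that some $\vec{s}^{(k)} = \vec{s}^{(0)}$ with $k \geq 1$. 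Since under a deterministic mechanism the best response of the scheduled deviator is uniquely determined by the current profile, revisiting a profile produces a genuine (rather than merely spurious) cycle.

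The main obstacle is parameter-hunting: one must calibrate the valuations, the budgets, and the tie-breaking rule so that (i) each scheduled deviation is \emph{strictly} utility-improving for the deviator (otherwise the buyer could simply stay put and the dynamic halts), and (ii) the resulting sequence of prices and allocations loops back to $\vec{s}^{(0)}$. A useful template is two buyers whose roles as ``the one who receives the units'' swap periodically: buyer $1$ underbids to push the price down, but under the tie-breaking rule this reroutes units to buyer $2$ at the new price; buyer $2$ then modifies its report to grab those units back, which makes buyer $1$'s original report profitable again. A robust way to engineer this is to exploit the tie-breaking among semi-hungry buyers, which need not be a monotone function of the reports: a small change in one bid can discontinuously shift entire bundles between buyers, breaking exactly the potential-style argument that forced strict price decrease in Theorem \ref{thm:convergence}. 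Once a concrete $(A, \mathcal{M}, \vec{s}^{(0)})$ is pinned down, the cycle can be verified by direct computation of each best response, yielding the theorem.
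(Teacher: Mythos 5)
Your proposal correctly identifies the overall strategy (exhibit a concrete mechanism, market, and non-truthful starting profile and trace a cycle of best responses) and the high-level reason the result is even possible (the invariant from Theorem~\ref{thm:convergence} -- that buyers who appear hungry are truthfully hungry -- fails off the truthful path, so best responses no longer monotonically drive the price down). But what you have written is a plan for a proof, not a proof: the concrete $(A, \mathcal{M}, \vec{s}^{(0)})$ is never produced, and you yourself flag ``parameter-hunting'' as the remaining obstacle. For an existence theorem of this type, the construction \emph{is} the proof; without it, nothing has been established.

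There is also a substantive concern with the specific direction you gesture at. You propose using the minimum envy-free price together with a greedy lexicographic tie-breaking rule among semi-hungry buyers, and building the cycle by having units swap between two buyers through the tie-breaking. But the paper proves (Theorem~\ref{thm:aon_converge}) that the \textsc{All-Or-Nothing} mechanism -- which also uses the minimum envy-free price, with a closely related allocation rule for semi-hungry buyers -- converges to an equilibrium from \emph{every} starting profile when $n=2$. So there is real reason to doubt that your suggested ``natural'' template produces a cycle at all; at minimum it needs justification. The paper sidesteps this by constructing a deliberately unnatural mechanism for $n=2$: on the fixed budget pair $B_{\text{Alice}}=B_{\text{Bob}}=1.5$, the mechanism hard-codes four prices at the four profiles $(0.1,0.3)\to 0.2$, $(3,0.3)\to 0.9$, $(3,2)\to 1.5$, $(0.1,2)\to 0.5$ (and a very high price elsewhere), chosen so that semi-hungry buyers never arise and so that the best-response sequence $(0.1,0.3)\to(3,0.3)\to(3,2)\to(0.1,2)\to(0.1,0.3)$ is a genuine 4-cycle, each step strictly improving for the deviator. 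In other words, the paper avoids leaning on tie-breaking among semi-hungry buyers -- the very lever you propose to pull -- precisely because with the minimum envy-free price that lever may not be strong enough to defeat convergence.
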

\begin{proof}
	Consider an instance with $2$ buyers, Alice and Bob, and $2$ units with true valuations $v_{Alice} = 1$ ,$ v_{Bob} = 2$ and budgets $B_{Alice}=B_{Bob}=1.5$. Consider the following envy-free pricing mechanism $\textrm{A}$ defined for two buyers:
	\begin{box1}{} 
		\emph{Input: market $(\vec{v},\vec{B},m)$}
		\begin{itemize}
			\item[-] If $B_{Alice} = B_{Bob} = 1.5$, then set the price as follows: $p_\textrm{A}(0.1,0.3) = 0.2, p_\textrm{A}(3,0.3) = 0.9, p_\textrm{A}(3,2)=1.5$, $p_\textrm{A}(0.1,2)=0.5$ and $p_\textrm{A}(v_{Alice},v_{Bob}) = \max\{v_{Alice},v_{Bob}\} + 100$, otherwise. \emph{(Note that by this pricing scheme semi-hungry buyers never exist.)}\medskip
			
			\item[-] Otherwise, set the price to $p_\textrm{A}(v_{Alice},v_{Bob}) = \max_{i \in \{Alice,Bob\}} v_i$. Allocate maximally to the buyer with highest valuation, while giving no units to the buyer with the second highest value. If there are ties, resolve them lexicographically.
		\end{itemize}
	\end{box1}
	Consider first the strategy profile $(0.1, 0.3)$ where the price is $0.2$. At this profile, Alice receives $0$ units whereas Bob receives two units (since the demand of a buyer at a price $p$ is $\max\{\lfloor B_i/p \rfloor, m\}$. Thus Alice has a utility of zero while Bob has strictly positive utility.
	
	The best response for Alice on this strategy profile is to respond by claiming that her value is $v_{Alice}' = 3$. On the new strategy profile $(3,0.3)$, the price will get updated to $0.9$. Then Alice will receive two units for a strictly positive utility, whereas Bob will now appear uninterested, receiving zero units and getting a utility of zero.
	
	Next, Bob can best respond by claiming that his value is $2$ (which is his true valuation). One the new strategy profile $(3,2)$ the price is $1.5$ and both Alice and Bob appear to be hungry. Note, however, that since their budgets are $B_{Alice}=B_{Bob}=1.5$ and the price is $1.5$, there is enough supply to accommodate both demands and each of them receives one unit. Bob's true valuation is $2 > 1.5$ and so his utility is positive; on the other hand, Alice's true valuation is $ 1 < 1.5$ and her utility at this price is negative.
	For this reason, Alice will best-respond by claiming that her value is $0.1$. On the new strategy profile $(0.1,2)$, the price will be set to $0.5$, which is below Alice's true valuation but above her reported valuation. At this price, Alice will receive zero items and pay nothing for a utility of zero, while Bob will receive two units and get a positive utility.
	
	Finally, Bob will best-respond by claiming that his value is $0.3$, resulting in strategy profile $(0.1,0.3)$, which coincides with the original profile, where the price is $0.2$. At this price, Bob has a higher utility than before, since it still appears to be hungry and still receives two units but at a lower price.
	Thus we reach the profile that we started from, which completes the cycle.
\end{proof}

One can see that the mechanism used in the proof of Theorem \ref{thm:bestresponsecycles} has a rather strange pricing rule. We would expect that mechanisms with more natural pricing and allocation rules could converge from any starting state. To this end, we provide an example of a known natural mechanism from the literature, the \textsc{All-Or-Nothing} mechanism \cite{BFMZ16}, which (among other desirable properties), enjoys such convergence guarantees, at least for the case of $n=2$ buyers. We postpone the discussion until Section \ref{sec:aon}.

\subsection{Social Welfare and Revenue}

In this section, we will evaluate the effect of the dynamic on the social welfare and revenue obtained by envy-free pricing mechanisms. We will show that the loss in the objectives is small if the market is sufficiently competitive. 
To measure the level of competition, we define the \emph{budget share} of a market $\mathcal{M}= (\vec{v}, \vec{B}, m)$ as 
$$\alpha(\mathcal{M}) = \max_{i \in N} \frac{B_i}{\mathcal{REV}(\mathcal{M})},
$$ where $\mathcal{REV}(\mathcal{M})$ is the maximum possible revenue for valuation profile $\vec{v}$. Intuitively, the budget share measures the fraction of the total revenue that a single buyer can be responsible for; this number is small when the market is competitive. Similar notions of market competitiveness have been studied before (see, e.g., \cite{dobzinski2012multi,cole2015does}).

We start by measuring the welfare loss due to the best-response dynamics.

\begin{theorem}[Welfare of the dynamic from truth-telling] \label{thm:welfare-guarantee}
	Let $\textrm{A}$ be any mechanism. Then the best response dynamic starting from the truth-telling profile converges to a pure Nash equilibrium of $\textrm{A}$, whose loss in welfare compared to the truth-telling profile is at most the maximum budget.
\end{theorem}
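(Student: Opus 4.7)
The plan is to combine the two byproducts of Theorem \ref{thm:convergence}: (i) the best response dynamic does converge (so there is a well-defined final profile), and the Nash equilibrium it reaches has the property that (ii) each buyer's allocation is weakly larger than at truth-telling, with the possible exception of the \emph{last deviator}, and (iii) each buyer's utility is weakly higher than at truth-telling.

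First I would decompose the welfare change. Writing $x_i^{0}, p^{0}$ for the truth-telling allocation and price and $x_i^{*}, p^{*}$ for those at the limit equilibrium,
\[
\mathcal{SW}(\text{TT}) - \mathcal{SW}(\text{NE}) \;=\; \sum_{i=1}^{n} v_i \bigl(x_i^{0} - x_i^{*}\bigr).
\]
By property (ii), for every buyer $i$ other than the last deviator $j$ we have $x_i^{*} \ge x_i^{0}$, so every term in the sum is $\le 0$ except possibly the $j$-th term. (If the dynamic never moves, then $\text{TT}$ is itself an equilibrium and the loss is zero, so we may assume $j$ exists.) Hence
\[
\mathcal{SW}(\text{TT}) - \mathcal{SW}(\text{NE}) \;\le\; v_j\bigl(x_j^{0} - x_j^{*}\bigr).
\]

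Second I would translate the utility-monotonicity property (iii) applied to the deviator $j$ into a bound on this residual term. The inequality $v_j x_j^{*} - p^{*} x_j^{*} \ge v_j x_j^{0} - p^{0} x_j^{0}$ rearranges to
\[
v_j\bigl(x_j^{0} - x_j^{*}\bigr) \;\le\; p^{0} x_j^{0} - p^{*} x_j^{*} \;\le\; p^{0} x_j^{0}.
\]
Finally, since at the truth-telling profile the payment $p^{0} x_j^{0}$ of any buyer is at most her budget (otherwise her utility would be $-\infty$ and she could profitably deviate to $0$), we obtain $p^{0} x_j^{0} \le B_j \le B^{*}$, where $B^{*} := \max_i B_i$. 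Chaining the inequalities gives $\mathcal{SW}(\text{TT}) - \mathcal{SW}(\text{NE}) \le B^{*}$, as desired.

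I do not anticipate a substantive obstacle: both the allocation-monotonicity and utility-monotonicity statements are handed to us by Theorem \ref{thm:convergence}, and the only non-mechanical step is the observation that the utility inequality for the last deviator converts its welfare loss into a bound by its truth-telling payment, which the budget feasibility constraint automatically caps at $B^{*}$.
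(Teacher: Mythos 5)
Your argument is correct and essentially identical to the paper's: both identify the last deviator as the only possible source of welfare loss via property (ii), then use the utility-monotonicity (iii) to rearrange $v_j(x_j^0 - x_j^*) \le p^0 x_j^0 - p^* x_j^* \le p^0 x_j^0 \le B_j$.
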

\begin{proof}
	Consider a market $\mathcal{M}= (\vec{v}, \vec{B}, m)$. For any valuation profile $\tilde{v}$, let $p_\textrm{A}(\vec{\tilde{v}})$ denote the price and $x_i^\textrm{A}(\vec{\tilde{v}})$ the allocation of buyer $i$ computed by $A$ given the market $\mathcal{M}' = (\vec{\tilde{v},B},m)$. Let $\vec{s}$ be the strategy profile in the Nash equilibrium reached. By Theorem \ref{thm:convergence}, there is only a single buyer $i \in N$ for which $x_i^\textrm{A}(\vec{s}) \leq x_i^\textrm{A}(\vec{v})$ and the welfare loss is only due this buyer. From Theorem \ref{thm:convergence}, we know that on profile $\vec{s}$, the utility of buyer $i$ is at least as high as its utility on the truth-telling profile $\vec{v}$. In other words, 
	$(v_i - p_\textrm{A}(\vec{s})) \cdot x_i^\textrm{A}(\vec{s}) \geq (v_i - p_\textrm{A}(\vec{v})) \cdot x_i^\textrm{A}(\vec{v}).$
	Therefore, we can bound the welfare loss as required by $v_i\cdot (x_i^\textrm{A}(\vec{v})-x_i^\textrm{A}(\vec{s})) \leq p_\textrm{A}(\vec{v}) \cdot x_i^\textrm{A}(\vec{v}) \leq B_i,$ where the last inequality follows from the definition of the budget.
\end{proof}

We also measure the revenue loss due to the dynamic. The heart of the proof is in the following argument: if we remove the last deviator from the market, we obtain a \emph{reduced market} with $n-1$ buyers, for which we can show the equilibrium price of the original market is an envy-free price. 

We will say that a mechanism $\textrm{A}$ is a $\beta$ approximation for the revenue objective if for every market $\mathcal{M} = (\vec{v},\vec{B},m)$, the revenue achieved by $\textrm{A}$ on $\mathcal{M}$ is at least $\beta$ times the maximum revenue achieved by \emph{any envy-free pricing} pair $(\vec{x},p)$ on $\mathcal{M}$.

\begin{theorem}[Revenue of the dynamic from truth-telling]\label{thm:revenue-guarantee}
	Let $\textrm{A}$ be a mechanism that approximates the optimal revenue within a factor of $\beta \in [0,1]$  for every market. Consider any market $\mathcal{M} = (\vec{v}, \vec{B}, m)$. Then in every Nash equilibrium of $\textrm{A}$ reached through a best-response dynamic from truth-telling, the revenue is a $\left(\beta-\alpha\right)/2$ approximation of the optimal revenue for $\mathcal{M}$, where $\alpha$ is the budget share of the market.
\end{theorem}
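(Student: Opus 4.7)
Following the hint in the statement, my plan is to remove the last deviator $i$ from the market and exploit the envy-freeness of the equilibrium price on the resulting reduced market. Let $s$ be the Nash equilibrium profile reached by the dynamic and let $(p^*, X)$ be the output of $A$ at $s$. By property~(2) in the proof of Theorem~\ref{thm:convergence}, all buyers $j \neq i$ are truthful at $s$, i.e., $s_j = v_j$.

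The first key step is the structural claim that $(X_{-i}, p^*)$ is an envy-free pricing on the reduced market $M' = (\vec{v}_{-i}, \vec{B}_{-i}, m)$. This follows because for each $j \neq i$ the demand set $D_j(p^*)$ under the true value $v_j$ coincides with the one under the report $s_j = v_j$, while envy-freeness of $(X, p^*)$ on the reported profile $s$ gives $X_j \in D_j(p^*)$ for every $j$.

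Next I would compare the optimal envy-free revenues on $M$ and $M'$. Given an optimal envy-free outcome $(p^\dagger, x^\dagger)$ on $M$, discarding buyer $i$ and their allocation yields an envy-free outcome on $M'$ (each $j \neq i$ still receives an element of $D_j(p^\dagger)$), losing at most $p^\dagger x^\dagger_i \leq B_i$ in revenue. Together with the budget share bound $B_i \leq \alpha \cdot \mathrm{OPT}_M$, this gives $\mathrm{OPT}_{M'} \geq (1 - \alpha)\,\mathrm{OPT}_M$. Applying the $\beta$-approximation of $A$ on $M'$ then yields $\mathrm{REV}_A(\vec{v}_{-i}, M') \geq \beta \cdot \mathrm{OPT}_{M'} \geq \beta(1-\alpha)\,\mathrm{OPT}_M$.

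The main obstacle is the final step: bounding $\mathrm{REV}(s, M)$ from below while extracting the factor of $1/2$. The target inequality reduces to $2 \cdot \mathrm{REV}(s, M) \geq \beta \cdot \mathrm{OPT}_{M'}$, which together with the above and the elementary fact $\beta(1-\alpha) \geq \beta - \alpha$ (valid since $\beta \leq 1$) would give the claimed bound $\mathrm{REV}(s, M) \geq \tfrac{\beta - \alpha}{2} \cdot \mathrm{OPT}_M$. To establish the target inequality I would decompose $\mathrm{REV}(s, M) = p^* X_i + p^* \sum_{j \neq i} X_j$; the second summand is upper bounded by $\mathrm{OPT}_{M'}$ from envy-freeness on $M'$, so a complementary lower bound on $\mathrm{REV}(s, M)$ must be obtained by comparing $A$'s output on $(s, M)$ with its output on $(\vec{v}_{-i}, M')$, most plausibly via a case analysis on how $p^*$ relates to $A$'s truth-telling price on $M'$ or via an averaging argument between the contribution of buyer $i$ (bounded by $B_i$) and that of the remaining buyers. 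The $1/2$ factor is the delicate part, since the equilibrium price $p^*$ can be driven well below the truth-telling price on $M'$ and the allocations $A$ assigns to semi-hungry buyers may differ substantially between the two markets, so one can lose up to half of the potential revenue captured by $A$ on $M'$.
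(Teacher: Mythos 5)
Your setup is on the right track: removing the last deviator $\ell$, observing that the equilibrium price is envy-free on the reduced market, and relating the reduced market's optimal revenue to $\mathrm{OPT}_M$ via the budget share. But you have left the crux of the theorem, the factor of $1/2$, as an admitted gap, and the vague plan you sketch for it (a case analysis or averaging argument on $p^*$ versus a truth-telling price on $M'$) is not how the factor actually arises. Moreover, your detour of invoking the $\beta$-approximation of $A$ \emph{on the reduced market} $M'$ is not useful: the mechanism is never run on $M'$, so its performance there says nothing about the revenue at the equilibrium profile of the original market. The paper instead invokes the $\beta$-approximation of $A$ on $M$ itself, yielding $\mathcal{REV}(\vec{v},\vec{B},p_A(\vec{v})) \geq \mathcal{REV}_A(\vec{v},\vec{B}) \geq \beta\cdot\mathrm{OPT}_M$, and then compares revenues at the two \emph{concrete prices} $p_A(\vec{s})$ and $p_A(\vec{v})$ on the reduced market, rather than passing through $\mathrm{OPT}_{M'}$.

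The missing $1/2$ factor is a per-buyer floor-rounding argument, not a consequence of how the price drops. Lower bound the equilibrium revenue by the minimum revenue achievable at price $p_A(\vec{s})$ on the reduced market, which (since the only semi-hungry agent in $\mathcal{I}_{p_A}(\vec{v})$ is $\ell$, who has been removed) counts only buyers who are strictly hungry at $p_A(\vec{s})$ on their true values. Let $L$ be the set of buyers $j\neq\ell$ with $v_j\geq p_A(\vec{v})$ who can afford at least one unit at $p_A(\vec{v})$. For each $j\in L$, writing $\alpha_j = B_j/p_A(\vec{s})$ and $\alpha_j^*=B_j/p_A(\vec{v})$, each such buyer contributes $\lfloor\alpha_j\rfloor\cdot p_A(\vec{s})$ to the minimum equilibrium revenue and at most $\lfloor\alpha_j^*\rfloor\cdot p_A(\vec{v})\leq B_j$ to the max revenue at $p_A(\vec{v})$ on $M'$. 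Since $\lfloor\alpha_j\rfloor\geq 1$ gives $\alpha_j\leq 2\lfloor\alpha_j\rfloor$, we get
\[
\frac{\sum_{j\in L}\lfloor\alpha_j\rfloor\cdot p_A(\vec{s})}{\sum_{j\in L}\lfloor\alpha_j^*\rfloor\cdot p_A(\vec{v})} \;\geq\; \frac{\sum_{j\in L}\lfloor\alpha_j\rfloor}{\sum_{j\in L}\alpha_j} \;\geq\; \frac{1}{2},
\]
which is where the half comes from. Combining with $\mathcal{REV}(\vec{v}_{-\ell},\vec{B}_{-\ell},p_A(\vec{v}))\geq\mathcal{REV}(\vec{v},\vec{B},p_A(\vec{v}))-B_\ell\geq\beta\,\mathrm{OPT}_M-B_\ell$ and $B_\ell\leq\alpha\,\mathrm{OPT}_M$ closes the proof. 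You should replace the speculative final paragraph with this argument; without it the proof is incomplete.
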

\begin{proof}
	Given the market $\mathcal{M} = (\vec{v}, \vec{B}, m)$, for any strategy profile $\vec{\tilde{v}}$ we define the following notation. Let $p_\textrm{A}(\vec{\tilde{v}})$ denote the price and $x_i^\textrm{A}(\vec{\tilde{v}})$ the allocation to buyer $i$ computed by mechanism $A$ on input $\mathcal{\tilde{M}}= (\vec{\tilde{v},B},m)$. Additionally, let
	$\mathcal{REV}_{A}(\vec{\tilde{v}},\vec{B})$ denote the revenue extracted by $A$ on $\mathcal{\tilde{M}}$.
	For any envy-free price $p$ of $\mathcal{\tilde{M}}$, let $\mathcal{REV}(\vec{\tilde{v}}, \vec{B},p)$ denote the maximum possible revenue at the price $p$. This is achieved by serving fully all the hungry buyers at $p$ and allocating as many units as possible to the semi-hungry buyers. Finally, let $\mathcal{REV}_0(\vec{\tilde{v}}, \vec{B},p)$ denote the least possible revenue at price $p$, which is attained by allocating zero units to each semi-hungry buyer, and $\mathcal{REV}(\vec{\tilde{v}}, \vec{B})$ denote the maximum possible revenue from the market $\mathcal{\tilde{M}}$. 
	For ease of notation, we will write $\mathcal{H}_{p_A}(\vec{v})$ to denote the set $\mathcal{H}_{p_A(\vec{v})}(\vec{v})$ of hungry buyers at the price $p_A(\vec{v})$ computed by $A$ on input $\vec{v}$:  (and similarly for the sets $\mathcal{S}_{p_A(\vec{v})}(\vec{v})$ and $\mathcal{I}_{p_A(\vec{v})}(\vec{v})$).
	
	\medskip
	
	Let $\textrm{A}$ be any mechanism that approximates the optimal revenue within a factor of $\beta$. We consider two cases for the price $p_\textrm{A}(\vec{s})$:
	
	\medskip
	
	\noindent	\emph{Case 1}. $p_\textrm{A}(\vec{s}) = p_\textrm{A}(\vec{v})$. A deviating buyer can find a strictly improving deviation if and only if it can strictly lower the price. Since the price reached is equal to the price output on the true valuations, it follows that no deviation has taken place.  Then $\vec{v}$ is the Nash equilibrium to which the dynamic converges and there is no loss in revenue.\\
	
	\noindent \emph{Case 2}. $p_\textrm{A}(\vec{s}) < p_\textrm{A}(\vec{v})$. The proof of Theorem \ref{thm:convergence} implies that 
	in the Nash equilibrium reached from the truth-telling profile, each buyer receives at least as many units as they did on the true input $\vec{v}$, except possibly the last deviator. Let $\ell$ be the last deviator in the best response path from $\vec{v}$ to $\vec{s}$.
	
	Consider the \emph{reduced market} $\mathcal{M}' = (\vec{v}_{-\ell}, \vec{B}_{-\ell},m)$, which is obtained from the market $\mathcal{M}$ by removing buyer $\ell$. Let $p_{min}^{-\ell}$ be the minimum envy-free price in $\mathcal{M}'$. Our goal is to lower bound the revenue attained by $\textrm{A}$ in the Nash equilibrium\footnote{Note that the revenue objective is not a function of the real values and therefore it can be measured by simply the reports and the corresponding prices.}, for which it will be sufficient to provide a lower bound on $\mathcal{REV}_0(\vec{s}, \vec{B})$ since $$\mathcal{REV}_{\textrm{A}}(\vec{s}, \vec{B},p_\textrm{A}(\vec{s})) \geq \mathcal{REV}_0(\vec{s}, \vec{B}, p_\textrm{A}(\vec{s})).$$
	
	\noindent First, we claim that $p_\textrm{A}(\vec{s}) \geq p_{min}^{-\ell}$, or equivalently, that $p_\textrm{A}(\vec{s})$ is an envy-free price for the market $\mathcal{M}'$. This follows from the following fact established by Theorem \ref{thm:convergence}:
	\begin{quote}
		\emph{During the best response process, the price always decreases and at any point in time, the only buyer that appears semi-hungry from the set $\mathcal{I}_{p_\textrm{A}}(\vec{v})$ (i.e. the set of interested buyers at $\vec{v}$), is the last deviator}.
	\end{quote}
	In our case, the last deviator is buyer $\ell$. Then for each buyer $i \in \mathcal{I}_{p_\textrm{A}}(\vec{v})  \setminus \{\ell\}$, we have that $s_i > p_\textrm{A}(\vec{s})$, while for buyer $\ell$ we have $s_{\ell} \geq p_\textrm{A}(\vec{s})$. 
	This implies that $$\mathcal{REV}_{0}(\vec{s}, \vec{B},p_\textrm{A}(\vec{s})) =\mathcal{REV}_{0}(\vec{v}_{-\ell},\vec{B}_{-\ell},p_\textrm{A}(\vec{s})).$$
	
	\noindent Thus, it suffices to bound the minimum revenue attainable at the price $p_\textrm{A}(\vec{s})$, that is, $$\mathcal{REV}_{0}(\vec{v}_{-\ell}, \vec{B}_{-\ell}, p_\textrm{A}(\vec{s})).$$
	
	\noindent Next, note that since $p_\textrm{A}(\vec{s}) \in [p_{min}^{-\ell},p_\textrm{A}(\vec{v}))$, we can obtain that $$\mathcal{REV}_{0}(\vec{v}_{-\ell},\vec{B}_{-\ell},p_\textrm{A}(\vec{s})) \geq (1/2)\mathcal{REV}(\vec{v}_{-\ell},\vec{B}_{-\ell},p_\textrm{A}(\vec{v})).$$ For ease of exposition, let $\alpha_i = B_i/p_\textrm{A}(\vec{s})$ and $\alpha_i^* = B_i/p_\textrm{A}(\vec{v})$, $\forall i \in N$. Denote by $L$ the set of buyers with valuations at least $p_\textrm{A}(\vec{v})$ in the market $\mathcal{M'}=(\vec{v}_{-\ell}, \vec{B}_{-\ell},m)$ (i.e. buyers $j$ with $v_j \geq p_\textrm{A}(\vec{v})$) that can afford at least one unit at $p_\textrm{A}(\vec{v})$; note that the set of buyers that get allocated any items at $p_\textrm{A}(\vec{s})$ is a superset of $L$. Additionally, since $p_\textrm{A}(\vec{v}) > p_\textrm{A}(\vec{s})$, the set $L$ does not contain any buyers that are semi-hungry at $p_\textrm{A}(\vec{s})$ on $\mathcal{M}'$. Moreover, the revenue at $p_\textrm{A}(\vec{v})$ is bounded by the revenue attained at the (possibly infeasible) allocation where all the buyers in $L$ get the maximum number of units in their demand.
	These observations give the next inequalities:
	$$\mathcal{REV}_{0}(\vec{v}_{-\ell},\vec{B}_{-\ell},p_\textrm{A}(\vec{s}))  \geq \sum_{i \in L} \left \lfloor \alpha_i \right \rfloor \cdot p_\textrm{A}(\vec{s})\ \  \text{ and } \ \ 
	\mathcal{REV}(\vec{v}_{-\ell},\vec{B}_{-\ell},p_\textrm{A}(\vec{v})) \leq \sum_{i \in L} \left \lfloor \alpha_i^* \right \rfloor \cdot p_\textrm{A}(\vec{v}).$$
	Then the revenue loss, denoted by $$r=\frac{\mathcal{REV}_{0}(\vec{v}_{-\ell},\vec{B}_{-\ell},p_\textrm{A}(\vec{s}))}{\mathcal{REV}(\vec{v}_{-\ell},\vec{B}_{-\ell},p_\textrm{A}(\vec{v}))},$$
	can be bounded as follows:
	\begin{small}
	\begin{eqnarray*}
		r&\geq&\frac{\sum_{i \in L} \left \lfloor \alpha_i \right \rfloor \cdot p_\textrm{A}(\vec{s})}{\sum_{i \in L} \left \lfloor \alpha_i^* \right \rfloor \cdot p_\textrm{A}(\vec{v})}
		\geq  
		\frac{\sum_{i \in L} \left \lfloor \alpha_i  \right \rfloor  \cdot p_\textrm{A}(\vec{s})}{\sum_{i \in L} \alpha_i^* \cdot p_\textrm{A}(\vec{v})}
		= \frac{\sum_{i \in L} \left \lfloor \alpha_i \right \rfloor  \cdot p_\textrm{A}(\vec{s})}{\sum_{i \in L} B_i}
		= 
		\frac{\sum_{i \in L} \left \lfloor \alpha_i \right \rfloor }{\sum_{i \in L} \alpha_i } 
		\geq 
		\frac{\sum_{i \in L} \left \lfloor \alpha_i \right \rfloor}{\sum_{i \in L} 2 \left \lfloor \alpha_i \right \rfloor} 
		= \frac{1}{2},
	\end{eqnarray*}
\end{small}
	where we used the fact that the for any buyer $i \in L$, $\left \lfloor \alpha_i \right \rfloor \geq 1$, and so 
	$\alpha_i  \leq \left \lfloor \alpha_i \right \rfloor + 1 \leq 2 \left \lfloor \alpha_i \right \rfloor$, since by construction, all buyers in $L$ can afford at least one unit
	at $p_\textrm{A}(\vec{v})$ and therefore at $p_\textrm{A}(\vec{s})$. Note that in case the set $L$ is empty, then Mechanism $\textrm{A}$ extracts zero revenue at the truth-telling profile and the theorem follows trivially.
	Finally, observe that $\mathcal{REV}(\vec{v}_{-\ell},\vec{B}_{-\ell},p_\textrm{A}(\vec{v})) \geq \mathcal{REV}(\vec{v},\vec{B},p_\textrm{A}(\vec{v}))-B_{\ell}.$ This inequality holds because $p_\textrm{A}(\vec{v})$ is an envy-free price on input $\mathcal{M'}=(\vec{v}_{-\ell}, \vec{B}_{-\ell},m)$ and outputting this price results in a loss of revenue of at most the budget of the removed buyer. Since Mechanism $\textrm{A}$ outputs price $p_\textrm{A}(\vec{v})$ on $(\vec{v},\vec{B})$ and since it is a $\beta$-approximation mechanism for the revenue objective, it also holds that $\mathcal{REV}(\vec{v}, \vec{B},p_\textrm{A}(\vec{v})) \geq \mathcal{REV}_{\textrm{A}}(\vec{v}, \vec{B})\geq \beta \cdot \mathcal{REV}(\vec{v}, \vec{B}).$
	
	Tying everything together we have:
	\begin{eqnarray*}
		\mathcal{REV}_{\textrm{A}}(\vec{s}, \vec{B}) &\geq& \mathcal{REV}_{0}(\vec{s}, \vec{B},p_\textrm{A}(\vec{s})) =\mathcal{REV}_{0}(\vec{v}_{-\ell},\vec{B}_{-\ell},p_\textrm{A}(\vec{s})) 
		 \geq  \frac{1}{2} \cdot \mathcal{REV}(\vec{v}_{-\ell},\vec{B}_{-\ell},p_\textrm{A}(\vec{v})) \\ &\geq& \frac{\beta}{2}\cdot \mathcal{REV}(\vec{v}, \vec{B}) - \frac{B_{\ell}}{2} 
		 \geq  \frac{1}{2}\left(\beta-\alpha\right)\mathcal{REV}(\vec{v}, \vec{B}), 
	\end{eqnarray*}
	where the last inequality follows from the budget share definition.
	This completes the proof.
\end{proof}
Note that as $\beta$ approaches $1$ (that is, $\textrm{A}$ approaches a revenue-optimal mechanism) and $\alpha$ approaches $0$ (that is the market becomes fully competitive), the revenue attained in the equilibrium is at least half of the optimum. 

As can be seen from the next construction, the budget share is necessary. If there is one buyer with a budget share of $100\%$, then the revenue obtained as a result of this dynamic can be arbitrarily worse compared to the truth-telling profile (even for a revenue maximizing mechanism).

\begin{theorem}\label{badrevenuemonop}
	There is a mechanism (even a revenue maximizing one) for Walrasian envy-free pricing such that for all $\epsilon > 0$ \footnote{For the lower bound we fix the step size to 1 on the input and output grids.}, the best response dynamic starting from the truthful profile converges to a Nash equilibrium where the revenue is $\Omega(1/\epsilon)$ times worse than the optimum on some market.
\end{theorem}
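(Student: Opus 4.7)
The plan is to construct, for each $\epsilon > 0$, a two-buyer market on which a revenue-maximizing envy-free pricing mechanism admits a single damaging best response from truth-telling. Fix $N = \lceil 1/\epsilon \rceil$ and set $m = 2$ units, buyer $1$ with $(v_1, B_1) = (N, N)$, and buyer $2$ with $(v_2, B_2) = (1, 1)$. Let $\textrm{A}$ be the mechanism that, on every market, outputs an envy-free pricing of maximum revenue, breaking ties in favor of higher prices and, among allocations at the chosen price, lexicographically in favor of buyer $1$.

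First I would compute the revenue at the truth-telling profile. At $p = N$, buyer $1$ is semi-hungry with demand $\{0,1\}$ and buyer $2$ is uninterested, yielding allocation $(1,0)$ and revenue $N$. At prices $1 < p < N$, buyer $2$ remains uninterested and buyer $1$ can afford at most $\lfloor N/p\rfloor$ units, so the revenue is bounded by $\min(m,\lfloor N/p\rfloor)\cdot p\le N$. At $p=1$, buyer $1$ is hungry with demand $\{2\}$, forcing any envy-free allocation to give her both units for revenue $2$. Thus the optimal envy-free revenue is exactly $N$, achieved by $\textrm{A}$ on input $\vec{v}$, and buyer $1$'s utility at truth is $(v_1-p)\cdot 1 = 0$.

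Next I would verify that buyer $1$ has a strict best response $v_1' = 1$ that collapses the price. On the new profile $(1,1)$ the revenue-maximizing envy-free pricing is $p = 1$ with allocation $(2,0)$, giving buyer $1$ utility $2(N-1) > 0$. A short case analysis rules out every other deviation on the grid: for $v_1' \ge 2$, the revenue maximizer sets some $p\ge 2$ (since allocating $2$ units to buyer $1$ at a price equal to her report dominates the revenue of $2$ obtainable at $p=1$), leaving buyer $1$'s per-unit utility strictly smaller; for $v_1' = 0$, buyer $1$ appears uninterested and receives no units. I would then check that $(1,1)$ is a Nash equilibrium by observing that buyer $1$ cannot further reduce the price below the minimum grid value $1$ without losing envy-freeness, and that buyer $2$ always has utility zero because her per-unit value equals her per-unit budget. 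A parallel check at truth shows that buyer $2$ also has no improving deviation there, so the best response dynamic indeed takes the single step from $\vec{v}$ to $(1,1)$.

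The main obstacle is the bookkeeping in the last two steps: at each candidate deviating profile I must track which buyers are hungry, semi-hungry, and uninterested, and then verify that the revenue-maximizing mechanism (with its stated tie-breaking) selects the price and allocation I claim. The arithmetic is elementary, but must be done carefully because the identity of the hungry and semi-hungry sets shifts as the reports vary. Putting it together, $\textrm{A}$ extracts revenue $N$ at truth and revenue $2$ at the equilibrium reached, so the ratio is $N/2 = \Omega(1/\epsilon)$ as required.
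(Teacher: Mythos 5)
Your construction is correct and follows the same strategy as the paper's proof: a two-buyer market in which the revenue-maximizing mechanism charges the high price $N=\lceil 1/\epsilon\rceil$ at truth, giving the semi-hungry buyer~1 zero utility, and a single best response by buyer~1 down to the low bidder's value collapses the price and destroys all but a constant amount of revenue. The paper's instance uses $m=1$ unit with budgets $B_1=B_2=\lceil 1/\epsilon\rceil$ and a mechanism that, among revenue-maximizers, minimizes the number of allocated buyers and then takes the highest price, which makes the calculation slightly shorter (revenue $\lceil 1/\epsilon\rceil$ versus $1$); your version uses $m=2$ and different budgets, obtaining ratio $N/2$ rather than $N$, which is equally good up to constants. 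One thing worth making explicit in the write-up: your case analysis for deviations $v_1'\ge 2$ should note that the bound on buyer~1's utility relies on the revenue-maximizing mechanism choosing a price $p\ge 2$ whenever $v_1'\ge 2$, because the revenue at $p=\min(v_1',\lfloor N/2\rfloor)\ge 2$ already exceeds $2$ (assuming $N\ge 2$, i.e.\ $\epsilon<1$, which is where the bound is non-trivial anyway); you gesture at this but the argument deserves a sentence. Otherwise the proof is sound.
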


\section{Equilibrium Set}\label{sec:nashset}

In this section, we consider all the equilibria of mechanisms, not only those reachable by best-response dynamics starting from the truth-telling profile. These states are fixed points of the dynamic. We will see that for a wide class of natural mechanisms, most of the guarantees carry over to all ``reasonable'' (i.e. non-overbidding) equilibria. 
``Overbidding" refers to the behavior in which a buyer $i$, with true value $v_i$, declares a bid (value) higher than its true value per unit. Any overbidding strategy is weakly dominated by truth-telling and can be ruled out using arguments about uncertainty, risk-aversion or trembling-hand considerations\cite{lucier2010price,caragiannis2015bounding}. For this reason, overbidding has been ruled out as ``unnatural'' in the literature (e.g. see \cite{feldman2013simultaneous}) and the study of no-overbidding equilibria is common (see e.g. the incentive properties of the second-price auction) \cite{christodoulou2008bayesian,lucier2010price,caragiannis2015bounding,bhawalkar2011welfare}.\footnote{For a more detailed discussion on the no-overbidding assumption and why it is natural, the reader is referred to \cite{lucier2010price}.} 
%
The omitted proofs of this section are in Appendix \ref{app:nashset}.
\subsection{Properties}

Given a market, a mechanism must determine two things: the price and the allocation to the semi-hungry buyers. The hungry buyers require a fixed number of units and the uninterested buyers no units. In the absence of incentives, the choice of how to allocate the semi-hungry buyers is not very important; any way of allocating the same number of units to them will give the same welfare and revenue. Interestingly however, we will show that the equilibrium behaviour of mechanisms is significantly affected by this choice and some choices result in better welfare and revenue guarantees. This is something that a mechanism designer could find useful when deciding the allocation rule for these buyers.

For the choice of prices, we will study mechanisms that follow a simple monotonicity property, that we will refer to as \emph{price-monotonicity}:
\begin{definition}[Price-monotone]
	An mechanism $\textrm{A}$ is \emph{price-monotone} if for any two valuation profiles $\mathbf{v}$ and $\vec{v}'$ such that $v'_i \leq v_i$ for all buyers $i \in N$ (with the other market parameters fixed), it holds that $p_{\textrm{A}}(\vec{v}') \leq p_{\textrm{A}}(\mathbf{v})$.
\end{definition}

We have the following definitions regarding the manner in which a mechanism allocates units to the semi-hungry buyers. 

\begin{definition}[Non-wasteful]
	An mechanism $\textrm{A}$ is \emph{non-wasteful} if it always allocates as many units as possible to the semi-hungry buyers.\footnote{Equivalently, the mechanism gives each semi-hungry buyer the maximum element in its demand set or sells all the units.}
\end{definition}

\begin{definition}[$\mathcal{S}$-Greedy]
	An mechanism $\textrm{A}$ is \emph{$\mathcal{S}$-Greedy} if for any market, after allocating the hungry buyers, it fixes an ordering of the semi-hungry buyers and allocates as many units as possible to each buyer $i$ in the ordering until exhausting $i$'s budget or running out of units. 
\end{definition}

Note that the ordering selected by an $\mathcal{S}$-greedy mechanism can be different for different input profiles, even if the sets of semi-hungry buyers in the two inputs are the same.

\begin{definition}[Supply-monotone]
	An mechanism $A$ is said to be \emph{supply-monotone} if keeping the valuation of a buyer and the price unchanged, while increasing the supply, results in that buyer receiving (weakly) more units. 
\end{definition}
\noindent In other words, for any budgets $\vec{B}$ and valuation profiles $\vec{v} = (v_i, \vec{v}_{-i})$ and $\vec{v}' = (v_i, \vec{v}_{-i}')$, if $A$ computes the same price $p$ on markets $\mathcal{M}= (\vec{v}, \vec{B},m)$ and 
$\mathcal{M}= (\vec{v}', \vec{B},m)$, and there are more units available for buyer $i$ in $v'$, then $x_i' \geq x_i$, where $\vec{x}, \vec{x}'$ are the allocations made by $A$ on $\mathcal{M}$ and $\mathcal{M}'$ respectively.

\noindent Supply-monotonicity is a property satisfied by many natural mechanisms. We have the implications: 
\[ \mathcal{S\textit{-greedy}} \Rightarrow \textit{Non-wasteful} \Rightarrow  \textit{Supply monotone}.\]
Therefore, proving a result for the class of all supply-monotone mechanisms immediately implies a result for all non-wasteful mechanisms (and hence all $\mathcal{S\text{-greedy}}$ mechanisms).
We will use the term \emph{monotone} for a mechanism that satisfies the monotonicity properties in the prices and the allocation to the semi-hungry buyers.

\begin{definition}[Monotone]
	A (Walrasian) envy-free pricing mechanism is \emph{monotone} if it's both price-monotone and supply-monotone.
\end{definition}

\subsection{Social Welfare and Revenue}

In this subsection, we state our main theorems that bound the welfare loss and the revenue approximation for any non-overbidding equilibrium of monotone mechanisms, which will then give us corollaries for the classes of mechanisms discussed in the previous subsection. We will actually prove quantified versions of the welfare and revenue guarantees, which are further parametrized by the number of semi-hungry buyers that receive partial allocations. 

More concretely, we will let $\mathcal{U}_{p_{\textrm{A}}}(\mathbf{v}) \subseteq \mathcal{S}_{p_{\textrm{A}}}(\mathbf{v})$ denote the set of semi-hungry buyers that receive partial allocations by Mechanism $\textrm{A}$ at price $p_{\textrm{A}}$, i.e. for each $i \in \mathcal{U}_{p_{\textrm{A}}}(\mathbf{v})$ it holds that $x_i^{\textrm{A}}(\vec{v}) \in (0,\min\{\floor{B_i/p},m\})$. Also, let $\gamma_A = \max_{\vec{v}}\mathcal{U}_{p_{\textrm{A}}}(\mathbf{v})$ be the maximum possible number of semi-hungry buyers with partial allocations over all possible inputs $\mathbf{v}$, where the known parameters ($n,m$ and $\vec{B}$) are fixed. Note that for an $\mathcal{S}$-Greedy mechanism,
it holds that $\gamma_A \leq 1$.

\subsubsection{Welfare guarantees}

First, we state the theorem that extends the welfare guarantees to the set of all non-overbidding equilibria.
\begin{theorem}[Welfare in any non-overbidding Nash equilibrium] \label{thm:welfare-extension}
	Let $\textrm{A}$ be a monotone mechanism. Then in any pure Nash equilibrium of $\textrm{A}$ where buyers do not overbid, the loss in social welfare (compared to the truth-telling outcome of $\textrm{A}$) is at most $\gamma_A \cdot B^*$, where $\gamma_A$ is the maximum number of semi-hungry buyers that receive partial allocations by $\mathrm{A}$ and $B^*$ the maximum budget.
\end{theorem}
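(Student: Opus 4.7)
Let $\vec{s}$ be any non-overbidding pure Nash equilibrium of $A$, and set $p = p_A(\vec{s})$, $p^* = p_A(\vec{v})$, $z_i = x_i^A(\vec{s})$, and $y_i = x_i^A(\vec{v})$. Non-overbidding ($s_i \leq v_i$ for all $i$) together with price-monotonicity yields $p \leq p^*$, so every buyer is weakly more able to afford units at the equilibrium price than at the truthful price. The welfare loss is
\[
L \;=\; \sum_{i \in N} v_i\,(y_i - z_i),
\]
and the plan is to argue that only buyers in $\mathcal{U}_{p_A}(\vec{s})$ contribute positively to this sum and that each such contribution is at most $B^*$.

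\textbf{Eliminating loss for most buyers.} I would partition the buyers by their role at $(\vec{s}, p)$. A hungry buyer at $\vec{s}$ satisfies $v_i \geq s_i > p$, so envy-freeness pins $z_i = \min\{\lfloor B_i/p\rfloor, m\}$; using $p \leq p^*$ and $y_i \leq \min\{\lfloor B_i/p^*\rfloor, m\}$ gives $z_i \geq y_i$. The same inequality holds for any semi-hungry buyer at $\vec{s}$ whose allocation already attains its maximal demand. A buyer with $v_i < p$ satisfies $v_i < p \leq p^*$ and is therefore uninterested at truth-telling as well, so $y_i = 0$. Hence the only candidates for $y_i > z_i$ are buyers who are either (i) semi-hungry at $\vec{s}$ with a strictly partial allocation, i.e.\ members of $\mathcal{U}_{p_A}(\vec{s})$, or (ii) semi-hungry with $z_i = 0$ or uninterested at $\vec{s}$ while $y_i > 0$.

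\textbf{Handling case (ii) via a Nash deviation.} For buyer $i$ in case (ii), consider the unilateral deviation $s_i' = v_i$. Since $\vec{s} \leq (\vec{s}_{-i}, v_i) \leq \vec{v}$ componentwise, price-monotonicity sandwiches the post-deviation price as $p \leq p^{**} := p_A(\vec{s}_{-i}, v_i) \leq p^*$. I would then show $x_i^A(\vec{s}_{-i}, v_i) \geq y_i$ in two sub-cases: when $p^{**} = p^*$, supply-monotonicity applied to the transition from $\vec{v}$ to $(\vec{s}_{-i}, v_i)$ (which only lowers other buyers' reports, freeing units for $i$ at the common price $p^*$) yields the bound directly; when $p^{**} < p^*$, buyer $i$ can afford $\lfloor B_i/p^{**}\rfloor \geq \lfloor B_i/p^*\rfloor \geq y_i$ units at the strictly lower price, and envy-freeness at the deviation outcome pins the allocation to this level. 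Combining with the Nash inequality $(v_i - p)\,z_i \geq (v_i - p^{**})\,x_i^A(\vec{s}_{-i}, v_i)$, either the deviation strictly improves $i$'s utility (contradicting equilibrium) or the residual gap $y_i - z_i$ can only be absorbed into one of the partial-allocation slots counted by $\gamma_A$, whose per-buyer contribution is bounded by $v_i y_i \leq p^* y_i \leq B_i \leq B^*$. Summing over at most $\gamma_A$ such buyers gives $L \leq \gamma_A \cdot B^*$.

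\textbf{Main obstacle.} The hard part is the last step: the price can strictly decrease from $\vec{v}$ to $\vec{s}$, so supply-monotonicity (a same-price statement) does not compose cleanly with the Nash deviation, and the two-sub-case analysis on $p^{**}$ versus $p^*$ must be carried out carefully to keep $x_i^A(\vec{s}_{-i}, v_i) \geq y_i$ in every branch. Matching the buyers that escape the Nash contradiction exactly to the set counted by $\gamma_A$, rather than to a larger set that would produce a weaker bound, is where the definition of $\mathcal{U}_{p_A}$ and the structure of envy-free pricing at two related prices interact most delicately.
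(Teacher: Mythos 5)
Your overall skeleton matches the paper's: price-monotonicity plus non-overbidding gives $p_A(\vec{s}) \le p_A(\vec{v})$, you try to localize the loss to the set $\mathcal{U}_{p_A}(\vec{s})$, and you bound each offending buyer's contribution by $B^*$. However, two steps in your proposal do not go through as written, and the paper resolves both with arguments you are missing.

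First, your handling of case (ii), the buyers with $y_i > 0$ and $z_i = 0$. Your Nash-deviation argument does yield a contradiction whenever $v_i > p^{**}$, which covers the sub-case $p^{**} < p^*$ and the sub-case where $i$ is hungry at $\vec{v}$. But the sub-case $v_i = p^* = p^{**}$ (buyer $i$ semi-hungry at truth and the deviation to $v_i$ does not move the price) yields deviation utility $(v_i - p^{**})\,x_i^A(\vec{s}_{-i},v_i) = 0$, which equals the equilibrium utility and so gives no contradiction. Your claim that the resulting ``residual gap\ldots can only be absorbed into one of the partial-allocation slots counted by $\gamma_A$'' has no justification: by definition $\mathcal{U}_{p_A}(\vec{s})$ contains only buyers with $z_i \in (0, \min\{\lfloor B_i/p \rfloor, m\})$, so a buyer with $z_i = 0$ is not in $\mathcal{U}_{p_A}(\vec{s})$ and is not counted by $\gamma_A$. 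The paper instead isolates this step into Lemma \ref{lem:equil-one}, which asserts outright that no such buyer exists for monotone mechanisms; in its Case 2 it invokes supply-monotonicity to get $x_i^A(v_i,\vec{s}_{-i}) > 0$ and calls the deviation ``beneficial,'' and you would need to reproduce or replace that argument rather than wave at $\gamma_A$.

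Second, and independently, your per-buyer bound for the $\mathcal{U}_{p_A}(\vec{s})$ buyers is written as $v_i y_i \le p^* y_i \le B_i$, which requires $v_i \le p^*$. A buyer in $\mathcal{U}_{p_A}(\vec{s})$ is semi-hungry at $\vec{s}$ (so $s_i = p$) but may well be hungry at $\vec{v}$ with $v_i > p^*$, in which case $v_i y_i$ can exceed $B_i$. The paper's bound is $v_i(y_i - z_i) \le p^*\,y_i - p\,z_i \le p^*\,y_i \le B_i$, which is obtained by running exactly the deviation-to-$v_i$ argument you reserve for case (ii) on the $\mathcal{U}$ buyers as well: from supply- and price-monotonicity one gets $x_i^A(v_i,\vec{s}_{-i}) \ge y_i$ and $p_A(v_i,\vec{s}_{-i}) \le p^*$, and then the Nash inequality
$v_i z_i - p\,z_i \ge v_i\,x_i^A(v_i,\vec{s}_{-i}) - p_A(v_i,\vec{s}_{-i})\,x_i^A(v_i,\vec{s}_{-i}) \ge v_i y_i - p^* y_i$
gives the bound. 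In short: you need the Nash deviation argument for case (i), not just for case (ii), and you need a genuine reason (the paper's Lemma \ref{lem:equil-one}) for case (ii) to vanish rather than hoping it folds into $\gamma_A$.
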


\noindent Since supply-monotonicity implies several other properties, Theorem \ref{thm:welfare-extension} has the following corollary.

\begin{corollary}\label{thm:we-greedy}
	Let $\textrm{A}$ be a price-monotone, mechanism. Then in any pure Nash equilibrium of $\textrm{A}$ where buyers do not overbid, the loss in social welfare (compared to the truth-telling outcome of $\textrm{A}$) is at most 
	\begin{itemize}
		\item[-] $\gamma_A \cdot B^*$, \space \space \space \space \space \space \space \space if Mechanism $\textrm{A}$ is non-wasteful, where $B^*$ is the maximum budget,
		\item[-] $max_{i \in N}B_i$, \space \space  if Mechanism $\textrm{A}$ is $\mathcal{S}$-Greedy.
	\end{itemize}
	where $\gamma_A$ is the maximum number of buyers that receive partial allocations by $\mathrm{A}$.
\end{corollary}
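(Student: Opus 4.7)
The plan is to derive both items of the corollary as immediate consequences of Theorem \ref{thm:welfare-extension}, by leveraging the implication chain stated right before the corollary: $\mathcal{S}\textit{-greedy} \Rightarrow \textit{Non-wasteful} \Rightarrow \textit{Supply-monotone}$. So the work is essentially bookkeeping: match each hypothesis to the premises of Theorem \ref{thm:welfare-extension} and then specialize the value of $\gamma_A$.

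For the first bullet, I would argue as follows. A non-wasteful mechanism is supply-monotone by the stated implication chain. Combined with the assumption that $\textrm{A}$ is price-monotone, this means $\textrm{A}$ is monotone in the sense of the definition preceding the welfare theorem. Theorem \ref{thm:welfare-extension} then applies verbatim and yields the additive welfare loss of at most $\gamma_A \cdot B^*$ in every non-overbidding equilibrium.

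For the second bullet, the same reasoning shows that an $\mathcal{S}$-Greedy price-monotone mechanism is monotone, so Theorem \ref{thm:welfare-extension} still applies. The only additional step is to verify that $\gamma_A \leq 1$ for any $\mathcal{S}$-Greedy mechanism: once the hungry buyers have been satisfied, the greedy rule processes the semi-hungry buyers one at a time and allocates to each the maximum feasible quantity (up to exhausting its budget or the remaining supply) before moving to the next one. Hence every semi-hungry buyer other than possibly the one at which the supply runs out receives either zero units or the full maximum of its demand set, and none of these receives a partial allocation in the sense of $\mathcal{U}_{p_{\textrm{A}}}$. Therefore at most one semi-hungry buyer ever lies in $\mathcal{U}_{p_{\textrm{A}}}(\vec{s})$, i.e. $\gamma_A \leq 1$, and the bound $\gamma_A \cdot B^*$ from Theorem \ref{thm:welfare-extension} collapses to $\max_{i \in N} B_i$.

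I do not anticipate any real obstacle, since both statements are specializations of a theorem already proved. The only subtle point to state carefully is why $\mathcal{S}$-Greedy forces $\gamma_A \leq 1$; this just requires unpacking the definition of $\mathcal{S}$-Greedy and the definition of a partial allocation, noting that in the greedy order every processed semi-hungry buyer is served to the maximum of its demand except possibly the last one that still receives any units.
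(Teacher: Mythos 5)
Your proposal matches the paper's proof essentially verbatim: both derive the two bullets by combining price-monotonicity with the implication chain $\mathcal{S}$-greedy $\Rightarrow$ non-wasteful $\Rightarrow$ supply-monotone to invoke Theorem \ref{thm:welfare-extension}, and then observe $\gamma_A \leq 1$ for $\mathcal{S}$-greedy mechanisms to collapse the bound. Your unpacking of why $\gamma_A \leq 1$ holds (greedy order serves each semi-hungry buyer fully except possibly the one where supply runs out) is more explicit than the paper's one-line remark, but it is the same observation.
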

\begin{proof}
	The proof follows from Theorem \ref{thm:welfare-extension}, the fact that supply-monotonicity is implied by these two properties and the observation that for any $\mathcal{S}$-Greedy mechanism $\textrm{A}$, it holds that $\gamma_A \leq 1$.
\end{proof}
Theorem \ref{thm:welfare-extension} and its corollary bound the welfare loss for a large class of mechanisms, compared to the outcome of the mechanism when buyers are truth-telling. In particular, the class of price-monotone mechanisms that are $\mathcal{S}$-greedy contains welfare-maximizing mechanisms. As we show in the following theorem, if we restrict our attention to welfare-maximizing mechanisms, we can obtain the same guarantee without any assumption on the allocation of units to the semi-hungry buyers. 

\begin{theorem}\label{thm:welfaremaximizing}
	Let $\textrm{A}$ be a price-monotone welfare-maximizing mechanism. Then in any pure Nash equilibrium of $\textrm{A}$ where buyers do not overbid, the loss in social welfare (compared to the truth-telling outcome) is at most the maximum budget.
\end{theorem}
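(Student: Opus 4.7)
The plan is to mirror the structure of the proof of Theorem~\ref{thm:welfare-extension}, then use the extra strength of welfare-maximization to aggregate per-buyer losses globally instead of summing them individually, thereby tightening the bound from $\gamma_A \cdot B^*$ down to $B^*$.

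Setting notation, let $\vec{s}$ be any non-overbidding Nash equilibrium of $\textrm{A}$, with $p = p_{\textrm{A}}(\vec{s})$ and $x = x^{\textrm{A}}(\vec{s})$, and write $p^* = p_{\textrm{A}}(\vec{v})$, $x^* = x^{\textrm{A}}(\vec{v})$. Price-monotonicity together with non-overbidding immediately gives $p \leq p^*$. The first observation is that welfare-maximization implies non-wastefulness: at any envy-free price $q$, each semi-hungry buyer contributes exactly $q$ per unit to the reported welfare, so maximizing it forces every leftover unit to be sold. I can therefore reuse the preparatory arguments of Theorem~\ref{thm:welfare-extension}: buyers hungry at $\vec{s}$ weakly gain units (since $p\le p^*$ only enlarges the per-buyer demand), and the only buyers that can strictly lose units are those in $\mathcal{U}_{p_{\textrm{A}}}(\vec{s})$, with each individual loss bounded by $B_i$ via the single-buyer deviation to truth-telling.

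The novel ingredient is a global aggregation via welfare-maximization. I would introduce the auxiliary allocation $\tilde{x}$ at price $p^*$ defined by $\tilde{x}_j = x_j^*$ whenever $s_j \geq p^*$ and $\tilde{x}_j = 0$ otherwise, and check that $(\tilde{x}, p^*)$ is envy-free at $\vec{s}$: if $s_j > p^*$ then $v_j \geq s_j > p^*$ so $j$ was hungry at $\vec{v}$ and $x_j^*$ equals its unique maximum demand, which also matches the demand at $\vec{s}$ at $p^*$; if $s_j = p^*$ then $x_j^* \in \{0,\ldots,\min\{\lfloor B_j/p^*\rfloor, m\}\}$ lies in the semi-hungry demand set at $\vec{s}$; and if $s_j < p^*$ then the demand at $\vec{s}$ at $p^*$ is $\{0\}$, matching $\tilde{x}_j = 0$. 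Welfare-maximization at $\vec{s}$ then yields $\sum_i s_i x_i \geq \sum_j s_j \tilde{x}_j \geq p^* \sum_{j: s_j \geq p^*} x_j^*$, and combining with $\sum_i v_i x_i \geq \sum_i s_i x_i$ (non-overbidding) and expanding $W_t(\vec{v}) = \sum_{s_j \geq p^*} v_j x_j^* + \sum_{s_j < p^*} v_j x_j^*$, the welfare gap collapses into a residual that involves only buyers who under-reported strictly below $p^*$.

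The main obstacle I expect is precisely collapsing this residual down to a single budget rather than a sum over all partial-allocation losers. For supply-monotone mechanisms only the per-buyer deviation bound is available, which is what forces the $\gamma_A \cdot B^*$ estimate in Theorem~\ref{thm:welfare-extension}; welfare-maximization lets us instead identify a distinguished ``tight'' under-reporting buyer whose deviation-to-truth equilibrium inequality is binding, and charge the whole residual to its budget $B_i \leq B^*$, using the auxiliary pair $(\tilde{x}, p^*)$ to absorb the contributions of the remaining partial-allocation losers into the global welfare-maximization inequality. This step is where the proof genuinely diverges from Theorem~\ref{thm:welfare-extension}, and it is the reason the $\gamma_A$ factor can be dropped entirely in the welfare-maximizing case.
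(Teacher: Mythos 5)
Your plan correctly identifies the starting point (welfare-maximizing $\Rightarrow$ non-wasteful $\Rightarrow$ supply-monotone, so Lemma~\ref{lem:equil-one} and the set-up of Theorem~\ref{thm:welfare-extension} apply, and the whole loss concentrates on $\mathcal{U}_{p_\textrm{A}}(\vec{s})$). But the rest of the argument has two problems, one you flag yourself and one you don't.

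First, the admitted gap is a real gap, not a detail. You state that the residual should be chargeable ``to a single distinguished tight under-reporting buyer,'' but you give no mechanism for identifying that buyer or for why the other partial losers' contributions disappear. This is the entire content of the theorem. In the paper's proof the step is handled quite differently: one shows that $|\mathcal{U}_{p_\textrm{A}}(\vec{s})| \leq 1$ outright, by contradiction. If there were two partial-allocation buyers $i,j$ at price $p_\textrm{A}(\vec{s})$ (necessarily with $s_i = s_j = p_\textrm{A}(\vec{s})$), consider the unilateral deviation $s_i' = s_i + \epsilon$ to the next grid point. Case analysis on the resulting price: if it drops below $s_i'$, buyer $i$ becomes hungry and its demand jumps, strictly improving utility; if it equals $s_i'$, buyer $j$ becomes uninterested and frees up its units, so non-wastefulness forces the mechanism to hand buyer $i$ at least $x_i^\textrm{A}(\vec{s})+1$ units, again strictly improving; and a price strictly above $s_i'$ is impossible for a welfare maximizer (it could do better at price $s_i'$). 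Each branch contradicts equilibrium. This ``bump by one grid step and reclaim $j$'s freed supply'' argument is entirely absent from your proposal, and it is what actually collapses $\gamma_A$ to $1$. Once you know $|\mathcal{U}_{p_\textrm{A}}(\vec{s})| \leq 1$, the $\gamma_A \cdot B^*$ bound from Theorem~\ref{thm:welfare-extension} literally specializes to $B^*$ and you are done.

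Second, the auxiliary-allocation machinery you set up does not give a bound even in spirit. The chain $\sum_i v_i x_i \geq \sum_i s_i x_i \geq \sum_j s_j \tilde{x}_j$ is valid (the mechanism maximizes \emph{reported} welfare at $\vec{s}$, and $(\tilde{x}, p^*)$ is indeed a feasible envy-free pricing there), but subtracting it from $\sum_j v_j x_j^*$ leaves a residual $\sum_{j\,:\,s_j \geq p^*}(v_j - s_j)x_j^* + \sum_{j\,:\,s_j < p^*} v_j x_j^*$, and the first sum is completely uncontrolled: a hungry buyer at $\vec{v}$ who underbids but still clears $p^*$ contributes $(v_j - s_j)x_j^*$, which can be arbitrarily large while causing zero welfare loss. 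Passing from $v_i$ to $s_i$ throws away exactly the information you need. So even with a resolution of the ``distinguished buyer'' step, this route would need to be restructured to avoid that loss.

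In short: not the paper's approach, and the crucial step (bounding $|\mathcal{U}_{p_\textrm{A}}(\vec{s})|$ via the grid-step deviation) is missing; the global aggregation you propose also leaks information that makes it strictly weaker than the per-buyer bound you started from.
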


\subsubsection{Revenue Guarantees}

Next, we will prove that the revenue guarantee of Theorem \ref{thm:revenue-guarantee} extends to all reasonable equilibria of the natural class of price-monotone mechanisms for different allocation rules for the semi-hungry buyers. Again, the guarantees will depend on the nature of this allocation rule.

\begin{theorem}[Revenue in any non-overbidding Nash equilibrium]\label{thm:revenue-extension}
	Let $\textrm{A}$ be a monotone mechanism that approximates the optimal revenue within a factor of $\beta$ (with $0\leq \beta \leq 1$). Then in every non-overbidding Nash equilibrium of $\textrm{A}$, the revenue is a $\left(\beta-\gamma_A \cdot \alpha\right)/2$ approximation of the optimal revenue for that instance, where $\alpha$ is the budget share of the market and $\gamma_A$ is the maximum number of buyers that receive partial allocations by $\mathrm{A}$.
\end{theorem}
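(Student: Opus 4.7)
The plan is to mimic the proof of Theorem~\ref{thm:revenue-guarantee}, replacing the role of the single last deviator $\ell$ by the set $U := \mathcal{U}_{p_{\textrm{A}}}(\vec{s})$ of semi-hungry buyers receiving partial allocations in the equilibrium $\vec{s}$; by definition $|U| \leq \gamma_A$. Let $p := p_{\textrm{A}}(\vec{s})$ and $p^* := p_{\textrm{A}}(\vec{v})$. No-overbidding gives $\vec{s} \leq \vec{v}$ componentwise, so price-monotonicity yields $p \leq p^*$. The case $p = p^*$ can be disposed of by a direct revenue comparison (no price drop, and at most $\gamma_A$ buyers contribute less than their full demand), so the interesting case is $p < p^*$, which I focus on below.

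The crux is the analog of the envy-freeness claim used in Theorem~\ref{thm:revenue-guarantee}: \emph{the price $p$ is an envy-free price for the reduced market $\mathcal{M}' := (\vec{v}_{-U}, \vec{B}_{-U}, m)$, and the set of hungry buyers at $p$ coincides under $\vec{s}$ and under $\vec{v}_{-U}$} (call this statement~$(\ast)$). To establish $(\ast)$, I would show that every buyer $i \notin U$ with $v_i > p$ is already fully served in $\vec{s}$ at price $p$. If $s_i > p$, then $i$ is hungry under $\vec{s}$ and envy-freeness forces the full allocation. If $s_i \leq p$, then since $i \notin U$ the allocation is either $0$ or the maximum in the demand set; the zero case must be ruled out using the equilibrium condition, by exhibiting an improving deviation (bidding truthfully, or bumping $s_i$ up by a single grid step) and applying both price- and supply-monotonicity to show that the resulting price cannot jump past $v_i$ without $i$ strictly gaining utility. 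This simultaneously certifies feasibility of the hungry allocation on $\mathcal{M}'$ at $p$ and the coincidence of the hungry sets.

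Given $(\ast)$, the rest of the argument tracks Theorem~\ref{thm:revenue-guarantee} with $U$ in place of $\{\ell\}$. First, $\mathcal{REV}_{\textrm{A}}(\vec{s},\vec{B}) \geq \mathcal{REV}_0(\vec{s},\vec{B},p) = \mathcal{REV}_0(\vec{v}_{-U},\vec{B}_{-U},p)$ by $(\ast)$. Second, since $p \in [p_{\min}^{-U}, p^*)$, the same floor/ceiling comparison used in the proof of Theorem~\ref{thm:revenue-guarantee}, together with the observation that every retained buyer with $v_i \geq p^*$ can afford at least one unit at $p^*$ (and hence at $p$), yields $\mathcal{REV}_0(\vec{v}_{-U},\vec{B}_{-U},p) \geq \frac{1}{2}\,\mathcal{REV}(\vec{v}_{-U},\vec{B}_{-U},p^*)$. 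Third, removing the $|U|\leq \gamma_A$ buyers from the truthful market costs at most $\sum_{i\in U} B_i \leq \gamma_A \cdot B^* \leq \gamma_A \cdot \alpha \cdot \mathcal{REV}(\vec{v},\vec{B})$ in revenue at $p^*$, and $\mathcal{REV}(\vec{v},\vec{B},p^*) \geq \beta \cdot \mathcal{REV}(\vec{v},\vec{B})$ by the $\beta$-approximation hypothesis. Chaining these inequalities delivers the $(\beta - \gamma_A \alpha)/2$ bound.

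The main obstacle is step $(\ast)$: unlike the best-response setting of Theorem~\ref{thm:revenue-guarantee}, where the structural properties of the dynamic directly identified who to remove, here we must rule out hidden ``understating'' buyers, namely $i \notin U$ with $s_i < p < v_i$. This is precisely where the joint hypotheses of price- and supply-monotonicity are used: price-monotonicity constrains how far $p$ can rise after a unilateral upward deviation by such an $i$, while supply-monotonicity ensures that the deviator actually receives (weakly) more units once the extra supply consistent with the new price is available. Combining these forces any candidate deviation to be strictly improving, contradicting the equilibrium assumption and thereby closing the gap that the last-deviator property handled for free in Theorem~\ref{thm:revenue-guarantee}.
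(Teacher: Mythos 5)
Your proposal tracks the paper's argument closely: both substitute the set $\mathcal{U}_{p_{\mathrm{A}}}(\vec{s})$ of partially allocated semi-hungry buyers for the single last deviator $\ell$, both establish that the equilibrium price remains envy-free on the reduced market obtained by deleting those buyers, and both then chain the identical floor/ceiling half-loss estimate and the budget-removal bound $\sum_{i\in \mathcal{U}} B_i \le \gamma_A \alpha \cdot \mathcal{REV}(\vec{v},\vec{B})$ to get $(\beta-\gamma_A\alpha)/2$. The one structural difference is that the paper factors the ``ruling out zero allocations'' step into a separate Lemma~\ref{lem:equil-one} (deviate to $v_i$, apply price- and supply-monotonicity), whereas you inline it.

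There is one spot where your sketch overreaches and where you should be careful. Your claim $(\ast)$ asks that \emph{every} $i\notin U$ with $v_i > p_{\mathrm{A}}(\vec{s})$ be fully served, and you propose to rule out the zero case for such an $i$ with $s_i \le p_{\mathrm{A}}(\vec{s})$ by ``bumping $s_i$ up'' and arguing the price cannot jump past $v_i$ without a strict gain. But for an arbitrary monotone mechanism, an upward deviation by $i$ can land the new price exactly at $v_i$ (semi-hungry, possibly zero units and zero utility) or even above $v_i$ when $v_i < p_{\mathrm{A}}(\vec{v})$; in either case there is no strict improvement and the equilibrium condition yields no contradiction. This is exactly why the paper's Lemma~\ref{lem:equil-one} is stated only for buyers that receive a positive allocation under truth-telling, and why its Property~(2) is restricted to $\mathcal{H}_{p_{\mathrm{A}}}(\vec{v})\setminus\mathcal{U}$, i.e.\ buyers with $v_i > p_{\mathrm{A}}(\vec{v})$, for whom the truthful deviation necessarily lands strictly below $v_i$. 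If you want to claim the stronger version of $(\ast)$ covering $p_{\mathrm{A}}(\vec{s}) < v_i < p_{\mathrm{A}}(\vec{v})$, you need a genuinely new argument; otherwise, scope $(\ast)$ the way the paper does and argue that this restricted form of envy-freeness is still enough to support the revenue chain.
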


\noindent Again, we have the following corollary.
\begin{corollary}
	Let $\textrm{A}$ be a price-monotone mechanism which approximates the optimal revenue within a factor of $\beta$ on every instance. Then in every non-overbidding Nash equilibrium of $\textrm{A}$, the revenue approximation achieved is
	\begin{itemize}
		\item [-] $\left(\beta-\gamma_A \cdot \alpha \right)/2$, \space\space\space\space\space if $A$ is non-wasteful,
		\item [-] $\left(\beta-\alpha \right)/2$, \space\space\space\space\space\space \space\space\space \space \space \space if $A$ is $\mathcal{S}$-Greedy,
	\end{itemize}   
	where $\alpha$ is the budget share of the market and $\gamma_A$ is the maximum number of semi-hungry buyers receiving partial allocations by $\mathrm{A}$.
\end{corollary}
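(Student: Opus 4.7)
The plan is to derive the corollary directly from Theorem \ref{thm:revenue-extension}, leveraging the implication chain $\mathcal{S}\textit{-greedy} \Rightarrow \textit{Non-wasteful} \Rightarrow \textit{Supply-monotone}$ established in the paper. Since Theorem \ref{thm:revenue-extension} applies to monotone mechanisms (i.e., those that are both price-monotone and supply-monotone), the first task is simply to verify that each of the two classes named in the corollary sits inside the monotone class, so that the theorem may be invoked without further argument.

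First, I would handle the non-wasteful case. By hypothesis, $\textrm{A}$ is price-monotone, and by the implication chain non-wasteful mechanisms are supply-monotone, so $\textrm{A}$ is monotone. Applying Theorem \ref{thm:revenue-extension} directly yields a revenue approximation of at least $(\beta - \gamma_A \cdot \alpha)/2$, which matches the first bullet exactly.

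Next, for the $\mathcal{S}$-Greedy case, the argument is the same up to the extra observation already noted in the preamble to the welfare results: an $\mathcal{S}$-Greedy mechanism processes semi-hungry buyers one at a time, giving each buyer the maximum element of its demand set (or exhausting its budget) before moving on to the next. Consequently, at most one semi-hungry buyer in the ordering can end up with a partial allocation, namely the buyer at whom the supply runs out. Hence $\gamma_A \leq 1$, and an $\mathcal{S}$-Greedy mechanism is in particular non-wasteful and price-monotone (hence monotone). Plugging $\gamma_A \leq 1$ into the bound $(\beta - \gamma_A \cdot \alpha)/2$ from Theorem \ref{thm:revenue-extension} gives the stated $(\beta - \alpha)/2$ approximation.

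Since the corollary is essentially a specialization of the preceding theorem through the monotonicity hierarchy, there is no genuine obstacle in this proof; the only point that needs care is making sure that the argument bounding $\gamma_A$ by $1$ for $\mathcal{S}$-Greedy mechanisms is stated explicitly, as it is the sole step that does not follow by pure quotation of Theorem \ref{thm:revenue-extension}. I would therefore write the proof as two short paragraphs: one invoking supply-monotonicity in each case to reduce to Theorem \ref{thm:revenue-extension}, and one recording the $\gamma_A \leq 1$ bound that specializes the general estimate to the cleaner $(\beta - \alpha)/2$ form.
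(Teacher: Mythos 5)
Your proof is correct and follows essentially the same route as the paper: reduce to Theorem \ref{thm:revenue-extension} via the implication chain $\mathcal{S}\text{-greedy} \Rightarrow \text{non-wasteful} \Rightarrow \text{supply-monotone}$, and specialize the $\mathcal{S}$-Greedy bound using $\gamma_A \leq 1$. Nothing to add.
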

\begin{proof}
	The proof follows directly from Theorem \ref{thm:revenue-extension}, together with the fact that supply-monotonicity is implied by those two properties and that for any $\mathcal{S}$-Greedy mechanism $A$, it holds that $\gamma_A \leq 1$. 
\end{proof}

\section{The \textsc{All-or-Nothing} mechanism}\label{sec:aon}

In this section we study a mechanism with good guarantees for revenue and welfare that was suggested in \cite{BFMZ16}, 
which was shown to be \emph{truthful}, i.e. for any agent $i$, truth-telling is always a best-response, for any strategy profile $s_{-i}$ of the remaining buyers.\footnote{In game-theoretic terms, this is the same as saying that truth-telling is a \emph{dominant strategy equilibrium}.} The omitted proofs of this section can be found in Appendix \ref{app:section5}.

\begin{definition}[\emph{\textsc{Mechanism All-or-Nothing:}}] Given as input the valuations of the buyers, let $p$ be the minimum envy-free price and $\vec{x}$ the allocation obtained as follows: 
	\begin{itemize}
		\item[-] For every hungry buyer $i$, set $x_i$ to its demand. 
		\item[-] For every buyer $i$ with $v_i<p$, set $x_i = 0$. 
		\item[-] For every semi-hungry buyer $i$, set $x_i = \lfloor B_i / p \rfloor$ if possible, otherwise set $x_i=0$ taking\\ the semi-hungry buyers in lexicographic order.
	\end{itemize}
\end{definition} 

\noindent The \textsc{All-or-Nothing} mechanism has several desirable properties.

\begin{theorem}[\cite{BFMZ16}]\label{thm:aonbranzei}
	The \textsc{All-Or-Nothing} mechanism is truthful and 
	\begin{itemize}
		\item has welfare is at least $w^* - B^{*}$, where $w^*$ is the optimal welfare and $B^*$ is the maximum budget.
		\item it approximates the optimal revenue within a (multiplicative) factor of at least $\min\left\{1/2, 1-\alpha\right\}$, where $\alpha$ is the budget share of the market.
	\end{itemize}
	 The mechanism is optimal for both the revenue and welfare objectives when the market is even mildly competitive\footnote{In \cite{BFMZ16}, the notion of market competition is actually slightly different from ours, but simple calculations can show that it is upper bounded by the budget share $\alpha$; we state the theorem in terms of $\alpha$ for consistency.} and its approximation for welfare converges to $1$ as the market becomes fully competitive.
\end{theorem}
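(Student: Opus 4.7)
The plan is to treat the three claims (truthfulness, welfare guarantee, revenue guarantee) separately, exploiting the structural fact that at the minimum Walrasian envy-free price $p$, every hungry buyer receives their full demand $\min\{\lfloor B_i/p\rfloor, m\}$ and the only degree of freedom is how the remaining units are split among the semi-hungry buyers. The all-or-nothing lexicographic rule then ensures that at most one semi-hungry buyer is ``skipped'' (receives zero instead of $\lfloor B_i/p\rfloor$), which will be the main handle for both the welfare and the revenue bounds.

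For truthfulness, I would fix buyer $i$ with true value $v_i$ and any deviation $v_i'$, and first establish the auxiliary fact that the minimum envy-free price is weakly monotone in each report (a short argument from the definition of envy-freeness). With monotonicity in hand, I would case-split on $i$'s status at the truthful price. If $v_i > p$, underreporting either leaves the price unchanged or lowers it to some $p' < p$ at which $i$ is still hungry, and an algebraic comparison shows $(v_i - p')\lfloor B_i/p'\rfloor \le (v_i - p)\lfloor B_i/p\rfloor$ would need to be checked by exploiting $p = p^{\min}(\vec v)$; overreporting only raises the price, (weakly) worsening utility. If $v_i = p$, truthful utility is zero and any deviation either yields zero units or pushes $i$ into hungry status at a price exceeding $v_i$, giving nonpositive utility. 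If $v_i < p$, any deviation making $i$ a recipient must raise the reported value above $v_i$, and by price monotonicity the resulting price is at least $v_i$, again giving nonpositive utility.

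For the welfare bound, I would compare the mechanism's allocation to the optimal envy-free allocation at price $p$, and then to any other envy-free benchmark. All hungry buyers are fully served, so the only loss is among semi-hungry buyers. By the all-or-nothing lexicographic rule, at most one semi-hungry buyer fails to receive $\lfloor B_i/p\rfloor$ units. Its forgone value is $v_i\cdot\lfloor B_i/p\rfloor = p\cdot\lfloor B_i/p\rfloor \le B_i \le B^*$, giving the $w^* - B^*$ bound.

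For the revenue bound, let $p^*\ge p$ denote a revenue-optimal envy-free price. I would consider the set $L$ of buyers with $v_i\ge p^*$ who can afford at least one unit at $p^*$, and bound $\mathcal{REV}_\textrm{AON}(\vec v)$ from below using the revenue collected from $L$ at price $p$, minus the contribution of the at most one skipped semi-hungry buyer. The ratio $\lfloor B_i/p\rfloor \cdot p \ge B_i/2$ (which holds whenever the buyer can afford one unit at $p$) yields the $1/2$ bound by a calculation in the spirit of Theorem \ref{thm:revenue-guarantee}, while the $(1-\alpha)$ bound follows from the observation that the revenue lost to the single skipped buyer is at most its budget, which is at most $\alpha\cdot\mathcal{REV}(\mathcal M)$. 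Combining gives $\min\{1/2, 1-\alpha\}$. The optimality statements for competitive markets then fall out: as $\alpha\to 0$, both the welfare loss (one budget out of many) and the revenue loss vanish relative to the optimum.

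The main obstacle I expect is the truthfulness argument in the hungry-underreporting case, where I need to rule out the possibility that a hungry buyer strictly benefits by lowering the price via a lower report: the buyer genuinely pays less per unit but may also be able to afford more units, and showing that this combination never yields a strict improvement relies delicately on the minimality of $p$ (so that any lower price is \emph{not} envy-free on $\vec v$, and hence lowering the reported value must change the demand structure in a way that cannot favor $i$). A secondary difficulty is confirming the $1/2$ revenue bound in the regime where $p^*\gg p$, since lowering from $p^*$ to $p$ both increases per-buyer demands and admits new buyers into the market, so the comparison must be done buyer-by-buyer within $L$ rather than globally.
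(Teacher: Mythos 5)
This statement is imported verbatim from prior work---the theorem is labelled \cite{BFMZ16} and the paper gives no proof of it, so there is no in-paper argument to compare against. You have, in effect, attempted a blind reconstruction of the BFMZ16 proof.

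Your high-level decomposition is sensible, and the central structural observation---that at the minimum Walrasian envy-free price every hungry buyer is fully served and the lexicographic all-or-nothing rule skips at most one semi-hungry buyer---is indeed the right handle for the welfare and revenue bounds. But there are two genuine gaps. First, in the truthfulness argument for a hungry buyer who underreports and drops the price to $p' < p$, the inequality you propose to check, $(v_i - p')\lfloor B_i/p'\rfloor \le (v_i - p)\lfloor B_i/p\rfloor$, is simply false in general: both factors on the left are weakly larger, so the left side is weakly larger, not smaller. The correct resolution is that this case is \emph{vacuous}---if $i$ remained hungry at $p'$, its demand at $p'$ would be identical to its demand under the true report, so $p'$ would already have been envy-free on $\vec v$, contradicting minimality of $p$; hence the price can only drop if $i$ becomes uninterested or semi-hungry-and-forced-to-zero at the new price, yielding utility $0 \le$ truthful utility. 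You gesture at minimality of $p$ in your final sentence of that paragraph, but you haven't reconciled it with the (doomed) algebraic comparison you set up, and the argument as written would not go through. Second, the revenue paragraph asserts that the $1/2$ comes from the floor trick and the $1-\alpha$ from the skipped buyer, and that ``combining gives $\min\{1/2, 1-\alpha\}$,'' but you never explain why the combination is a \emph{minimum} rather than a product or a difference. If both losses occurred simultaneously in the same instance, the natural conclusion of your argument would be something like $(1-\alpha)/2$, not $\min\{1/2, 1-\alpha\}$; the $\min$ form in BFMZ16 arises from a case split (roughly, whether the optimal price lies strictly above the minimum envy-free price or not), which your sketch does not perform.
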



\noindent However, nothing is known about the equilibrium set of the mechanism, or its convergence properties when instantiated from any arbitrary strategy profile. To this end, first we provide the following theorem, with a proof in Appendix \ref{app:section5}.  %

\begin{theorem} \label{thm:aon_converge}
	For $n=2$ buyers,
	the best response dynamic of the \textsc{All-or-Nothing} mechanism converges to a Nash equilibrium from any initial strategy profile.
\end{theorem}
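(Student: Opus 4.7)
The plan is to exploit the simple structure of the two-buyer case: derive an explicit price formula, characterize best responses, and then argue convergence via the termination of ``monotone move-chains''.

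\textbf{Step 1 (price formula).} I would first show that for any pair $(s_1, s_2)$,
\[ p^{\min}(s_1, s_2) \;=\; \min\{s_1,\, s_2,\, p_f\}, \]
where $p_f := \min\{p : \lfloor B_1/p\rfloor + \lfloor B_2/p\rfloor \leq m\}$ is a market-dependent constant. For $p < p_f$ the combined hungry demand exceeds $m$, so no envy-free allocation has both buyers fully hungry; at $p = \min(s_1,s_2)$ the lower reporter becomes semi-hungry and can absorb the slack by taking zero. The crucial consequence for \textsc{All-or-Nothing} is that whenever $s_i < s_{-i}$ and $s_i < p_f$, buyer $i$ (semi-hungry at $p=s_i$) receives zero units, since by the definition of $p_f$ the hungry opponent's demand leaves strictly fewer than $\lfloor B_i/s_i\rfloor$ units of remaining supply.

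\textbf{Step 2 (best responses).} Let $q_i := \min(s_{-i}, p_f)$. I would then show that buyer $i$'s best responses to $s_{-i}$ fall in three regimes: (a) if $q_i < v_i$, any $s_i > q_i$ is optimal and yields the positive utility $(v_i-q_i)\min(\lfloor B_i/q_i\rfloor, m)$ as a hungry buyer at price $q_i$; (b) if $q_i = v_i$, any report yields utility zero; (c) if $q_i > v_i$, any $s_i < q_i$ is optimal and yields zero utility (by Step 1 the semi-hungry low reporter receives nothing, avoiding the negative payoff suffered when $s_i \geq q_i$). Hence every move is either an \emph{up-move} ($s_i' > q_i$ in case (a), from an earlier $s_i \leq q_i$) or a \emph{down-move} ($s_i' < q_i$ in case (c), from an earlier $s_i \geq q_i$ with negative utility).

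\textbf{Step 3 (termination of down-chains).} The key observation is that any maximal chain of consecutive down-moves produces a strictly decreasing sequence of reports. Indeed, after buyer $i$ down-moves to $s_i' < q_i \leq s_{-i}$, the opponent's new $q_{-i} = \min(s_i', p_f) = s_i'$; if this triggers a further down-move by the opponent (which requires $s_i' > v_{-i}$), the opponent's response must be strictly below $s_i'$, and so on. Since the input grid is discrete and bounded below, the chain terminates after finitely many steps, at which point the last mover's report is at most $\min(v_1, v_2)$ and the opposite buyer sits in regime (a) or (b) and is satisfied.

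\textbf{Main obstacle.} The delicate part is interleaving down-moves with up-moves: an up-move by buyer $i$ can raise the opponent's $q_{-i}$ past $v_{-i}$ and trigger a down-chain, while a down-move by buyer $i$ can lower the opponent's $q_{-i}$ into regime (a) and enable a subsequent up-move. Neither the price $\min(s_1, s_2, p_f)$ nor the sum of utilities is monotone under these interleavings, so a naive potential function does not suffice. The remaining work is a careful case analysis conditioned on the relative positions of $v_1, v_2, p_f$ and on whether the profile lies in the ``low'' regime ($\min(s_1, s_2) < p_f$) or the ``high'' regime ($\min(s_1, s_2) \geq p_f$), showing that after at most one up-move per buyer --- which either places both reports above $p_f$ in the saturation regime (a Nash equilibrium when $v_1, v_2 \geq p_f$) or triggers a down-chain --- the play terminates.
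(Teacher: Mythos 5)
Your approach is genuinely different from the paper's: you compute an explicit formula for the minimum envy-free price with two buyers, reduce each buyer's best response to a comparison of $q_i = \min(s_{-i}, p_f)$ with $v_i$, and classify moves as up-moves or down-moves. The paper instead classifies buyers by the \emph{sign} of their current true utility (types $-$, $0$, $+$), proves structural lemmas (a type $+$ buyer is already best-responding; a type $-$ buyer can only move to type $0$, weakly lowering the price; a type $0$ buyer can only move to type $+$, weakly raising the price), and then runs an exhaustive analysis over the nine type pairs $(h_1, h_2)$, showing that each one either is an equilibrium or leads after a bounded sequence of transitions to a pair already handled. Your price formula is a nice reduction, and the up-move/down-move dichotomy is essentially the dual of the paper's type analysis (up-moves correspond to $0\to+$ transitions, down-moves to $-\to 0$), so the frameworks are compatible.

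However, there is a genuine gap, and you explicitly flag it yourself in the \textbf{Main obstacle} paragraph. The hard part of this theorem is not showing that a monotone down-chain (or up-chain) terminates --- that is easy from discreteness --- but showing that the interleaving of up-moves and down-moves cannot produce an infinite oscillation. You observe correctly that neither $\min(s_1,s_2,p_f)$ nor the total utility is monotone across mixed sequences, so no naive potential works; but you then leave the resolution as ``a careful case analysis conditioned on the relative positions of $v_1, v_2, p_f$'' and assert without proof that ``after at most one up-move per buyer \ldots the play terminates.'' This is exactly the claim that needs an argument. The paper's Lemma~\ref{lem:zeroplus} (handling the $(0,+)$ state) is where the corresponding work is done: it identifies the alternating $(0,+)\to(+,0)\to(0,+)\to\cdots$ pattern, shows via a separate lemma that if a round passes with no price change the profile is already an equilibrium, and otherwise the price strictly increases each round and is bounded above by the sum of budgets, forcing termination. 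Without an argument of this kind your proof is incomplete: it sets up a framework that correctly isolates the difficulty, but does not close it. To repair your proof you would need to supply this missing termination argument --- for instance, arguing that along a maximal alternating up/down sequence the price strictly changes at every step and is bounded (above by the budgets, below by zero or by $\min(v_1,v_2)$), or mirroring the paper's observation that any step leaving the price unchanged already yields an equilibrium.
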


\noindent We conjecture the mechanism converges for any number of buyers.

\begin{open} 
	Does the best response dynamic of the \textsc{All-or-Nothing} mechanism converge from any initial profile for any number of buyers?
\end{open}

Finally, taking advantage of our theorems for general classes of mechanisms developed in the previous sections, we provide the following stronger guarantee:

\begin{theorem}\label{thm:aonmain}
	For $n=2$ buyers, the best-response dynamic starting from \emph{any profile} of the \textsc{All-Or-Nothing} mechanism converges to a pure Nash equilibrium. 
	Moreover, in every non-overbidding equilibrium of \textsc{All-Or-Nothing}, for any number of buyers,
	\begin{itemize}
		\item [-]has welfare is at least $w^* - B^{*}$, where $w^*$ is the optimal welfare and $B^*$ is the maximum budget.
		\item [-]the revenue is approximated within a (multiplicative) factor of at least $\min \left\{0.25\right.$, $\left.(1-\alpha)/2\right\} - \alpha/2$ where  $\alpha$ is the budget share of the market.
	\end{itemize}
		In particular, in competitive markets (where $\alpha$ approaches $0$), mechanism \textsc{All-Or-Nothing} approximates the optimal revenue within a factor of $4$ in any non-overbidding equilibrium.
\end{theorem}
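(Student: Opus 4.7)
The theorem has two parts: convergence for $n=2$ from any profile, and welfare and revenue bounds in every non-overbidding equilibrium for arbitrary $n$. The first part is exactly the statement of Theorem~\ref{thm:aon_converge}, so my plan is to derive the welfare and revenue bounds by applying Theorems~\ref{thm:welfare-extension} and~\ref{thm:revenue-extension} to \textsc{All-Or-Nothing}. This requires verifying that the mechanism is monotone (both price-monotone and supply-monotone) and bounding $\gamma_{\textsc{AoN}}$, the maximum number of semi-hungry buyers that ever receive partial allocations.

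For price-monotonicity, the key is that \textsc{All-Or-Nothing} selects the minimum envy-free price, so it suffices to show that lowering any valuation weakly enlarges the set of envy-free prices. Given a supporting pair $(\vec{x},p)$ at $\vec{v}$ and a profile $\vec{v}' \leq \vec{v}$, I would reset $x_i$ to $0$ for every buyer whose status flips from interested at $\vec{v}$ to uninterested at $\vec{v}'$; the modified allocation still sits in the (weakly larger) demand set at $p$ under $\vec{v}'$ for every remaining buyer, and $\sum_i x_i \leq m$ is preserved because envy-free pricing does not require market clearing. For supply-monotonicity, fix a buyer $i$ and a common price $p$ produced on two profiles: hungry and uninterested buyers receive fixed allocations independent of the other reports, while a semi-hungry buyer receives either $\lfloor B_i/p\rfloor$ (when enough units remain after the lexicographically earlier semi-hungry buyers) or $0$. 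More remaining supply for $i$ can therefore only flip the allocation from $0$ to $\lfloor B_i/p\rfloor$, never the reverse. Finally, since \textsc{All-Or-Nothing} always allocates an endpoint of the demand set to each semi-hungry buyer, no buyer ever receives a partial allocation, so $\gamma_{\textsc{AoN}} = 0$ (and trivially $\leq 1$).

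Plugging these facts into the master theorems, Theorem~\ref{thm:welfare-extension} gives a welfare loss relative to truth-telling of at most $\gamma_{\textsc{AoN}} \cdot B^{*} = 0$; combining with the $w^{*} - B^{*}$ truth-telling welfare guarantee of Theorem~\ref{thm:aonbranzei} yields the claimed bound. Theorem~\ref{thm:revenue-extension} in turn gives a $(\beta - \gamma_{\textsc{AoN}} \cdot \alpha)/2$ approximation; substituting $\beta = \min\{1/2, 1-\alpha\}$ from Theorem~\ref{thm:aonbranzei} and using the bound $\gamma_{\textsc{AoN}} \leq 1$ yields $(\min\{1/2, 1-\alpha\} - \alpha)/2$, which by elementary algebra equals $\min\{1/4, (1-\alpha)/2\} - \alpha/2$. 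Letting $\alpha \to 0$ in this expression recovers the factor-of-$4$ approximation in competitive markets.

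The main obstacle I expect is the repair step in the price-monotonicity argument: we must guarantee that zeroing out the allocation of every buyer who drops from interested to uninterested does not break envy-freeness for any of the remaining buyers. This is precisely where the non-clearing condition $\sum_i x_i \leq m$ is essential, since otherwise the freed units would have to be forcibly absorbed by buyers whose demand sets have not grown, which could violate their individual demand constraints.
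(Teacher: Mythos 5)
Your proposal follows exactly the route the paper intends: convergence for $n=2$ is exactly Theorem \ref{thm:aon_converge}, and the welfare/revenue bounds are obtained by verifying that \textsc{All-Or-Nothing} is monotone (Lemma \ref{observation1} in the paper does this) and then invoking Theorems \ref{thm:welfare-extension}, \ref{thm:revenue-extension}, and \ref{thm:aonbranzei}. Your sharper observation that $\gamma_{\textsc{AoN}}=0$ (each semi-hungry buyer always receives an endpoint of its demand set, never anything in between) is actually preferable to the paper's stated bound $\gamma\leq 1$: plugging $\gamma\leq 1$ into Theorem \ref{thm:welfare-extension} would only bound the equilibrium welfare below by $w^*-2B^*$ once combined with the truth-telling guarantee $w^*-B^*$ of Theorem \ref{thm:aonbranzei}, whereas your $\gamma=0$ gives the stated $w^*-B^*$ cleanly (and even a marginally stronger revenue bound than the one in the theorem). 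Your price-monotonicity argument via resetting flipped buyers to $0$ and your supply-monotonicity argument via the lexicographic allocation rule are both correct and somewhat more detailed than what the paper records in Lemma \ref{observation1}.
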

\noindent The proof of the welfare and revenue guarantees of Theorem \ref{thm:aonmain} follow from the remark above, together with Theorem \ref{thm:aonbranzei}, Theorem \ref{thm:welfare-extension} and Theorem \ref{thm:revenue-extension}, since the \textsc{All-Or-Nothing} mechanism is monotonic and the number of semi-hungry buyers with partial allocations is at most $1$, i.e. $\gamma \leq 1$. 
\begin{figure}[h!]
	\centering
\begin{subfigure}{0.47\textwidth}
		\includegraphics[width=1\textwidth, trim={0 0 0 0},clip]{AON_n25_m20_B120_truerev}
	\end{subfigure}
	\begin{subfigure}{0.47\textwidth}
		\includegraphics[width=1\textwidth, trim={0 0 0 0},clip]{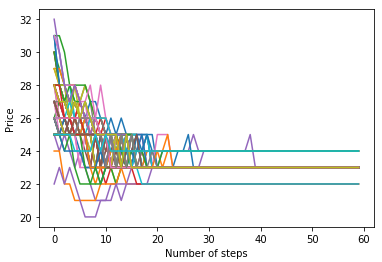}
	\end{subfigure}
		\caption{The convergence properties of the \textsc{All-Or-Nothing} mechanism when the dynamic starts from 100 different arbitrary profiles, for $n=25, m=20$. The budgets are drawn uniformly at random from $[1,50]$ (right) and 
		 $[1,125]$ (left). 
		The differently coloured lines indicate different starting profiles $\mathbf{s}^{(0)}$.} \label{figfig}
\end{figure}

\section{Discussion}

It would be interesting to understand more precisely the best response process initiated from arbitrary states (in particular, whether the \textsc{All-or-Nothing} mechanism converges for any number of buyers) and more general classes of valuations.



\addcontentsline{toc}{section}{\protect\numberline{}References}%

\bibliography{multiunitarxiv}

\appendix

\section{Dynamics} \label{app:nashdynamic}

\begin{theorem}[Necessary allocation loss]\label{thm:appnecessaryloss}[\ref{thm:necessaryloss} in main text]
	There exist markets where the best response dynamic of a mechanism converges to an allocation where some buyer loses units compared to the truth-telling outcome.
\end{theorem}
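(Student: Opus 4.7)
My plan is to exhibit a concrete market together with a concrete envy--free pricing mechanism for which the best response dynamic from the truth--telling profile converges to a Nash equilibrium at which the unique deviator ends up with strictly fewer units than at truth--telling. The underlying intuition: when a deviator's truthful utility is exactly zero (because her valuation coincides with the selected price), she may strictly benefit by lowering her bid to match that of a currently uninterested bidder, even if doing so activates that bidder and causes the allocation rule to transfer some of her units to him.

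Concretely, I would take two buyers Alice and Bob with $v_A = 3$, $B_A = 6$, $v_B = 2$, $B_B = 2$, $m = 2$, on the integer input grid ($\epsilon = 1$). The mechanism $\textrm{A}$ sets the unit price to the maximum reported valuation, and breaks ties among semi--hungry buyers by a greedy lexicographic allocation that serves Bob before Alice (each receiving the largest affordable bundle available). This rule is envy--free because at $p = \max_i s_i$ no buyer is hungry: the bidders reporting the maximum are semi--hungry (every affordable bundle size belongs to their demand) and all other bidders are uninterested.

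The verification has three short steps. First, at the truth--telling profile $(3,2)$ the price is $3$, Alice is the unique semi--hungry buyer and receives both units, Bob is uninterested, and Alice's utility is $0$. Second, I enumerate Alice's deviations on the integer grid: any $v'_A > 3$ gives negative utility; $v'_A = 3$ is truth; any $v'_A \leq 1$ lets Bob's value set the price and leaves Alice uninterested; and $v'_A = 2$ drops the price to $2$, makes both buyers semi--hungry, gives Bob his one affordable unit first, and leaves Alice with the single remaining unit for utility $(3-2)\cdot 1 = 1$. Hence Alice's unique strict best response is $v'_A = 2$. Third, at the profile $(2,2)$ Bob has utility $0$ and no strictly improving deviation (a higher bid exceeds his value and a lower bid yields zero units), while Alice's utility $1$ is optimal by the previous analysis; thus $(2,2)$ is a Nash equilibrium, and Alice's final allocation is $1$ unit, strictly less than the $2$ units she received at truth--telling.

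The main obstacle is ensuring that the deviator's best response \emph{must} lose units, rather than merely shaving the price while preserving her bundle. On a continuous domain the argument would collapse in the style of Example \ref{ex:one}: Alice could pick $v'_A = 2 + \eta$ for arbitrarily small $\eta > 0$, remain alone as semi--hungry at the slightly lower price, and keep both units. The discreteness of the input grid is therefore essential; it forbids such infinitesimal shaves and forces Alice either to stay at her true value or to drop all the way to $v_B$, activating Bob and costing her a unit. The parameters are likewise tuned so that Alice's truthful utility is exactly $0$ (making any positive utility a strict improvement) and Bob's budget affords exactly one unit at the new price (so the greedy rule genuinely strips Alice of a unit).
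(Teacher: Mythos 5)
Your proof is correct, and the mechanism is well-defined as an envy-free pricing rule (at $p=\max_i s_i$ no one is hungry, so a greedy allocation to the semi-hungry buyers is always feasible). The only small imprecision is the claim that every $v_A'>3$ gives negative utility: for $v_A'\geq 7$ Alice can afford zero units at price $v_A'$ and gets utility $0$, not negative. This does not affect the argument, since $0$ is still not a strict improvement over her truthful utility of $0$, and $v_A'=2$ remains her unique best response.

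The construction is the same in spirit as the paper's --- the deviator matches an existing bid to lower the price, becomes semi-hungry at the new price, and then loses a unit to the tie-breaking rule --- but it is somewhat leaner. The paper uses three buyers, a mechanism that selects the smallest envy-free reported value as the price, and needs Bob to turn hungry while Alice and Carol tie-break for a single leftover unit. You achieve the same effect with two buyers and a mechanism that prices at the top reported value, so that the deviation directly creates a two-way tie that Alice loses. Your version is a slightly more economical witness of the same phenomenon; the paper's version has the modest advantage of illustrating the interaction with hungry buyers as well, but this generality is not needed for the statement.
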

\begin{proof}
	Let $\mathrm{A}$ be the mechanism which sets the price equal to the lowest valuation that is envy-free, breaking ties in a fixed way (the same for all valuations) to decide the order in which the semi-hungry buyers are allocated. Note that there is always such an envy-free price, since it is always possible to set it to the highest valuation, at which all the buyers are either semi-hungry or not interested. 
	Given a market $\mathcal{M}= (\vec{v}, \vec{B}, m)$, let $p_\textrm{A}(\vec{v})$ denote the price and $x_i^\textrm{A}(\vec{v})$ the allocation of buyer $i$ computed by $A$ for the market $\mathcal{M}$.
	
	Take a market with $n=3$ buyers---Alice, Bob, and Carol---and $m=3$ items, where the valuations are $v_{Alice} = 1.1$, $v_{Bob} = 1.1$, and $v_{Carol} = 1$, while the budgets are $B_{Alice} = B_{Bob} = 2.2$ and $B_{Carol} = 1$. Let the tie-breaking order be $1,3,2$ and the grid step $0.001$. At the truth-telling outcome, the price set by Mechanism $\mathrm{A}$ is $p_\mathrm{A}(\vec{v}) = v_{Bob} = 1.1$, where Alice gets $x_{Alice}^\mathrm{A}(\vec{v}) = 2$ units, Bob gets $x_{Bob}^\mathrm{A}(\vec{v}) = 1$ units, and Carol gets $x_{Carol}^\mathrm{A}(\vec{v}) = 0$ units. This is the lowest envy-free price equal to some valuation since setting the price to $v_{Carol} = 1$ would result in both Alice and Bob having a demand of two, while only three units are available. Alice's utility is
	$$u_{Alice}(\vec{v},p_\mathrm{A}(\vec{v})) = x_{Alice}^\mathrm{A}(\vec{v}) \cdot v_{Alice} - p_\mathrm{A}(\vec{v}) \cdot x_{Alice}^\mathrm{A}(\vec{v}) = 2 \cdot 1.1 - 1.1 \cdot 2 = 0.$$
	
	We claim that Alice has a best response at $v_{Alice}' = 1$. Let $\vec{v}' = (v_{Alice}', v_{Bob}, v_{Carol})$. 
	The price output by $\mathrm{A}$ at $\vec{v}'$ is $p_\mathrm{A}(\vec{v'}) = 1$. Since Bob appears hungry, while Alice and Carol are semi-hungry at $p_\mathrm{A}(\vec{v'})$ with tie-breaking order $1,3,2$, the allocation is $x_{Alice}^\mathrm{A}(\vec{v'}) = 1$, $x_{Bob}^\mathrm{A}(\vec{v'}) = 2 $ and $x_{Carol}^\mathrm{A}(\vec{v'}) = 0$, which gives Alice a utility of 
	$$u_{Alice}(\vec{v}', p_\mathrm{A}(\vec{v'})) = x_{Alice}^\mathrm{A}(\vec{v'}) \cdot v_{Alice} - p_\mathrm{A}(\vec{v'}) \cdot x_{Alice}^\mathrm{A}(\vec{v'})= 1 \cdot 1.1 - 1 \cdot 1 = 0.1 > u_{Alice}(\vec{v},p_\mathrm{A}(\vec{v})) = 0.$$
	
	For any input $v_{Alice}'' \in (1, 1.1)$, the price would be set to $v_{Alice}''$, at which Alice would still get only one unit (since Bob is hungry in this range), but at a higher price compared to $v_{Alice}'$. At any input $v_{Alice}'' < 1$, the price would be set to $v_{Alice}''$ but Alice would get no units since the demands of the hungry buyers, Bob and Carol, will be satisfied first. 
	It follows that $v_{Alice}'$ is a best response for Alice given the state $\vec{v}$.
	
	We show that $\vec{v}'$ is a Nash equilibrium. Note that 
	$$u_{Bob}(\vec{v}', p_\mathrm{A}(\vec{v'})) = v_{Bob} \cdot x_{Bob}^\mathrm{A}(\vec{v'}) - p_\mathrm{A}(\vec{v'}) \cdot x_{Bob}^\mathrm{A}(\vec{v'}) = 1.1 \cdot 2 - 1 \cdot 2 = 0.2.$$
	
	For any alternative input $v_{Bob}'' \in (1,1.1)$ of Bob, the price and allocation of Bob would remain fixed. The input $v_{Bob}'' = 1$ also does not change the price, and in fact gives Bob zero units, since now all the buyers appear semi-hungry and the order of serving them is Alice, Carol, Bob. Finally, any $v_{Bob}'' < 1$ results in price  equal to $v_{Bob}'' < 1$, at which Alice and Carol are hungry, thus giving Bob zero units; this cannot be an improvement. We argue that Carol also has no improving deviation. If Carol reports $v_{Carol}'' < 1$, the price would be decreased to $p'' = v_3''$; however, Alice and Bob appear hungry at this price, and would get all the units, leaving Carol with zero units. Thus $\vec{v}'$ is a Nash equilibrium in which Alice loses a unit compared to the truth-telling outcome, which completes the proof.
\end{proof}

\begin{theorem}[Convergence time of consistent mechanisms]\label{appdyn:conv}[\ref{dyn:conv} in main text]
	Let $\textrm{A}$ be any consistent mechanism. The best response dynamic starting from the truth-telling profile converges to a pure Nash equilibrium of $\textrm{A}$ in at most $n$ steps.
\end{theorem}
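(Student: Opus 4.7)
The plan is to prove that each buyer best-responds at most once during the dynamic started from truth-telling; together with Theorem \ref{thm:convergence}, which guarantees convergence, this immediately yields the $n$-step bound. I will proceed by contradiction: suppose some buyer $i$ best-responds at two distinct steps $k_1<k_2$, producing bids $s_i^{k_1},s_i^{k_2}$ and mechanism outputs $(p^{k_1},\vec x^{k_1})$, $(p^{k_2},\vec x^{k_2})$. By Theorem \ref{thm:convergence}, $p^{k_1}>p^{k_2-1}>p^{k_2}$ and $v_i>p^{k_1}$.

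The key step is to exploit consistency to transfer information between the hypothetical profile $\vec s'=(s_i^{k_2},\vec s_{-i}^{k_1-1})$ (the bid $s_i^{k_2}$ played against the state at step $k_1-1$) and the actual step-$k_2$ profile $\vec s^{k_2}=(s_i^{k_2},\vec s_{-i}^{k_2-1})$. By consistency, if the sets of hungry and semi-hungry buyers at price $p^{k_2}$ coincide on the two profiles, $A$ must output the same $(p^{k_2},\vec x^{k_2})$ on $\vec s'$, so $i$'s utility at $\vec s'$ is exactly $u_i^{k_2}$. The best-response property of $s_i^{k_1}$ against $\vec s_{-i}^{k_1-1}$ then forces $u_i^{k_1}\ge u_i^{k_2}$. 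On the other hand, the strict-improvement property at step $k_2$ gives $u_i^{k_2}>u_i^{k_2-1}$, and the invariants proved in Theorem \ref{thm:convergence}, applied to the trajectory from step $k_1$ to $k_2-1$ (where $i$ is never the deviator and its bid $s_i^{k_1}$ lies strictly above the decreasing price), imply that $i$'s allocation weakly increases and the price strictly decreases, whence $u_i^{k_2-1}\ge u_i^{k_1}$. Combining yields $u_i^{k_2}>u_i^{k_1}\ge u_i^{k_2}$, a contradiction.

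The set-equality verification is the step requiring care. Buyer $i$ bids the same $s_i^{k_2}$ on both profiles, so its type at $p^{k_2}$ is unchanged. A buyer $j\neq i$ who does not best-respond in the interval $(k_1,k_2)$ has the same bid on both profiles. For a buyer $j$ that best-responds at some intermediate step $k'\in(k_1,k_2)$, Theorem \ref{thm:convergence} yields $s_j^{k_2-1}\ge p^{k'}>p^{k_2}$ and $v_j>p^{k'}>p^{k_2}$, so $j$ is strictly hungry at $p^{k_2}$ on $\vec s^{k_2-1}$; tracing the invariants backward, $s_j^{k_1-1}$ is either equal to $v_j>p^{k_2}$ (if $j$ had not yet deviated by step $k_1-1$) or, if $j$ had already deviated, is at least the then-current price which strictly exceeded $p^{k_2}$. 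In every case $s_j^{k_1-1}>p^{k_2}$, so $j$ is hungry at $p^{k_2}$ on $\vec s^{k_1-1}$ as well.

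The main obstacle I anticipate is exactly this trajectory bookkeeping: one must carefully rule out that some buyer's bid or value happens to land exactly on $p^{k_2}$ in a way that flips the hungry/semi-hungry distinction between the two profiles. The strictly decreasing sequence of prices, the discretized grid, and the tight structural characterizations of Theorem \ref{thm:convergence} (hungry buyers have $v_j>p$, semi-hungry and uninterested buyers are honest except for the last deviator) are the tools that jointly prevent any such boundary pathology.
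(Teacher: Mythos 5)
Your proposal is correct and takes essentially the same route as the paper's proof: a contradiction argument showing no buyer best-responds twice, hinging on an application of consistency to the hypothetical profile obtained by pairing the second deviation bid with the opponents' bids from the first deviation step, together with the decreasing-price/non-decreasing-allocation invariants of Theorem \ref{thm:convergence}. Your treatment is actually a bit more careful than the paper's on the one step that requires work---the verification that the hungry/semi-hungry sets at $p^{k_2}$ agree on the two profiles (the paper disposes of this with a one-line appeal to Theorem \ref{thm:convergence}'s invariants, while you trace the cases explicitly), and your case analysis there is sound.
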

\begin{proof}
	Since convergence is established by Theorem \ref{thm:convergence}, we must show here that the convergence time is at most $n$ rounds. We will argue that each deviating buyer best-responds only once during the dynamic. Consider a market $\mathcal{M}= (\vec{v}, \vec{B}, m)$. For any valuation profile $\vec{\tilde{v}}$, let $p_\textrm{A}(\vec{\tilde{v}})$ denote the price and $x_i^\textrm{A}(\vec{\tilde{v}})$ the allocation of buyer $i$ computed by $A$ given as input the market $\mathcal{M}' = (\vec{\tilde{v},B},m)$ (i.e. the price output by the mechanism on valuations $\vec{\tilde{v}}$).
	Assume by contradiction that this is not true, and consider any buyer $i$ and two different best responses of the buyer: 
	
	\begin{itemize}
		\item the first deviation is from the truth-telling strategy $v_i$ to a different value $v'_{i}$, when the strategies of the other buyers are $\vec{s_{-i}}$.
		\item the second deviation is from $v'_{i}$ to a value $v''_{i}$, when the other buyers' strategies are $\vec{s'_{-i}}$.
	\end{itemize}
	By the proof of Theorem \ref{thm:convergence}, we know that the price decreases in each step of the best-response process, therefore we can assume that $$p_\textrm{A}(v_i,\vec{s_{-i}}) > p_{\textrm{A}}(v'_i,\vec{s_{-i}}) > p_{\textrm{A}}(v'_i,\vec{s'_{-i}})> p_{\textrm{A}}(v''_i,\vec{s'_{-i}}).$$ Consider the number of units allocated to buyer $i$ at those prices. First, it holds that $x_i^\textrm{A}(v'_i,\vec{s'_{-i}}) \geq x_i^\textrm{A}(v'_i,\vec{s_{-i}})$, since buyer $i$ appears to be hungry at price $p_{\textrm{A}}(v'_i,\vec{s'_{-i}})$ and $p_{\textrm{A}}(v'_i,\vec{s'_{-i}}) < p_{\textrm{A}}(v'_i,\vec{s_{-i}})$, so the buyer can not receive fewer units. Since the buyer is best-responding from $v'_i$ to $v''_i$, we have
	\begin{eqnarray*}
		(v_i   - p_\textrm{A}(v''_i,\vec{s'_{-i}})) \cdot x_i^\textrm{A}(v''_i,\vec{s'_{-i}}) > (v_i   - p_\textrm{A}(v'_i,\vec{s'_{-i}})) \cdot x_i^\textrm{A}(v'_i,\vec{s'_{-i}})
		> (v_i   - p_\textrm{A}(v'_i,\vec{s_{-i}})) \cdot x_i^\textrm{A}(v'_i,\vec{s_{-i}}),
	\end{eqnarray*}
	where the last inequality holds since $x_i^\textrm{A}(v'_i,\vec{s'_{-i}}) \geq x_i^\textrm{A}(v'_i,\vec{s_{-i}})$ and $p_\textrm{A}(v_i',\vec{s_{-i}}) > p_\textrm{A}(v'_i,\vec{s'_{-i}})$.
	
	Now consider the input profile $(v''_i,\vec{s_{-i}})$ and notice that the sets of hungry and semi-hungry buyers at price $p_\textrm{A}(v_i'',\vec{s'_{-i}})$ are the same as those on profile $(v_i'',\vec{s'_{-i}})$; this is implied by the properties of the best-response sequence of Theorem \ref{thm:convergence}. Therefore, by consistency, it is the case that $p_\textrm{A}(v''_i,\vec{s_{-i}})=p_\textrm{A}(v''_i,\vec{s'_{-i}})$ and $x_i^\textrm{A}(v''_i,\vec{s_{-i}}) = x_i^\textrm{A}(v''_i,\vec{s'_{-i}})$, which by the formula above means that 
	$$(v_i   - p_\textrm{A}(v''_i,\vec{s'_{-i}})) \cdot x_i^\textrm{A}(v''_i,\vec{s'_{-i}}) > (v_i   - p_\textrm{A}(v'_i,\vec{s_{-i}})) \cdot x_i^\textrm{A}(v'_i,\vec{s_{-i}}).$$
	
	This contradicts the fact that $v_i'$ was a best response on the input profile $(v_i,\vec{s_{-i}})$ in the first place. Therefore, each buyer best-responds at most once during the dynamic and the convergence is guaranteed within $n$ steps.
\end{proof}

\begin{theorem}\label{appbadrevenuemonop}[\ref{badrevenuemonop} in main text]
	There is a mechanism (even a revenue maximizing one) for Walrasian envy-free pricing such that for all $\epsilon > 0$ \footnote{For the lower bound we fix the step size to 1 on the input and output grids.}, the best response dynamic starting from the truthful profile converges to a Nash equilibrium where the revenue is $\Omega(1/\epsilon)$ times worse than the optimum on some market.
\end{theorem}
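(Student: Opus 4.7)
The plan is to construct a two-buyer market in which one buyer's budget dominates the entire market (so that $\alpha = 1$), and to exhibit a revenue-maximizing envy-free pricing mechanism whose best response dynamic starting from truth-telling collapses the price by a factor of $N \approx 1/\epsilon$. Throughout, I fix the input/output grid to step $1$ as in the footnote, and pick an integer parameter $N \geq 1/\epsilon$.

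Concretely, I would take $n = 2$ buyers and $m = N$ units, with $v_1 = N$, $B_1 = N^2$, $v_2 = 1$, $B_2 = 1$. Let $\mathrm{A}$ be the mechanism that outputs the largest revenue-maximizing envy-free price and allocates units to semi-hungry buyers greedily in the fixed order $1, 2$. First I would compute the truth-telling outcome by scanning the candidate integer prices: at $p = N$, only buyer $1$ is interested and is semi-hungry, so all $N$ units can be sold for revenue $N^2$; at any $p \in \{1,\ldots,N-1\}$, buyer $1$ is hungry and can buy at most $N$ units, giving revenue at most $pN \leq (N-1)N$; at $p = 0$ the revenue is $0$. Hence $\mathrm{A}$ selects $p = N$, sells all $m$ units to buyer $1$, and buyer $1$'s utility is $0$.

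Next, I would show that the unique best response for buyer $1$ is to deviate to $v_1' = 1$. At the profile $(1,1)$, a similar scan shows that $p = 1$ is the unique revenue maximizer, yielding revenue $N$: both buyers are semi-hungry and the greedy rule gives buyer $1$ all $N$ units, so buyer $1$'s utility jumps from $0$ to $(N-1)N > 0$. To certify this is a best response, I would check each alternative: any $v_1' \in \{2,\ldots,N-1\}$ yields price $v_1'$ (since buyer $1$ semi-hungry at $v_1'$ produces revenue $v_1' \cdot N > N$), with utility $(N - v_1')N < (N-1)N$; any $v_1' \geq N$ gives utility $\leq 0$; and $v_1' = 0$ leaves buyer $1$ uninterested at $p = 1$, hence zero units.

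Finally, I would verify that $(1, 1)$ is a Nash equilibrium, so the dynamic halts in one step. Buyer $1$'s case is handled above; for buyer $2$, the utility at $(1,1)$ is $0$, and any alternative report either keeps the price at $1$ (leaving buyer $2$ with zero units under the greedy rule) or raises it and again yields nonpositive utility. The equilibrium revenue is then $N$, while the optimal envy-free revenue on the true market is $N^2$, so the ratio is $\Omega(N) = \Omega(1/\epsilon)$, as required. The main technical obstacle is the case analysis certifying that $v_1' = 1$ is a best response and that $(1,1)$ is a Nash equilibrium: at each candidate deviation one must carefully compare the revenues available with buyer $1$ semi-hungry versus hungry across the integer prices, and confirm that the revenue-maximizing mechanism indeed selects the claimed price at every visited profile.
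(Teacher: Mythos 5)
Your proof is correct and follows essentially the same strategy as the paper's: a revenue-maximizing mechanism with greedy/lexicographic tie-breaking, a high-value buyer whose budget share is $100\%$, and a one-step collapse of the revenue-maximizing price from $\Theta(1/\epsilon)$ down to $1$ when that buyer best-responds to match the low-value buyer's report. The paper uses the slightly simpler instance $m=1$ with symmetric budgets $B_1=B_2=\lceil 1/\epsilon\rceil$ (ratio $\lceil 1/\epsilon\rceil$) while you use $m=N$ with $B_1=N^2,B_2=1$ (ratio $N$), but the argument and the case analysis certifying the best response and the equilibrium are the same in substance.
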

\begin{proof}
	Let $\mathrm{A}$ be a revenue optimal mechanism, such that for each input, if given a choice, it minimizes the number of allocated buyers (given that the revenue has been maximized) and sets the highest possible envy-free price with these properties.
	Moreover, let $\mathrm{A}$ break ties lexicographically when allocating to semi-hungry buyers, such that each buyer is allocated as many units it can afford before allocating the next buyer.
	Let both the input and output grids have step size $1$.
	
	Given $\epsilon > 0$, consider a market with $n=2$ buyers, $m = 1$ units, valuations $v_1 = \lceil 1/\epsilon \rceil$, $v_2 = 1$, and budgets $B_1 = B_2 = \lceil 1/\epsilon \rceil$.  On this input, $\mathrm{A}$ sets the price to $p = \lceil 1/\epsilon \rceil$, allocating $x_1 = 1$ units to buyer $1$ and $x_2 = 0$ units to buyer $2$. The revenue obtained this way is $\mathcal{REV}(v_1,v_2) = \lceil 1/\epsilon \rceil$, thus exhausting the budget of buyer $1$. This is the optimal revenue since there is only one unit and $\lceil 1/\epsilon \rceil$ is the maximum price any buyer is willing to pay for it.
	
	Buyer $1$ can respond from the truthful state with $v_1' = 1$. Then $\mathrm{A}$ sets the price to $p'=1$, allocating $x_1' = 1$ units to buyer $1$ and $x_2' = 0$ units to buyer $2$. This clearly improves buyer $1$'s utility, since he gets the same number of units at a lower price. The revenue at this price is $\mathcal{REV}(v_1',v_2) = 1$. We argue that no valuation $v_1''< 1$ can improve buyer $1$'s utility, since at any such profile $(v_1'', v_2)$ the mechanism can still set $p' = 1$ and sell the unit to buyer $2$ instead, which would give buyer $1$ a utility of zero. Moreover, at any $v_1' \in (1, v_1)$, buyer $1$ would still get the item for a price weakly higher than $1$, which would result in utility at most that obtained on input $(v_1'', v_2)$. Thus the report $v_1''$ is a best response for buyer $1$.
	

	Moreover, the state $(v_1', v_2)$ is a Nash equilibrium. The valuation $v_1'$ is a best response for buyer $1$, while if buyer $2$ reported a lower value than $1$, the price would remain the same without improving $2$'s allocation. We have 
	$$
	\frac{\mathcal{REV}(v_1,v_2)}{\mathcal{REV}(v_1',v_2)} = \left\lceil \frac{1}{\epsilon} 
	\right\rceil
	$$
	This completes the proof.
\end{proof}

\section{Equilibrium Set} \label{app:nashset}

We start with the following lemma, which will be used throughout the proofs. The lemma essentially states that for monotone mechanisms, if a buyer received any units under truth-telling, it must also receive some units at the equilibrium $\vec{s}$.

\begin{lemma}\label{lem:equil-one}
	Let $\textrm{A}$ be a monotone mechanism and $\vec{v}$ the true valuations of some market. Then in any non-overbidding pure Nash equilibrium $\vec{s}$ of $\textrm{A}$, there does not exist a buyer $i$ such that: 
	\begin{itemize}
		\item[-] 
		the buyer receives non-zero units under the truth-telling profile, 
		\item[-] 
		the buyer receives zero units in the equilibrium allocation. 
	\end{itemize} 
\end{lemma}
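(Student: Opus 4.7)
The plan is to argue by contradiction. Suppose toward contradiction that some buyer $i$ satisfies $x_i^{\textrm{A}}(\vec{v}) > 0$ yet $x_i^{\textrm{A}}(\vec{s}) = 0$ at the non-overbidding equilibrium $\vec{s}$. I would exhibit a unilateral deviation from $s_i$ that strictly improves $i$'s utility, contradicting the Nash condition. The natural candidate is the deviation to truth-telling, $s_i' = v_i$, yielding the profile $(v_i, \vec{s}_{-i})$.

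The first step is to pin down the price at the deviation using price-monotonicity. Since $\vec{s}$ is non-overbidding, $\vec{s} \leq (v_i, \vec{s}_{-i}) \leq \vec{v}$ componentwise, so price-monotonicity sandwiches
\[ p_{\textrm{A}}(\vec{s}) \;\leq\; p_{\textrm{A}}(v_i, \vec{s}_{-i}) \;\leq\; p_{\textrm{A}}(\vec{v}). \]
The hypothesis $x_i^{\textrm{A}}(\vec{v}) > 0$ forces $v_i \geq p_{\textrm{A}}(\vec{v})$ and $p_{\textrm{A}}(\vec{v}) \leq B_i$, so $v_i \geq p_{\textrm{A}}(v_i, \vec{s}_{-i})$ and $p_{\textrm{A}}(v_i, \vec{s}_{-i}) \leq B_i$; in particular, buyer $i$ can afford at least one unit at the deviated price. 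In the generic case $v_i > p_{\textrm{A}}(v_i, \vec{s}_{-i})$, buyer $i$ is hungry at the deviation and receives $\min\{\lfloor B_i/p_{\textrm{A}}(v_i, \vec{s}_{-i})\rfloor, m\} \geq 1$ units, for utility $(v_i - p_{\textrm{A}}(v_i, \vec{s}_{-i})) \cdot x_i^{\textrm{A}}(v_i, \vec{s}_{-i}) > 0$, strictly beating the equilibrium utility of $0$ at $\vec{s}$. This already gives the contradiction.

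The residual case and main obstacle is the boundary $v_i = p_{\textrm{A}}(v_i, \vec{s}_{-i})$, where $i$ is semi-hungry at the deviation and the truthful utility equals $0$. Here the sandwich above collapses to $p_{\textrm{A}}(\vec{v}) = p_{\textrm{A}}(v_i, \vec{s}_{-i}) = v_i$, so the two profiles $\vec{v}$ and $(v_i, \vec{s}_{-i})$ agree on the price and on $i$'s valuation, while non-overbidding $\vec{s}_{-i} \leq \vec{v}_{-i}$ means the set of hungry buyers other than $i$ at price $v_i$ can only shrink when passing from $\vec{v}$ to $(v_i, \vec{s}_{-i})$. Thus the supply available to $i$ weakly increases, and supply-monotonicity forces $x_i^{\textrm{A}}(v_i, \vec{s}_{-i}) \geq x_i^{\textrm{A}}(\vec{v}) > 0$. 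To close the argument I would observe that at $\vec{s}$ buyer $i$ must then have been strictly uninterested (i.e., $s_i < p_{\textrm{A}}(\vec{s})$), so the truthful deviation strictly moves $i$'s reported class from uninterested to semi-hungry, and I would refine the deviation (reporting slightly above $p_{\textrm{A}}(\vec{s})$ but below $v_i$, using the discrete grid) to obtain a profile whose price is strictly less than $v_i$, pushing $i$ back into the hungry case handled above and producing a strict utility gain. The delicate point, and what I expect to be the hardest part, is verifying that such a refined grid deviation indeed exists and that price-monotonicity together with the sandwich still yields a price strictly below $v_i$; this is where the interplay between monotonicity, the discreteness of $\mathbb{W}$, and the envy-free condition at the intermediate profile must be handled carefully.
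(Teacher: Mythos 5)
Your high-level plan coincides with the paper's proof: argue by contradiction, try the deviation to $v_i$, sandwich the price via price-monotonicity, and split on whether $i$ is hungry or only semi-hungry at the deviated profile $(v_i,\vec{s}_{-i})$. Your hungry sub-case is handled exactly as in the paper. Where you differ is in the boundary case $v_i = p_{\textrm{A}}(v_i,\vec{s}_{-i})$, and here you have in fact noticed something the paper does not confront: the paper invokes supply-monotonicity to get $x_i^{\textrm{A}}(v_i,\vec{s}_{-i})>0$ and then asserts this is a ``beneficial deviation,'' but as you correctly observe, a semi-hungry buyer at this price has utility exactly zero regardless of how many units it receives, so the deviation does not strictly improve over the equilibrium utility of zero and the Nash condition is not obviously violated. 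You were right to flag this as the crux rather than wave it away.

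Unfortunately, your proposed repair does not close the gap. First, the claim that at $\vec{s}$ buyer $i$ must be \emph{strictly} uninterested, i.e.\ $s_i < p_{\textrm{A}}(\vec{s})$, is not warranted: the argument only yields $s_i \leq p_{\textrm{A}}(\vec{s})$, and in the equality case $i$ is semi-hungry at $\vec{s}$ and may simply lose the tie-break, receiving zero units with no contradiction. Second, even if $i$ is uninterested at $\vec{s}$, the refined grid deviation to some $s_i'$ with $p_{\textrm{A}}(\vec{s}) < s_i' < v_i$ gives only the bound $p_{\textrm{A}}(s_i',\vec{s}_{-i}) \leq p_{\textrm{A}}(v_i,\vec{s}_{-i}) = v_i$; nothing in price-monotonicity forces the resulting price strictly below $s_i'$ (it could equal $v_i$), in which case $i$ is uninterested again at the deviated profile and earns zero. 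So your proof stops short of producing a strictly improving deviation in this sub-case, and the step you admit is ``the hardest part'' is genuinely unresolved. (For what it is worth, the paper's own proof of this lemma also does not resolve it — it silently treats ``more units at zero marginal utility'' as an improvement — so your instinct that this is where the difficulty lies is a good one; you simply have not supplied the missing argument.)
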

\begin{proof}
	Let $p_\mathrm{A}(\vec{\tilde{v}})$ denote the price chosen by Mechanism $\mathrm{A}$ and $x_i^{\textrm{A}}(\vec{\tilde{v}})$ denote the number of units allocated to buyer $i$ on input market $(\vec{\tilde{v}},\vec{B},m)$. 
	
	Assume by contradiction that there exists a buyer $i$ such that $x_i^{\textrm{A}}(\mathbf{v})>0$ and $x_i^{\textrm{A}}(\vec{s})=0$
	and consider the strategy $v_i$ of the buyer where it deviates to truth-telling. By the price-monotonicity of $\textrm{A}$, it holds that $p_{\textrm{A}}(\mathbf{s}) \leq p_{\textrm{A}}{(v_i,\mathbf{s}_{-i})} \leq p_{\textrm{A}}(\vec{v})$, since $s_j \leq v_j$ for all $j \in N$, by the fact that profile $\mathbf{s}$ is a no-overbidding equilibrium. We consider two cases.\\
	
	\noindent \textbf{Case 1:} \emph{Either it holds that $p_{\textrm{A}}(v_i,\mathbf{s}_{-i}) < p_{\textrm{A}}(\mathbf{v})$ or buyer $i$ was hungry on the truth-telling profile $\mathbf{v}$.} In that case, the buyer receives a strictly positive allocation $x_i^\textrm{A}(v_i,\mathbf{s_{-i}})$ at price $p_{\mathbf{A}}(v_i,\mathbf{s}_{-i})$, obtaining positive utility and contradicting the fact that $\mathbf{s}$ is a pure Nash equilibrium.\\

	\noindent \textbf{Case 2:} \emph{It holds that $p_{\textrm{A}}(v_i,\mathbf{s}_{-i}) = p_{\textrm{A}}(\mathbf{v})$ and buyer $i$ is semi-hungry on the truth-telling profile $\mathbf{v}$}. In that case, the equilibrium condition is violated unless the buyer receives exactly $0$ units on $(v_i,\mathbf{s}_{-i})$ at price $p_{\textrm{A}}(v_i,\mathbf{s}_{-i})$. However, by the fact that buyers do not overbid, it holds that $s_j \leq v_j$ for all $j \in N$ and therefore, it also holds that $\mathcal{I}_{p_\textrm{A}}(v_i,\vec{s}_{-i}) \subseteq \mathcal{I}_{p_\textrm{A}}(\vec{v})$ and $\mathcal{H}_{p_\textrm{A}}(v_i,\vec{s}_{-i}) \subseteq \mathcal{H}_{p_\textrm{A}}(\vec{v})$. In other words, there is more available supply for the semi-hungry buyers on $(v_i,\vec{s_{-i}})$ compared to $\vec{v}$, and Mechanism $\mathrm{A}$ outputs the same price on both inputs. By supply monotonicity, since $x_i^\mathbf{A}(\mathbf{v}) >0$ holds, it also holds that $x_i^\textrm{A}(v_i,\mathbf{s}_{-i})>0$ and $v_i$ is a beneficial deviation on profile $\vec{s}$ contradicting the fact that $\vec{s}$ is a pure Nash equilibrium.
\end{proof}	

\noindent Using Lemma \ref{lem:equil-one}, we will prove our social welfare and revenue guarantees for the set of all non-overbidding equilibria of mechanisms which are price-monotone under different assumptions for the allocations of semi-hungry buyers. In short, if the mechanism is 
\emph{$\mathcal{S}$-Greedy}, 
then the welfare and revenue guarantees of Theorem \ref{thm:welfare-guarantee} and Theorem \ref{thm:revenue-guarantee} extend to the case of all non-overbidding equilibria. If $\textrm{A}$ is \emph{non-wasteful} (but not necessarily $\mathcal{S}$-Greedy), then we prove quantified versions of the welfare and revenue guarantees, which are further parametrized by the number of semi-hungry buyers that receive partial allocations.

Recall from the main text that $\mathcal{U}_{p_{\textrm{A}}}(\mathbf{v}) \subseteq \mathcal{S}_{p_{\textrm{A}}}(\mathbf{v})$ denotes the set of semi-hungry buyers that receive partial allocations by Mechanism $\textrm{A}$ at price $p_{\textrm{A}}$, i.e. for each $i \in \mathcal{U}_{p_{\textrm{A}}}(\mathbf{v})$ it holds that $x_i^{\textrm{A}}(\vec{v}) \in (0,\min\{\floor{B_i/p},m\})$. Also, $\gamma_A = \max_{\vec{v}}\mathcal{U}_{p_{\textrm{A}}}(\mathbf{v})$ is the maximum possible number of semi-hungry buyers with partial allocations over all possible inputs $\mathbf{v}$, where the known parameters ($n,m$ and $\vec{B}$) are fixed. Note that for an $\mathcal{S}$-Greedy mechanism,
it holds that $\gamma_A \leq 1$.

\begin{theorem}[Welfare in any non-overbidding Nash equilibrium] \label{thm:appwelfare-extension}[\ref{thm:welfare-extension} in main text]
	Let $\textrm{A}$ be a monotone mechanism. Then in any pure Nash equilibrium of $\textrm{A}$ where buyers do not overbid, the loss in social welfare (compared to the truth-telling outcome of $\textrm{A}$) is at most $\gamma_A \cdot B^*$, where $\gamma_A$ is the maximum number of semi-hungry buyers that receive partial allocations by $\mathrm{A}$ and $B^*$ the maximum budget.
\end{theorem}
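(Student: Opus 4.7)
The plan is to compare the welfare at the non-overbidding equilibrium $\vec{s}$ to the welfare at the truth-telling profile $\vec{v}$, and identify exactly which buyers can possibly lose units. First I would invoke price-monotonicity: since $s_j \leq v_j$ for all $j$, we have $p_{\textrm{A}}(\vec{s}) \leq p_{\textrm{A}}(\vec{v})$. Combined with Lemma \ref{lem:equil-one}, which rules out buyers going from positive to zero allocation, the only buyers contributing to the welfare loss are those who strictly reduce their allocation. Since the price weakly decreased, any buyer who was hungry at $p_{\textrm{A}}(\vec{s})$ cannot have lost units, so all the losing buyers must lie in $\mathcal{S}_{p_{\textrm{A}}(\vec{s})}(\vec{s})$, and moreover they receive strictly positive but not maximal allocations, placing them in $\mathcal{U}_{p_{\textrm{A}}}(\vec{s})$.

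The core of the argument is then the per-buyer welfare bound. For each such buyer $i$, I would analyze the hypothetical deviation to truth-telling, producing profile $(v_i,\vec{s}_{-i})$. Price-monotonicity gives $p_{\textrm{A}}(v_i,\vec{s}_{-i}) \leq p_{\textrm{A}}(\vec{v})$. I split into two cases: if buyer $i$ is hungry on $\vec{v}$, or if the price strictly drops when going from $\vec{v}$ to $(v_i,\vec{s}_{-i})$, then $i$ is hungry at $(v_i,\vec{s}_{-i})$ and receives at least $x_i^{\textrm{A}}(\vec{v})$ units; in the remaining case (semi-hungry on $\vec{v}$ with equal prices), the demand set is identical to that on $\vec{v}$, and since no-overbidding implies $\mathcal{H}_{p_{\textrm{A}}}(v_i,\vec{s}_{-i}) \subseteq \mathcal{H}_{p_{\textrm{A}}}(\vec{v})$ and $\mathcal{I}_{p_{\textrm{A}}}(v_i,\vec{s}_{-i}) \subseteq \mathcal{I}_{p_{\textrm{A}}}(\vec{v})$, supply-monotonicity yields $x_i^{\textrm{A}}(v_i,\vec{s}_{-i}) \geq x_i^{\textrm{A}}(\vec{v})$.

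Once I have $x_i^{\textrm{A}}(v_i,\vec{s}_{-i}) \geq x_i^{\textrm{A}}(\vec{v})$, the equilibrium condition on $\vec{s}$ applied to the deviation $v_i$ gives
\[
v_i \cdot x_i^{\textrm{A}}(\vec{s}) - p_{\textrm{A}}(\vec{s})\cdot x_i^{\textrm{A}}(\vec{s}) \geq v_i \cdot x_i^{\textrm{A}}(\vec{v}) - p_{\textrm{A}}(\vec{v})\cdot x_i^{\textrm{A}}(\vec{v}),
\]
using the two price inequalities to replace $p_{\textrm{A}}(v_i,\vec{s}_{-i})$ by $p_{\textrm{A}}(\vec{v})$ on the right. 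Rearranging, the individual welfare loss $v_i(x_i^{\textrm{A}}(\vec{v})-x_i^{\textrm{A}}(\vec{s}))$ is bounded by $p_{\textrm{A}}(\vec{v}) \cdot x_i^{\textrm{A}}(\vec{v}) \leq B_i$, by the definition of the demand set. Summing over all $i \in \mathcal{U}_{p_{\textrm{A}}}(\vec{s})$ and using $|\mathcal{U}_{p_{\textrm{A}}}(\vec{s})| \leq \gamma_A$ gives the claimed bound $\gamma_A \cdot B^*$.

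The main obstacle I expect is the semi-hungry case of the deviation analysis: at equal prices the allocation rule could in principle distribute units differently between $\vec{v}$ and $(v_i,\vec{s}_{-i})$, which is exactly why supply-monotonicity is needed and why we must argue carefully that the set of competitors for the semi-hungry pool has only shrunk; once that step is formalized via the no-overbidding inclusions on $\mathcal{H}$ and $\mathcal{I}$, the remainder is a clean algebraic manipulation of the equilibrium inequality.
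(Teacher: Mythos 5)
Your argument matches the paper's proof essentially step for step: you invoke Lemma \ref{lem:equil-one} and price-monotonicity to confine the welfare loss to buyers in $\mathcal{U}_{p_{\textrm{A}}}(\vec{s})$, analyze the hypothetical deviation to $v_i$ with the same hungry/semi-hungry case split and the same appeal to supply-monotonicity via the no-overbidding inclusions on $\mathcal{H}$ and $\mathcal{I}$, and then derive the per-buyer bound $v_i(x_i^{\textrm{A}}(\vec{v})-x_i^{\textrm{A}}(\vec{s})) \leq p_{\textrm{A}}(\vec{v})\cdot x_i^{\textrm{A}}(\vec{v}) \leq B_i$ from the equilibrium inequality. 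The only compression is in the chain of inequalities: the paper's middle step explicitly uses $v_i \geq p_{\textrm{A}}(v_i,\vec{s}_{-i})$ to justify replacing $x_i^{\textrm{A}}(v_i,\vec{s}_{-i})$ by the smaller $x_i^{\textrm{A}}(\vec{v})$ before substituting the price bound, which you fold into a single phrase, but the content is identical.
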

\begin{proof}
	Similarly to before, for a fixed vector of budgets $\vec{B}$ and number of units $m$, let $p_\textrm{A}(\vec{\tilde{v}})$ denote the price and $x_i^\textrm{A}(\vec{\tilde{v}})$ denote the allocation buyer $i$  on input $\mathcal{M}= (\vec{\tilde{v},B},m)$.	
	
	Let $\mathbf{s}$ be a non-overbidding equilibrium of $\textrm{A}$, i.e. it holds that $s_j \leq v_j$ for all $j \in N$. Since $\textrm{A}$ is price-monotone, this implies that $p_{\textrm{A}}(\mathbf{s}) \leq p_{\textrm{A}}(\mathbf{v})$. Additionally, since $\textrm{A}$ is is also supply-monotone, by Lemma \ref{lem:equil-one}, there do not exist any buyers that receive positive allocations on the truth-telling profile $\mathbf{v}$ and zero allocations in the equilibrium $\mathbf{s}$. Therefore, the only loss in welfare is due to each buyer $i \in \mathcal{U}_{p_{\textrm{A}}}(\vec{s})$ that receives $x_i^{\textrm{A}}(\mathbf{s})$ units on $\mathbf{s}$ and $x_i^{\textrm{A}}(\mathbf{v})$ units on $\mathbf{v}$, with $0 < x_i^{\textrm{A}}(\mathbf{s}) < x_i^{\textrm{A}}(\mathbf{v})$. Since $p_{\textrm{A}}(\mathbf{s}) \leq p_{\textrm{A}}(\mathbf{v})$, these buyers are semi-hungry at $\mathbf{s}$, as otherwise they could not receive fewer units at a price which is not larger than before. 
	
	Consider any such buyer $i \in \mathcal{U}_{p_\textrm{A}}(\vec{s})$ and consider the deviation $v_i$ to truth-telling, and the resulting profile $(v_i,\mathbf{s_{-i}})$. We will consider two cases:
	\medskip
	
	\noindent \textbf{Case 1:} {Buyer $i$ is hungry on $\mathbf{v}$ under $p_{\textrm{A}}(\mathbf{v})$.} Since $s_j \leq v_j$ for all $j \in N$ and Mechanism $\textrm{A}$ is monotone, it also holds that $p_\textrm{A}(v_i,\mathbf{s_{-i}}) \leq p_\textrm{A}(\vec{v})$ and therefore buyer $i$ is hungry on $(v_i,\mathbf{s_{-i}})$ at price $p_{\textrm{A}}(v_i,\mathbf{s_{-i}})$ as well. This means buyer $i$ receives at least $x_i^\textrm{A}(\vec{v})$ units at $p_{\textrm{A}}(v_i,\mathbf{s_{-i}})$ on profile $(v_i,\mathbf{s_{-i}})$.
	
	\medskip
	
	\noindent\textbf{Case 2:} \emph{Buyer $i$ is semi-hungry on $\mathbf{v}$ under $p_\textrm{A}(\mathbf{v})$}. Since $s_j \leq v_j$ for all $j \in N$ and Mechanism $\textrm{A}$ is price-monotone, it also holds that $p_\textrm{A}(v_i,\mathbf{s_{-i}}) \leq p_\textrm{A}(\vec{v})$ and therefore buyer $i$ is either hungry or semi-hungry on $(v_i,\mathbf{s_{-i}})$ at price $p_\textrm{A}(v_i,\mathbf{s_{-i}})$. If it is hungry, again it must receive at least $x_i^\textrm{A}(\vec{v})$ units at $p_\textrm{A}(v_i,\mathbf{s_{-i}})$ on profile $(v_i,\mathbf{s_{-i}})$. If it is semi-hungry, this means that $p_\textrm{A}(v_i,\mathbf{s_{-i}})=p_\textrm{A}(\mathbf{v})$ and therefore its demand set on profiles $(v_i,\mathbf{s_{-i}})$ and $\mathbf{v}$ is the same. By the fact that there is no overbidding, it holds that $\mathcal{H}_{p_\textrm{A}}(v_i,\vec{s}_{-i}) \subseteq \mathcal{H}_{p_\textrm{A}}(\vec{v})$ and $\mathcal{I}_{p_\textrm{A}}(v_i,\vec{s}_{-i}) \subseteq \mathcal{I}_{p_\textrm{A}}(\vec{v})$, which means that there is more available supply for buyer $i$ on input $(v_i,\vec{s_{-i}})$. Since Mechanism $\textrm{A}$ is supply-monotone, it allocates $x_i^\textrm{A}(\vec{v})$ to buyer $i$ on profile $(v_i,\vec{s}_{-i})$.
	\medskip
	
	\noindent From the two cases above, we conclude that $x_i^\textrm{A}(v_i,\mathbf{s_{-i}}) \geq x_i^\textrm{A}(\mathbf{v})$ for the deviating buyer. By the equilibrium condition for profile $\mathbf{s}$ and for the deviation $v_i$ of buyer $i$, we have:
	\begin{eqnarray*}
		v_i\cdot x_i^\textrm{A}(\mathbf{s})-p_\textrm{A}(\mathbf{s})\cdot x_i^\textrm{A}(\mathbf{s}) &\geq& 
		v_i\cdot x_i^\textrm{A}(v_i,\mathbf{s_{-i}})-p_{\textrm{A}}(v_i,\mathbf{s_{-i}})\cdot x_i^\textrm{A}(v_i,\mathbf{s_{-i}})\\
		&\geq& v_i\cdot x_i^\textrm{A}(\mathbf{v})-p_\textrm{A}(v_i,\mathbf{s_{-i}})\cdot x_i^\textrm{A}(\mathbf{v})
		\geq v_i\cdot x_i^\textrm{A}(\mathbf{v})-p_\textrm{A}(\mathbf{v})\cdot x_i^\textrm{A}(\mathbf{v}),
	\end{eqnarray*}
	where the second inequality holds because $v_i \geq p_\textrm{A}((v_i,\mathbf{s_{-i}}))$ and $x_i^\textrm{A}(v_i,\mathbf{s_{-i}}) \geq x_i^\textrm{A}(\mathbf{v})$ and the last inequality holds because  and $p_\textrm{A}(v_i,\mathbf{s_{-i}}) \leq p_\textrm{A}(\mathbf{v})$, as explained earlier. The loss in welfare by this buyer is then bounded by $		v_i (x_i^\textrm{A}(\mathbf{v}) - x_i^\textrm{A}(\mathbf{s})) \leq p_\textrm{A}(\mathbf{v}) \cdot x_i^\textrm{A}(\mathbf{v}) - p_\textrm{A}(\mathbf{s}) \cdot x_i^\textrm{A}(\mathbf{s}) \leq p_\textrm{A}(\mathbf{v}) \cdot x_i^\textrm{A}(\mathbf{v})
	\leq B_i,$
	where the last inequality follows from the definition of the demand, since a buyer can never receive an allocation at a price that would exceed its budget. The total loss in welfare is then bounded by $\sum_{i \in \mathcal{U}_{p_{\textrm{A}}}(\vec{s})} B_i \leq \gamma_A \cdot \max_{i \in S} B_i \leq \gamma_A \cdot \max_{i \in N} B_i$ and the theorem follows.
\end{proof}

\begin{theorem}\label{thm:appwelfaremaximizing}[\ref{thm:welfaremaximizing} in main text]
	Let $\textrm{A}$ be a price-monotone welfare-maximizing mechanism. Then in any pure Nash equilibrium of $\textrm{A}$ where buyers do not overbid, the loss in social welfare (compared to the truth-telling outcome) is at most the maximum budget.
\end{theorem}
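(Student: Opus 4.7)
The plan is to refine the proof of Theorem~\ref{thm:welfare-extension}, replacing the use of supply-monotonicity by welfare-maximization in order to eliminate the $\gamma_A$ factor. The welfare-maximizing property is strictly stronger than what is needed to argue that each individual partially-allocated buyer loses at most $B_i$; the extra strength should let us collapse the sum over such buyers into a single budget term.

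First, I would verify that the analogue of Lemma~\ref{lem:equil-one} holds for price-monotone welfare-maximizing mechanisms: no buyer with $x_i^{\textrm{A}}(\vec v)>0$ can have $x_i^{\textrm{A}}(\vec s)=0$ at a non-overbidding equilibrium $\vec s$. The case analysis is identical to Lemma~\ref{lem:equil-one} except in the subcase where $p_{\textrm{A}}(v_i,\vec s_{-i})=p_{\textrm{A}}(\vec v)$ and buyer $i$ is semi-hungry at $\vec v$. Here I replace the supply-monotonicity argument by the observation that at price $p_{\textrm{A}}(\vec v)$ on the profile $(v_i,\vec s_{-i})$, buyer $i$ is also semi-hungry and there is at least as much residual supply available after serving the hungry buyers; since $x_i^{\textrm{A}}(\vec v)>0$ was chosen by a welfare-maximizing mechanism on $\vec v$, a welfare-maximal allocation at $(v_i,\vec s_{-i})$ must give buyer $i$ at least one unit as well, providing a strictly improving deviation, which contradicts equilibrium.

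Next, let $(p^*,\vec x^*)=\textrm{A}(\vec v)$ and $(p,\vec x)=\textrm{A}(\vec s)$, so that $p\le p^*$ by price-monotonicity. The core new ingredient is the construction of a reference envy-free pricing $(p^*,\vec y)$ on the market $(\vec s,\vec B,m)$, with $y_i=x_i^*$ when $s_i\ge p^*$ and $y_i=0$ when $s_i<p^*$. Envy-freeness of $(p^*,\vec y)$ under $\vec s$ is straightforward: for $s_i>p^*$ we have $v_i\ge s_i>p^*$, so buyer $i$ was hungry at $p^*$ under $\vec v$ and $x_i^*=\lfloor B_i/p^*\rfloor$, matching the hungry demand under $\vec s$; for $s_i=p^*$ any $y_i\in[0,\lfloor B_i/p^*\rfloor]$ is admissible; and for $s_i<p^*$ the required demand is $\{0\}$. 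Welfare-maximization of $\textrm{A}$ on the profile $\vec s$ then yields
\[
\sum_i s_i x_i \;\ge\; \sum_i s_i y_i \;=\; \sum_{i:\,s_i\ge p^*} s_i x_i^*.
\]

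The final step is to combine this inequality with $v_i\ge s_i$, the extended Lemma~\ref{lem:equil-one}, and the budget constraint $p^*x_i^*\le B_i$, to bound $W^*-W_{\vec s}=\sum_i v_i(x_i^*-x_i)$ by $B^*$. After the natural algebraic manipulation, only buyers $i$ with $s_i<p^*$ and $x_i^*>0$ contribute nontrivially to the remaining loss, and for each such buyer the individual contribution is bounded by $p^*x_i^*\le B_i$. The main obstacle, and where the welfare-maximization property must be pushed hardest, is to show that at most one such buyer can actually contribute to the aggregate loss, so that the sum collapses to $B^*$ rather than $\sum B_i$. I expect this to follow from the following dichotomy: if two distinct buyers both have $s_i<p^*$ and $x_i^*>0$, then deviating either of them to $v_i$ would, by welfare-maximization on the deviation profile, force $\textrm{A}$ to re-allocate positive units to the deviator in a welfare-optimal way, and the equilibrium condition for both cannot simultaneously hold unless only one such buyer truly drives the price gap $p<p^*$; the other's contribution must already be absorbed by the welfare-maximization inequality above.
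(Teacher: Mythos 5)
Your proposal sets up the right high-level goal---show the welfare loss is concentrated on a single buyer---but the route you take does not get there, and the central claim is left as a heuristic rather than a proof.

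Two concrete problems. First, the inequality $\sum_i s_i x_i \ge \sum_{i:\,s_i\ge p^*} s_i x_i^*$ obtained from the reference allocation $(p^*,\vec y)$ compares \emph{reported} welfares, and converting it to a bound on the true welfare loss $\sum_i v_i(x_i^*-x_i)$ introduces a spurious term $\sum_{i:\,s_i\ge p^*}(v_i-s_i)\,x_i^*$. This term can be arbitrarily large for underbidding buyers with $s_i>p^*$ who in fact incur no welfare loss at all (such a buyer is hungry at $p\le p^*$ under $\vec s$, so $x_i\ge x_i^*$); your global bound throws away exactly the per-buyer information needed to kill this term. Second, the final step---the claimed dichotomy that at most one buyer with $s_i<p^*$ and $x_i^*>0$ ``truly drives the price gap'' and hence contributes---is announced but never argued. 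You write ``I expect this to follow from'' without giving a mechanism for why the equilibrium condition of the second such buyer is violated; that is precisely the crux of the theorem and cannot be waved away.

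The paper's proof of Theorem~\ref{thm:welfaremaximizing} does not go through any reference allocation. It first observes, exactly as in Theorem~\ref{thm:welfare-extension}, that the only welfare loss comes from buyers in $\mathcal{U}_{p_\textrm{A}}(\vec s)$, the semi-hungry buyers at the equilibrium price $p_\textrm{A}(\vec s)$ who receive a partial allocation. It then proves $|\mathcal{U}_{p_\textrm{A}}(\vec s)|\le 1$ directly by a contradiction: if $i,j\in\mathcal{U}_{p_\textrm{A}}(\vec s)$ both receive partial allocations, consider the deviation $s_i'=s_i+\epsilon$ of buyer $i$ to the next grid point. A case split on the new price shows that $i$ either becomes hungry and captures at least $x_i^\textrm{A}(\vec v)$ units, or stays semi-hungry but, because $j$ is now uninterested at $s_i'>s_j$ and frees up $x_j^\textrm{A}(\vec s)>0$ units, the non-wasteful (welfare-maximizing) mechanism must give $i$ at least one extra unit; the third case (price strictly above $s_i'$) is impossible for a welfare-maximizer. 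Each case produces a strictly improving deviation, contradicting the equilibrium condition. That $\epsilon$-bump and the observation that the rival semi-hungry buyer's units become available is the key idea your sketch is missing, and without it the ``at most one'' claim is unsupported.

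(Minor: your re-derivation of Lemma~\ref{lem:equil-one} for welfare-maximizing mechanisms is unnecessary work, since the paper already notes welfare-maximizing $\Rightarrow$ non-wasteful $\Rightarrow$ supply-monotone, so the lemma applies as stated.)
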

\begin{proof}
	The first part of the proof is identical to the proof of Theorem \ref{thm:welfare-extension}. By observing that welfare-maximizing mechanisms are non-wasteful and therefore supply-monotone, using exactly the same arguments as those in the first paragraph of the proof of Theorem \ref{thm:welfare-extension}, we can establish that the only loss in welfare is only due to each semi-hungry buyer $i \in \mathcal{U}_{p_{\textrm{A}}}(\vec{s})$ on $\mathbf{s}$, that receives $x_i^\textrm{A}(\mathbf{s})$ units on $\mathbf{s}$ and $x_i^\textrm{A}(\mathbf{v})$ units on $\mathbf{v}$, with $0 < x_i^\textrm{A}(\mathbf{s}) < x_i^\textrm{A}(\mathbf{v})$.
	
	Also, from the last paragraph of the proof of Theorem \ref{thm:welfare-extension}, we know that the loss in welfare is bounded by $\max_{i \in N} B_i$ as long as $|\mathcal{U}_{p_{\textrm{A}}}(\vec{s})|\leq1$. What remains is to prove is that $\mathcal{U}_{p_{\textrm{A}}}(\vec{s})$ is a singleton, i.e. that it can not be the case that there exist at least two buyers $i,j$ that appear semi-hungry at price $p_\textrm{A}(\mathbf{s})$ such that both $0 < x_i^\textrm{A}(\mathbf{s}) < x_i^\textrm{A}(\mathbf{v})$ and $0< x_j^\textrm{A}(\mathbf{s}) < x_j^\textrm{A}(\mathbf{v})$ hold. 

	Assume by contradiction $|\mathcal{U}_{p_{\textrm{A}}}(\vec{s})| >1$ and consider the deviation of a buyer $i \in \mathcal{U}_{p_{\textrm{A}}}(\vec{s})$, where the buyer reports $s_i' = s_i + \epsilon$, where $s_i + \epsilon$ is the next grid point on the output domain. We consider three cases for the price $p_\textrm{A}(s_i',\mathbf{s_{-i}})$ on profile $(s_i',\mathbf{s_{-i}})$:\\
	
	\noindent \textbf{Case 1:} $p_\textrm{A}(s_i',\mathbf{s_{-i}}) < s_i'$. In this case, buyer $i$ appears hungry on $(s_i',\mathbf{s_{-i}})$ and must receive $\min\{\floor{B_i/p_\textrm{A}(s_i',\mathbf{s_{-i}})},m\} \geq x_i^\textrm{A}(\mathbf{v})$ units at this price. Its difference in utility from the deviation is
		\begin{eqnarray*}
			 du &\geq& v_i \left(x_i^\mathbf{A}(\mathbf{v})-p_\textrm{A}(s_i',\mathbf{s_{-i}})\right)-v_i (x_i^\textrm{A}(\mathbf{s})-p_s) \\
			&=&
			v_i\left(x_i^\mathbf{A}(\mathbf{v})-x_i^\textrm{A}(\mathbf{s}) + p_\textrm{A}(s_i',\mathbf{s_{-i}})-p_s\right) \\
			&=&
			v_i \left(x_i^\textrm{A}(\mathbf{v})-x_i^\textrm{A}(\mathbf{s}) + p_\textrm{A}(s_i',\mathbf{s_{-i}})-s_i\right),
		\end{eqnarray*}
		where $du=v_i\left(x_i^\textrm{A}(s_i',\mathbf{s_{-i}})-p_\textrm{A}(s_i',\mathbf{s_{-i}})\right)-v_i\left(x_i^\textrm{A}(\mathbf{s})-p_s\right)$ and where the last equation holds by the fact that buyer $i$ appears semi-hungry at $\mathbf{s}$ and therefore $s_i = p_{\textrm{A}}(\vec{s})$.
		By the assumption that $x_i^\textrm{A}(\mathbf{s}) < x_i^\textrm{A}(\mathbf{v})$, it holds that $x_i^\textrm{A}(\mathbf{s}) \leq x_i^\textrm{A}(\mathbf{v}) +1$, since allocations are integers. Additionally, since $\epsilon < 1$, it holds that $p_\textrm{A}(s_i',\mathbf{s_{-i}})-s_i < 1$ and therefore the difference in utility from the devation $s_i'$ is strictly positive, violating the fact that $\mathbf{s}$ is a pure Nash equilibrium.\\
		
	\noindent \textbf{Case 2:} $p_\textrm{A}(s_i',\mathbf{s_{-i}})=s_i'$. In this case, buyer $i$ appears semi-hungry on $(s_i',\mathbf{s}_{-i})$ and receives $ 0 \leq x_i^\textrm{A}(s_i',\mathbf{s_{-i}}) \leq \min\{\floor{B_i/p_\textrm{A}(s_i',\mathbf{s_{-i}})},m\}$ units. Note however that there is enough supply at price $p_\textrm{A}(s_i',\mathbf{s_{-i}})$ to allocate at least $x_i^{\textrm{A}}(\mathbf{s})+1$ units to buyer $i$. This is because (i) buyer $i$ can afford $x_i^\textrm{A}(\mathbf{v})$ items at price $p_\textrm{A}(\mathbf{v})$ and therefore also at price $p_\textrm{A}(s_i',\mathbf{s_{-i}}) \leq p_\textrm{A}(\mathbf{v})$ and (ii) we can add the additional $x_j^\textrm{A}(\mathbf{s})$ units to buyer $i$'s allocation since buyer $j$ is not interested at price $p_\textrm{A}(s_i',\mathbf{s_{-i}})$ and since by assumption, $x_j^\textrm{A}(\mathbf{s}) > 0$. Since welfare-maximizing mechanisms are non-wasteful, buyer $i$ receives at least $x_i^\textrm{A}(\vec{s})+1$ units on $(s_i',\mathbf{s}_{-i})$  and a very similar argument to the previous case shows that this violates the equilibrium condition.\\

	\noindent \textbf{Case 3:} $p_\textrm{A}(s_i',\mathbf{s_{-i}})>s_i'$. This case is not possible as it would violate the welfare-maximizing nature of Mechanism $\textrm{A}$; by setting the price to $s_i'$ and allocating as much as possible to buyer $i$, one would obtain a higher welfare without violating envy-freeness.
	In each case, we obtain a contradiction, which implies that there can be at most one buyer that appears semi-hungry at price $p_\textrm{A}(\mathbf{s})$ on $\mathbf{s}$ that receives a smaller allocation than its allocation under the truth-telling profile, i.e. that $|\mathcal{U}_{p_{\textrm{A}}}(\vec{s})|\leq1$.
\end{proof}

\begin{theorem}[Revenue in any non-overbidding Nash equilibrium]\label{thm:apprevenue-extension}[\ref{thm:revenue-extension} in main text]
Let $\textrm{A}$ be a monotone mechanism that approximates the optimal revenue within a factor of $\beta$ (with $0\leq \beta \leq 1$). Then in every non-overbidding Nash equilibrium of $\textrm{A}$, the revenue is a $\left(\beta-\gamma_A \cdot \alpha\right)/2$ approximation of the optimal revenue for that instance, where $\alpha$ is the budget share of the market and $\gamma_A$ is the maximum number of buyers that receive partial allocations by $\mathrm{A}$.
\end{theorem}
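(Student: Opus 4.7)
The plan is to mimic the structure of the proof of Theorem \ref{thm:revenue-guarantee}, with the role of the ``last deviator'' $\ell$ now played by the set $U := \mathcal{U}_{p_\textrm{A}}(\vec{s})$ of semi-hungry buyers that receive partial allocations in the equilibrium $\vec{s}$. Since $|U| \leq \gamma_A$, removing $U$ from the market will cost at most $\gamma_A \cdot B^{*}$ in terms of revenue, which is at most $\gamma_A \cdot \alpha \cdot \mathcal{REV}(\vec{v},\vec{B})$ by definition of the budget share. The factor of $1/2$ in the denominator will come from a floor/ceiling argument identical to the one in Theorem \ref{thm:revenue-guarantee}.

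First, I would set up the basic inequalities. Using no-overbidding and price-monotonicity, $p_\textrm{A}(\vec{s}) \leq p_\textrm{A}(\vec{v})$. If $p_\textrm{A}(\vec{s}) = p_\textrm{A}(\vec{v})$, then because $\textrm{A}$ is supply-monotone and by Lemma~\ref{lem:equil-one} no previously-allocated buyer loses all units, the revenue at $\vec{s}$ is at least that at $\vec{v}$ and the bound holds trivially. So assume $p_\textrm{A}(\vec{s}) < p_\textrm{A}(\vec{v})$. Then I would introduce the reduced market $\mathcal{M}_{-U} := (\vec{v}_{-U},\vec{B}_{-U},m)$ and establish the key claim that $p_\textrm{A}(\vec{s})$ is an envy-free price for $\mathcal{M}_{-U}$. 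This is because every remaining buyer in $N \setminus U$ is either hungry at $p_\textrm{A}(\vec{s})$ (and hence receives its full demand in $\vec{s}$), semi-hungry with an allocation of $0$ or $\min\{\lfloor B_i/p_\textrm{A}(\vec{s})\rfloor, m\}$ (both in the demand set), or uninterested at $p_\textrm{A}(\vec{s})$ (and receives zero, which is in the demand set).

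Next I would lower bound the equilibrium revenue. Dropping the contributions from buyers in $U$, we have
\[
\mathcal{REV}_{\textrm{A}}(\vec{s},\vec{B}) \;\geq\; \mathcal{REV}_{0}(\vec{v}_{-U},\vec{B}_{-U},p_\textrm{A}(\vec{s})),
\]
where the right-hand side is exactly the revenue obtained from the hungry buyers in the reduced market at price $p_\textrm{A}(\vec{s})$. Because $p_\textrm{A}(\vec{s})$ is envy-free for $\mathcal{M}_{-U}$ and $p_\textrm{A}(\vec{s}) < p_\textrm{A}(\vec{v})$, the same floor argument used in Theorem~\ref{thm:revenue-guarantee} yields
\[
\mathcal{REV}_{0}(\vec{v}_{-U},\vec{B}_{-U},p_\textrm{A}(\vec{s})) \;\geq\; \tfrac{1}{2}\,\mathcal{REV}(\vec{v}_{-U},\vec{B}_{-U},p_\textrm{A}(\vec{v})),
\]
using that each buyer in the relevant ``large valuations'' set $L$ can afford at least one unit at $p_\textrm{A}(\vec{v})$ (and thus at $p_\textrm{A}(\vec{s})$), so $\alpha_i \leq 2\lfloor \alpha_i\rfloor$.

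Finally I would close the loop. Since removing the buyers in $U$ from the market at the fixed price $p_\textrm{A}(\vec{v})$ loses at most $\sum_{i\in U} B_i \leq \gamma_A B^{*} \leq \gamma_A \alpha \cdot \mathcal{REV}(\vec{v},\vec{B})$ in revenue, and since $\textrm{A}$ is a $\beta$-approximation on $\vec{v}$,
\[
\mathcal{REV}(\vec{v}_{-U},\vec{B}_{-U},p_\textrm{A}(\vec{v})) \;\geq\; \beta\,\mathcal{REV}(\vec{v},\vec{B}) - \gamma_A \alpha\,\mathcal{REV}(\vec{v},\vec{B}).
\]
Chaining the three inequalities gives the desired bound $\mathcal{REV}_{\textrm{A}}(\vec{s},\vec{B}) \geq \frac{1}{2}(\beta - \gamma_A \alpha)\,\mathcal{REV}(\vec{v},\vec{B})$. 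The main obstacle I anticipate is the envy-freeness claim for $p_\textrm{A}(\vec{s})$ on $\mathcal{M}_{-U}$: it is exactly here that we use the definition of $U$ (only partial-allocation semi-hungry buyers are removed, so every leftover semi-hungry buyer's allocation in $\vec{s}$ sits in its demand set) and the fact that hungry buyers' allocations are forced by envy-freeness. Everything else is bookkeeping parallel to Theorem~\ref{thm:revenue-guarantee}.
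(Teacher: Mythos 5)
Your plan is essentially the paper's proof: remove the set $U=\mathcal{U}_{p_\textrm{A}}(\vec{s})$ of partial-allocation semi-hungry buyers, show $p_\textrm{A}(\vec{s})$ is envy-free for the reduced market $\mathcal{M}_{-U}$, apply the same floor-doubling argument for the $1/2$, and absorb the removed buyers' revenue via the budget-share bound. The bookkeeping at the end is correct and matches the paper exactly.

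The gap is in the envy-freeness claim, which you flag as ``the main obstacle'' but then justify only at the level of \emph{reported} types. You classify buyers in $N\setminus U$ by how they look on $\vec{s}$ (hungry/semi-hungry/uninterested at $p_\textrm{A}(\vec{s})$) and conclude the equilibrium allocation restricted to $N\setminus U$ sits in everyone's demand set. But envy-freeness of $\mathcal{M}_{-U}=(\vec{v}_{-U},\vec{B}_{-U},m)$ is a statement about the \emph{true} demand sets $D_i(p_\textrm{A}(\vec{s}))$ computed from $v_i$, not $s_i$. Under no-overbidding $s_i\le v_i$, so a buyer $i\notin U$ that appears semi-hungry on $\vec{s}$ (so $s_i=p_\textrm{A}(\vec{s})$) and receives $0$ units might have $v_i>p_\textrm{A}(\vec{s})$; then $0$ is \emph{not} in its true demand set and your ``both in the demand set'' claim fails. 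The same problem occurs for buyers that appear uninterested on $\vec{s}$ while truly hungry at $p_\textrm{A}(\vec{s})$. The paper closes exactly this hole by invoking Lemma \ref{lem:equil-one}: any $i\in\mathcal{H}_{p_\textrm{A}}(\vec{v})\setminus U$ receives a positive allocation on $\vec{v}$ (by envy-freeness at $p_\textrm{A}(\vec{v})$), hence by the lemma cannot receive $0$ on $\vec{s}$; combined with $i\notin U$, such a buyer must get the full demand $\min\{\lfloor B_i/p_\textrm{A}(\vec{s})\rfloor,m\}$. You do cite Lemma \ref{lem:equil-one} for the easy $p_\textrm{A}(\vec{s})=p_\textrm{A}(\vec{v})$ case, but this is precisely where it is load-bearing and your sketch omits it; without it the chain $\mathcal{REV}_\textrm{A}(\vec{s},\vec{B})=\mathcal{REV}_0(\vec{v}_{-U},\vec{B}_{-U},p_\textrm{A}(\vec{s}))$ has no justification.

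A second, smaller remark: in the $p_\textrm{A}(\vec{s})=p_\textrm{A}(\vec{v})$ case you assert that supply-monotonicity and Lemma \ref{lem:equil-one} give ``revenue at $\vec{s}$ is at least that at $\vec{v}$''; the lemma only rules out a drop to zero units, not a drop from a large partial allocation to a smaller one, so this needs the same $\gamma_A\cdot\alpha$ slack (or the paper's direct equilibrium-deviation argument for buyers in $\mathcal{H}_{p_\textrm{A}}(\vec{v})$). The target bound still holds, but the phrase ``holds trivially'' overclaims.
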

\begin{proof}
	Several arguments will differ from the proof of Theorem \ref{thm:revenue-guarantee}, since we are now considering all non-overbidding equilibria and not just those that can be reached from truth-telling by best-responding. Similarly to before, for a fixed vector of budgets $\vec{B}$ and number of units $m$, let $p_\textrm{A}(\vec{\tilde{v}})$ denote the price and $x_i^\textrm{A}(\vec{\tilde{v}})$ denote the allocation buyer $i$  on input $\mathcal{M}= (\vec{\tilde{v},B},m)$.	
	
	Let $\mathbf{s}$ be a pure Nash equilibrium of Mechanism $\textrm{A}$. First, notice that by the fact that buyers do not overbid, it holds that $s_i \leq v_i$ for all $i \in N$, which in turns implies that $p_\textrm{A}(\mathbf{s}) \leq p_\textrm{A}(\mathbf{v})$, by the price-monotonicity of Mechanism $\textrm{A}$. Similarly to the proof of Theorem \ref{thm:revenue-guarantee}, we consider two cases.\\
	
	\noindent	\textbf{Case 1:} $p_\textrm{A}(\vec{s}) = p_\textrm{A}(\vec{v})$. Consider any buyer $i \in H_{p_\textrm{A}}(\vec{v})$ that receives $0\leq x_i^\textrm{A}(\vec{s})<x_i^\textrm{A}(\vec{v})$ units on $\mathbf{s}$, and consider its deviation from $\vec{s}$ to truth-telling $v_i$, with the resulting profile $(v_i,\vec{s_{-i}})$. By the price-monotonicity of Mechanism $\textrm{A}$, it holds that $p_\textrm{A}(\vec{s}) \leq p_\textrm{A}(v_i,\vec{s_{-i}}) \leq p_\textrm{A}(\mathbf{v})$, which means that $p_\textrm{A}(v_i,\vec{s_{-i}}) = p_\textrm{A}(\mathbf{s})$. However on $(v_i,\vec{s_{-i}})$, buyer $i$ now appears hungry at the same price $p_\textrm{A}(\vec{s})$ and receives $x_i^\textrm{A}(\mathbf{v})$ units, violating the fact that $\mathbf{s}$ is a pure Nash equilibrium. Therefore, every buyer $i \in H_{p_\textrm{A}}(\vec{v})$ receives $x_i^\textrm{A}(\vec{s})$ units on $\vec{s}$ and there is no loss in revenue.\\
	
	\noindent \textbf{Case 2}. $p_\textrm{A}(\vec{s}) < p_\textrm{A}(\vec{v})$. By Lemma \ref{lem:equil-one}, it holds that there do not exist buyers that receive non-zero allocations on the truth-telling profile $\mathbf{v}$ and zero allocations on the equilibrium profile $\mathbf{s}$. Therefore, on $\mathbf{s}$, each buyer receives at least as many units as they did on the true input $\vec{v}$, except possibly the set $\mathcal{U}_{p_\textrm{A}}(\mathbf{s})$ containing the semi-hungry buyers $j$ that receive allocations in $(0,\min\{\floor{B_j/p_\textrm{A}(\mathbf{s})},m\})$ (that is, each buyer $j$ in the set $\mathcal{U}_{p_\textrm{A}}(\mathbf{s})$, receives a number of units $y_j$ such that $0 < y_j < \min\left\{\floor{B_j/p_\textrm{A}(\mathbf{s})},m\right\}$).
	
	\medskip 
	
	\noindent Again, we consider the \emph{reduced market} $\mathcal{M}'=\left(\mathbf{v}_{-\mathcal{U}},\mathbf{B}_{-\mathcal{U}},m\right)$, which is obtained by the original market $\mathcal{M}=(\mathbf{v},\mathbf{B},m)$ by removing all buyers in $\mathcal{U}_{p_\textrm{A}}(\mathbf{s})$. Let $p_{min}^{-\mathcal{U}}$ denote the minimum envy-free price in $\mathcal{M}'$. We claim that $p_\textrm{A}(\mathbf{s}) \geq p_{min}^{-\mathcal{U}}$, i.e. $p_\textrm{A}(\mathbf{s})$ is an envy-free price of the reduced market. To see this, we need to establish that at price $p_\textrm{A}(\mathbf{s})$,
	\begin{enumerate}
		\item $\mathcal{I}_{p_{\textrm{A}}(\vec{s})}(\vec{s})\subseteq \mathcal{I}_{p_\textrm{A}(\vec{s})}(\vec{v})$, i.e. there are no buyers that are uninterested at price $p_\textrm{A}(\vec{s})$ on $\mathbf{v}$ and appear interested on $\mathbf{s}$.
		\item every buyer $i \in \mathcal{H}_{p_\textrm{A}}(\vec{v})\backslash \mathcal{U}_{p_\textrm{A}}(\mathbf{s})$ (i.e. who is hungry at $p_\textrm{A}(\vec{v})$) receives $x_i^\textrm{A}(\mathbf{v})$ units on $\mathbf{s}$ at price $p_\textrm{A}(\vec{s})$.
	\end{enumerate}
	The first property follows directly by the fact that buyers are not overbidding. For the second property, note that for every buyer $i \in \mathcal{H}_{p_\textrm{A}}(\vec{v})\backslash \mathcal{U}_{p_\textrm{A}}(\mathbf{s})$ that is hungry on $\mathbf{v}$ and which also appears to be hungry at $\mathbf{s}$, the property is trivially satisfied by the definition of envy-free pricing. Also, if buyer $i$ appears to be semi-hungry at $\mathbf{s}$, since $i \notin \mathcal{U}_{p_\textrm{A}}(\mathbf{s})$, it receives either $x_i^\textrm{A}(\vec{v})$ units (i.e. as many units as it can afford at this price) or $0$ units at $p_\textrm{A}(\mathbf{s})$. By Lemma \ref{lem:equil-one}, the latter is not possible, since $i \in \mathcal{H}_{p_\textrm{A}}(\vec{v})$, and therefore it follows that $x_i^\textrm{A}(\vec{s}) = x_i^\textrm{A}(\vec{v})$. This establishes the second property which in turn implies that $p_\textrm{A}(\vec{s}) \geq p_{min}^{\mathcal{-U}}$, i.e. that $p_\textrm{A}(\mathbf{s})$ is an envy-free price of the reduced market $\mathcal{M}'$.
	
	Recall the definitions of the first paragraph in the proof of Theorem \ref{thm:revenue-guarantee}. 
	By Property (2) above, it follows that $\mathcal{REV}_\textrm{A}(\mathbf{s},\mathbf{B},p_\textrm{A}(\mathbf{s}))=\mathcal{REV}_0(\mathbf{v}_{\mathcal{-U}},\mathbf{B}_{\mathcal{-U}},p_\textrm{A}(\mathbf{s}))$ and therefore it suffices to lower bound $\mathcal{REV}_0(\mathbf{v}_{\mathcal{-U}},\mathbf{B}_{\mathcal{-U}},p_\textrm{A}(\mathbf{s}))$, i.e. the minimum possible revenue attainable at price $p_\textrm{A}(\mathbf{s})$. In order to establish that $$\mathcal{REV}_0(\mathbf{v}_{\mathcal{-U}},\mathbf{B}_{\mathcal{-U}},p_\textrm{A}(\mathbf{s})) \geq (1/2)\mathcal{REV}(\mathbf{v}_{\mathcal{-U}},\mathbf{B}_{\mathcal{-U}},p_\textrm{A}(\vec{v})),$$ the arguments are identical to those used in the proof of Theorem \ref{thm:revenue-guarantee} (with the only notational difference that the index $-\ell$ is now replaced by $\mathcal{-U}$).\\
	
	\noindent Additionally, we have that
\begin{equation*}
	\mathcal{REV}(\mathbf{v}_{-\mathcal{U}},\mathbf{B}_{-\mathcal{U}},p_\textrm{A}(\vec{v})) \geq \mathcal{REV}(\mathbf{v},\mathbf{B},p_\mathrm{A}(\mathbf{v})) - \sum_{i \in \mathcal{U}_{p_\textrm{A}}(\mathbf{s})}x_i^\textrm{A}(\vec{v}) \cdot p_\mathrm{A}(\vec{v}),
\end{equation*}
	because by simply choosing price $p_\textrm{A}(\mathbf{v})$ on $\mathcal{M}'=(\mathbf{v}_{\mathcal{-U}},\mathbf{B}_{\mathcal{-U}},m)$ (which is an envy-free price), we lose at most the contribution to the revenue of the buyers in $\mathcal{U}_{p_\textrm{A}}(\mathbf{s})$. Using a very similar calculation as in the proof of Theorem \ref{thm:revenue-guarantee}, we obtain that
	\begin{eqnarray*}
		\mathcal{REV}_{\textrm{A}}(\vec{s},\vec{B}) &\geq& \frac{1}{2}\cdot  \mathcal{REV}(\mathbf{v}_{\mathcal{-U}},\mathbf{B}_{\mathcal{-U}},p_\textrm{A}(\vec{v}))\\
		&\geq& \frac{\beta}{2} \cdot \mathcal{REV}(\vec{v},\vec{B}) - \frac{\sum_{i \in \mathcal{U}_{p_\textrm{A}}(\mathbf{s})}B_i}{2} \\
		&\geq& \frac{1}{2}\left(\beta-\gamma_A \cdot \alpha\right)\mathcal{REV}(\vec{v},\vec{B})
	\end{eqnarray*}	
	where the last inequality follows from the definition of the budget share. This completes the argument.
\end{proof}

\section{The \textsc{All-or-Nothing} mechanism}\label{app:section5}

To give some intuition, we start with a few simple examples of equilibria of $\textsc{All-Or-Nothing}$ which are not the dominant-strategy, truth-telling equilibria. Note that the set of equilibria the we present contain both overbidding and non-overbidding equilibria and equilibria where the price is either higher, lower or equal to the price under truth-telling.
\begin{example}
	Consider the following profile of true valuations with $m=10$ units and $n=5$ buyers such that for each buyer $i \in \{1,2,3,4,5\}$, the buyer has a valuation-budget pair $(v_i,B_i)$ given by the following:
	\begin{eqnarray*}
	(v_1,B_1) = (2,2) ,\ \ (v_2,B_2) = (2,2), \ \ (v_3,B_3) = (1,6), \ \ 
	 (v_4,B_4) = (0.5,1), \ \ (v_5,B_5) = (0.5,2) 
	\end{eqnarray*}
	Consider the following profiles, which are easily verifable to be pure Nash equilibria:
	\begin{enumerate}
		\item All buyers are truth-telling. In that case, the price is $p=1$, buyers $1$ and $2$ receive 2 units each and all other buyers receive $0$ units. This is the dominant-strategy equilibrium.
		\item $v_1' = 3$, $v_5' = 0.4$ and $v_i' = v_i$ for $i \in \{2,3,4\}$. The price is again $p=1$, buyers $1$ and $2$ receive 2 units each and all other buyers receive $0$ units.
		\item $v_4' = v_5'= 0.1$, $v_i' = v_i$ for $i \in \{1,2,3\}$. The price is again $p=1$, buyers $1$ and $2$ receive 2 units each and all other buyers receive $0$ units.
		\item $v_3' = 0.5$, $v_i' = v_i$ for $i \in \{1,2,4,5\}$. The price is now $p=0.5$, buyers $1$ and $2$ receive $4$ units each, buyer $4$ receives the remaining $2$ units and buyers $3$ and $4$ receive $0$ units.
	\end{enumerate}
\end{example}

We will show that for $n=2$ buyers, the best response dynamic of this mechanism converges no matter what the starting profile is.
\begin{theorem} \label{thm:appaon_converge}[\ref{thm:aon_converge} in main text]
For $n=2$ buyers, the best response dynamic of the \textsc{All-or-Nothing} mechanism converges to a Nash equilibrium from any initial strategy profile.
\end{theorem}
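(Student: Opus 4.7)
The plan is to prove that, under the \textsc{All-or-Nothing} mechanism with two buyers, any best-response sequence is acyclic and hence terminates. The main ingredients are the truthfulness and monotonicity of \textsc{All-or-Nothing} (Theorem \ref{thm:aonbranzei}) together with a structural fact that is special to $n=2$: the minimum envy-free price $P(s_1,s_2)$ depends only on $\min(s_1,s_2)$ and the fixed market parameters $B_1,B_2,m$. Indeed, at any $p < \min(s_1,s_2)$ both buyers are hungry and feasibility of the envy-free allocation reduces to $\lfloor B_1/p\rfloor + \lfloor B_2/p\rfloor \le m$, which is symmetric in the buyers; at $p=\min(s_1,s_2)$ the mechanism can always fall back to serving the hungry buyer fully and allocating zero to the semi-hungry one. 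Combined with truthfulness, this yields $\max_{s_i} u_i(s_i,s_{-i}) = u_i(v_i,s_{-i})$, so every best response of buyer $i$ achieves exactly the truth-telling utility against $s_{-i}$.

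Using these facts, I would first show that a strictly improving response by buyer $i$ must change $\min(s_1,s_2)$. If buyer $i$'s new report stays on the same side of $s_{-i}$, then $P$ does not change, and buyer $i$'s allocation is also unchanged: either it remains the full hungry amount $\min(\lfloor B_i/P\rfloor,m)$ (when buyer $i$ remains strictly above $s_{-i}$), or it remains the same all-or-nothing value given by the same lex-order tie-breaking (when buyer $i$ stays strictly below $s_{-i}$). Hence any strict improvement either (a) lowers $\min(s_1,s_2)$, thereby weakly lowering the price, or (b) raises $\min(s_1,s_2)$ above its previous value, possibly raising the price but shifting the responder from a zero (or small) all-or-nothing allocation to a positive hungry allocation.

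To rule out cycles I would construct a lexicographic potential
\[
\Phi(s_1,s_2) \;=\; \bigl(-u_1(s_1,s_2)-u_2(s_1,s_2),\ P(s_1,s_2),\ -X_1(s_1,s_2)-X_2(s_1,s_2)\bigr)
\]
and argue that $\Phi$ strictly decreases in lex order at each strict improvement. In case (a) the second coordinate drops and the first coordinate weakly improves, since both buyers' utilities are weakly better at a lower envy-free price (the responder by assumption, the other by supply-monotonicity combined with the no-overbidding flavor of the move). In case (b) the responder's utility jumps to $u_i^*(s_{-i})$, and because only two buyers are involved, the units that allow this jump can only come from the previously allocated units of the other buyer; a direct comparison using the minimality of $P$ and the all-or-nothing rule shows that the gain in $u_i$ strictly exceeds any loss in $u_{-i}$, so the first coordinate of $\Phi$ still strictly decreases. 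Since reports, prices, and allocations live in a bounded discrete grid, $\Phi$ takes finitely many values, and strict lex-descent forces the dynamic to reach a pure Nash equilibrium in finitely many steps. The main obstacle is case (b): proving that the responder's gain dominates the other buyer's loss in every sub-configuration (hungry$\to$hungry, semi-hungry$\to$hungry, uninterested$\to$semi-hungry, etc.) requires a careful case analysis that leans heavily on both $n=2$ and on the fact that \textsc{All-or-Nothing} charges the minimum envy-free price.
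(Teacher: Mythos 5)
Your structural observation is correct: for $n=2$ the minimum envy-free price is indeed a (nondecreasing) function of $\min(s_1,s_2)$ alone, since for $p<\min(s_1,s_2)$ both buyers are hungry and envy-freeness reduces to a condition that depends only on $B_1,B_2,m$, while for $p\ge\min(s_1,s_2)$ at least one buyer can be assigned zero units. However, the potential you propose does not work, and the core claim in your ``case (b)'' is false, not merely unproven. Specifically, it is \emph{not} true that in a best response that raises $\min(s_1,s_2)$ the responder's gain in utility dominates the other buyer's loss. Concrete counterexample: take $m=4$, $B_1=B_2=4$, $v_1=2$, $v_2=1.5$, and consider the profile $(s_1,s_2)=(1,2)$ (grid step $0.05$). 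The minimum envy-free price is $P=1$; Bob is hungry with demand $4$ and receives all $4$ units, Alice is semi-hungry and receives $0$ after Bob is served. Utilities: $u_1=0$, $u_2=(1.5-1)\cdot 4=2$; sum $=2$. Alice's best response is to raise her report to $s_1'=1.35$ (any report $\ge 1.35$ is equally good). Now $\min(s_1',s_2)=1.35$, and at $p<1.35$ both buyers are hungry with demand $\ge 3$ each, so the minimum envy-free price is $P'=1.35$, at which both demand $2$ units and both receive $2$. Utilities: $u_1=(2-1.35)\cdot 2=1.3$, $u_2=(1.5-1.35)\cdot 2=0.3$; sum $=1.6<2$. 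Alice's gain ($1.3$) is strictly less than Bob's loss ($1.7$), and the price also rose. Hence $\Phi=(-u_1-u_2, P, -X_1-X_2)$ moves from $(-2,\,1,\,-4)$ to $(-1.6,\,1.35,\,-4)$, a strict \emph{increase} in lexicographic order. So this potential cannot drive a convergence proof.

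The paper's argument uses a genuinely different and more concrete decomposition: buyers are classified into types $+$, $0$, $-$ according to the sign of their current (true) utility, and a sequence of lemmas pins down the only possible transitions under best response (a type $+$ buyer is already best responding; a type $-$ buyer can only move to type $0$, and this weakly decreases the price; a type $0$ buyer can only move to type $+$, and this weakly increases the price). Convergence then follows from an exhaustive case analysis of the nine joint type-states for two buyers, with the only nontrivial termination argument being that along a potential $(0,+)\leftrightarrow(+,0)$ cycle the price is monotone nondecreasing and hence bounded. Notice how the counterexample above is handled there: the move $(0,+)\to(+,+)$ is immediately an equilibrium because neither type-$+$ buyer has an improving deviation, with no need for any global potential to decrease. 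To salvage your approach you would need to replace the sum-of-utilities coordinate with something that is actually monotone along best responses; the paper's type information and price-monotonicity lemmas suggest what such a quantity would have to look like, but your current $\Phi$ is not it.
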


\noindent In order to prove this theorem we develop first a series of lemmas. We start with a simple lemma with some immediate properties of the mechanism.
\begin{lemma}\label{observation1}
	The following facts hold about the \textsc{All-Or-Nothing} mechanism:
	
	\begin{enumerate}
		\item \textsc{All-Or-Nothing} is monotonic.\medskip
		\item Given a price $p$, \textsc{All-Or-Nothing} allocates to each buyer either {\em all} or {\em nothing} at this price, but never anything in between.\medskip
		\item In \textsc{All-Or-Nothing}, a buyer will never best-repond twice in a row.
	\end{enumerate}
\end{lemma}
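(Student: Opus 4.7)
The proof naturally splits across the three claims, so my plan is to tackle each separately.

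For part (1), I would split monotonicity into price-monotonicity and supply-monotonicity. For price-monotonicity, the plan is to show that if $\vec{v}' \leq \vec{v}$, then $p = p_{\textrm{A}}(\vec{v})$ remains an envy-free price on $\vec{v}'$; this immediately implies $p_{\textrm{A}}(\vec{v}') \leq p_{\textrm{A}}(\vec{v})$ because \textsc{All-Or-Nothing} picks the minimum envy-free price. The key observation is that reducing valuations can only shift each buyer at a fixed price $p$ along the chain hungry $\to$ semi-hungry $\to$ uninterested, so in particular $\mathcal{H}_p(\vec{v}') \subseteq \mathcal{H}_p(\vec{v})$; the total hungry demand at $p$ therefore only weakly decreases, and an envy-free allocation at $p$ still exists on $\vec{v}'$. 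For supply-monotonicity, I fix buyer $i$, the price $p$, and $v_i$, and compare two profiles in which more units are available for $i$ in the second. If $v_i > p$, buyer $i$'s demand set is a singleton and is always met since $p$ is envy-free; if $v_i < p$, buyer $i$ receives $0$ in both cases; if $v_i = p$, buyer $i$ enters the lex sweep of semi-hungry buyers and is awarded $\lfloor B_i/p\rfloor$ whenever enough units remain at its turn, so additional leftover supply can only promote $i$'s allocation from $0$ to $\lfloor B_i/p\rfloor$, never demote it.

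For part (2), the proof is essentially a direct reading of the mechanism's definition: hungry buyers receive the singleton demand $\min\{\lfloor B_i/p\rfloor,m\}$ (their ``all''), uninterested buyers receive $0$ (``nothing''), and the semi-hungry rule explicitly awards either $\lfloor B_i/p\rfloor$ or $0$ by the lexicographic sweep. In no case does a buyer end up with a strictly intermediate number of units at price $p$.

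For part (3), the claim is essentially definitional. Suppose buyer $i$ best-responds at round $k$, moving from $s_i^{k-1}$ to $s_i^k$, so that $s_i^k$ maximizes $i$'s utility against $\vec{s}_{-i}^{k-1}$. Since only a single buyer moves per round, $\vec{s}_{-i}^{k} = \vec{s}_{-i}^{k-1}$, so $s_i^k$ is already utility-maximizing against $\vec{s}_{-i}^{k}$. Consequently no alternative strategy can strictly improve $i$'s utility at round $k+1$ while $\vec{s}_{-i}$ is unchanged, so $i$ cannot best-respond again immediately.

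The main obstacle is part (1), and specifically the supply-monotonicity analysis of semi-hungry buyers, where one has to reason about the interaction of the lex order, the budgets, and what ``units available for $i$'' means. The simplifying feature that makes this clean is precisely part (2): at a fixed price the allocation to each semi-hungry buyer is binary, so a gain in leftover supply can only flip $i$'s allocation from $0$ to $\lfloor B_i/p\rfloor$, and never the reverse, avoiding any delicate partial-allocation bookkeeping.
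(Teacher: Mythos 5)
Your proposal is correct and follows essentially the same decomposition as the paper. Part (2) is identical, and part (1) fills in the details the paper only sketches: the paper simply asserts price-monotonicity and supply-monotonicity with a one-line justification each, while you make explicit the useful observation that lowering reports only shrinks $\mathcal{H}_p(\vec{v})$ (so $p_{\textrm{A}}(\vec{v})$ stays envy-free on $\vec{v}'$, bounding the new minimum envy-free price), and that the binary all-or-nothing allocation of semi-hungry buyers is what makes supply-monotonicity painless. One mild divergence: for part (3) the paper argues by contradiction and explicitly invokes ``the definition of \textsc{All-Or-Nothing}'' to assert that the outcome depends only on the current profile; your direct argument makes no appeal to the mechanism at all, just to the fact that the other buyers' strategies are frozen between the two moves. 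That is in fact the cleaner framing, since the property holds for any history-independent mechanism, and you correctly identify that nothing specific to \textsc{All-Or-Nothing} is needed here.
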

\begin{proof}
	We argue each of the items (1)-(3) above individually:
\begin{enumerate}
	\item First, the mechanism is price-monotonic as it always outputs the minimum envy-free price. If a buyer reports a lower value, the minimum envy-free price can not increase and similarly, if a buyer increases its reported value, the minimum envy-free price can not decrease. For supply-monotonicity, note that since the mechanism allocates buyers to the semi-hungry buyers in lexicographic order, it is not possible for a buyer that receives fewer units at the same price, when the number of interested buyers is decreased. From these two properties, we obtain that \textsc{All-Or-Nothing} is monotonic.\medskip
	\item This is by the definition of the mechanism. If a buyer is hungry, it receives all the units in its demand, i.e. \emph{all}. If a buyer is semi-hungry, it either receives all possible units that it can afford, i.e. \emph{all} or no units at all, i.e. \emph{nothing}.\medskip
	\item Assume by contradiction that some agent could best-respond twice in a row from profile $\vec{s}_1$ to $\vec{s}_2$ and then to $\vec{s}_3$. By the definition of \textsc{All-Or-Nothing}, the outcome of the mechanism would be the same as the one obtained if the agent had best-responded from $\vec{s}_1$ to $\vec{s}_3$, contradicting the fact that the move from $\vec{s}_1$ to $\vec{s}_2$ was a best-response.
\end{enumerate}
\end{proof}

\noindent The following definition will be useful.
\begin{definition}[Type $-,0,+$]
	Let $\vec{s}=(s_1,\ldots,s_n)$ be any strategy profile. Then a buyer $i$ has type:
	\begin{itemize}
		\item $+$: if its utility  in $\vec{s}$ is positive.
		\item $0$: if its utility in $\vec{s}$ is $0$.
		\item $-$: if its utility in $\vec{s}$ is negative.
	\end{itemize}
	We will say that an agent is a ``type h'' agent, for $h \in \{-,0,+\}$. Furthermore, given a profile $\vec{s}=(s_1,\ldots,s_n)$ we will use a vector $(h_1,\ldots,h_n)$ to denote the buyers' types.
\end{definition}

\noindent We have the following observation regarding buyers of different types.

\begin{observation}\label{remark1}
\emph{	If a buyer $i$ is of type:}
	\begin{itemize}
		\item \emph{$+$: then its true value $v_i$ is strictly larger than the price $p_{\vec{s}}$ on profile $\mathbf{s}$, its reported value $s_i$ is not smaller than the price $p_\vec{s}$, and the buyer is given all the units it can afford by the mechanism, i.e. \emph{all}.}\medskip
		\item \emph{$-$: then its true value $v_i$ is strictly smaller than the price $p_{\vec{s}}$ on profile $\vec{s}$, its reported value $s_i$ is not smaller than the price $p_\vec{s}$, the buyer is given all the units it can afford by the mechanism, i.e. \emph{all} and the buyer can afford at least one unit. We will refer to such buyers as \emph{wrongfully interested}.}\medskip
		\item \emph{$0$: then there are three cases:}\medskip
		 \begin{enumerate}
			\item \emph{the buyer appears irrelevant (i.e. its reported value $s_i$ is larger than the price $p_\vec{s}$ on profile $\mathbf{s}$, but it receives $0$ units at $p_\vec{s}$ due to insufficient budget).}\medskip
			\item \emph{the buyer's reported value $s_i$ is not larger than $p_\vec{s}$ and its true value $v_i$ is not larger that $p_\vec{s}$. If $s_i = p_\vec{s}$, then the buyer receives $0$ units due to tie-breaking for the semi-hungry buyers.}\medskip
			\item \emph{the buyer's reported value $s_i$ is not larger than $p_\vec{s}$ and its true value $v_i$ is larger that $p_\vec{s}$. If $s_i = p_\vec{s}$, then the buyer receives $0$ units due to tie-breaking for the semi-hungry buyers.}
		\end{enumerate}
	\end{itemize}
\end{observation}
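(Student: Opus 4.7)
The proof is a direct unpacking of the definitions of buyer utility, the demand set, and the \textsc{All-Or-Nothing} allocation rule. The approach is a case analysis on the sign of $u_i(p_\vec{s}, x_i^\textrm{A}(\vec{s}))$. First I would record two universal facts. Fact A: in \textsc{All-Or-Nothing}, a buyer receives a positive allocation only if its reported value is at least $p_\vec{s}$ (the mechanism allocates only to hungry or semi-hungry buyers), and in that case the allocation is exactly $\min\{\lfloor B_i/p_\vec{s}\rfloor,m\}$, by the definition of the mechanism's rule for hungry buyers and by its ``all-or-nothing'' behaviour on semi-hungry buyers (Lemma \ref{observation1}, item 2). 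Fact B: whenever the mechanism serves a buyer, $x_i^\textrm{A}(\vec{s})\cdot p_\vec{s}\le B_i$ by Fact A, so the $-\infty$ branch of the utility definition is never triggered and $u_i=(v_i-p_\vec{s})\,x_i^\textrm{A}(\vec{s})$.

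For type $+$ I use Fact B to conclude $(v_i-p_\vec{s})\,x_i^\textrm{A}(\vec{s})>0$, hence $x_i^\textrm{A}(\vec{s})>0$ and $v_i>p_\vec{s}$; Fact A then yields $s_i\ge p_\vec{s}$ and the ``all'' allocation. For type $-$ the utility is finite and strictly negative (finite by Fact B, since the mechanism never assigns an unaffordable bundle), which forces $x_i^\textrm{A}(\vec{s})\ge 1$ and $v_i<p_\vec{s}$. Fact A again gives $s_i\ge p_\vec{s}$ and the ``all'' allocation; from $x_i^\textrm{A}(\vec{s})\ge 1$ and $p_\vec{s}\cdot x_i^\textrm{A}(\vec{s})\le B_i$ we deduce $B_i\ge p_\vec{s}$, i.e.\ the buyer can afford at least one unit (the ``wrongfully interested'' descriptor).

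For type $0$ the identity $(v_i-p_\vec{s})\,x_i^\textrm{A}(\vec{s})=0$ means either $v_i=p_\vec{s}$ or $x_i^\textrm{A}(\vec{s})=0$. I would split on the position of $s_i$ relative to $p_\vec{s}$. If $s_i>p_\vec{s}$, the buyer is hungry and receives $\min\{\lfloor B_i/p_\vec{s}\rfloor,m\}$ units; for this to produce zero utility with $v_i\neq p_\vec{s}$ the allocation must be $0$, forcing $\lfloor B_i/p_\vec{s}\rfloor=0$, i.e.\ insufficient budget, which is precisely Case 1. If $s_i\le p_\vec{s}$, the buyer is uninterested or semi-hungry, and per the tie-breaking rule of \textsc{All-Or-Nothing}, a semi-hungry buyer with $s_i=p_\vec{s}$ receiving nothing does so only because of lexicographic tie-breaking. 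Sub-splitting on whether $v_i\le p_\vec{s}$ or $v_i>p_\vec{s}$ gives Cases 2 and 3 respectively.

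The only subtlety — and the ``main obstacle'' such as it is — is verifying that the three sub-cases of type $0$ are jointly exhaustive and checking the edge case $v_i=p_\vec{s}$ with $x_i^\textrm{A}(\vec{s})>0$ (which is absorbed into Case 2 or 3 depending on whether $s_i=p_\vec{s}$). No nontrivial inequality manipulation is required; the proof is essentially a careful bookkeeping exercise against the three definitional ingredients.
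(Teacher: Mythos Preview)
The paper states this as an Observation without proof, treating it as immediate from the definitions of utility, demand, and the \textsc{All-Or-Nothing} rule; your careful unpacking via Facts A and B is exactly that implicit argument made explicit, and is correct.

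One small wrinkle worth flagging: the sub-case $s_i>p_{\vec{s}}$, $v_i=p_{\vec{s}}$, $x_i^{\textrm{A}}(\vec{s})>0$ (an overbidding buyer whose true value coincides with the price and who has enough budget) yields zero utility but is not captured by any of the three listed type-$0$ cases. Your remark that this edge case is ``absorbed into Case 2 or 3'' is not right, since both of those cases require $s_i\le p_{\vec{s}}$. This is a minor incompleteness in the Observation's own case list rather than a flaw in your argument, and it plays no role in the downstream lemmas.
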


\noindent We will next prove a series of lemmas, which will be used for proving Theorem \ref{thm:aon_converge}. Note that in the following, Lemmas \ref{fact:nominprice}, \ref{fact:noplusdeviation}, \ref{fact:minustozero} and \ref{lemma:viable} hold for any number of buyers.

\begin{lemma}\label{fact:nominprice}
	A type $+$ or type $0$ buyer can not best-respond to increase its utility by lowering the price.
\end{lemma}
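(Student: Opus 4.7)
The plan is to show that any deviation $s_i \mapsto s_i'$ that strictly lowers the \textsc{All-Or-Nothing} price must leave buyer $i$ with zero units under the new profile. This yields utility $0$, which is strictly worse for a type $+$ buyer (whose current utility is positive) and not a strict improvement for a type $0$ buyer (whose current utility is $0$), so in either case the deviation cannot be an improving best response.

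First I would use the fact that $p_{\vec{s}}$ is the minimum envy-free price, so that for every $p' < p_{\vec{s}}$ the hungry demand at $p'$ under $\vec{s}$, namely $\sum_{j:\, s_j > p'} \lfloor B_j/p'\rfloor$, strictly exceeds $m$. For $p'$ to be envy-free under $\vec{s}' = (s_i',\vec{s}_{-i})$, buyer $i$'s bid change must reduce this hungry demand, so (i) buyer $i$ must have been hungry at $p'$ under $\vec{s}$, i.e.\ $s_i > p'$, and (ii) buyer $i$ must leave the hungry set under $\vec{s}'$, i.e.\ $s_i' \leq p'$. Using Observation~\ref{remark1}, condition (i) is automatic for type $+$ (since $s_i \geq p_{\vec{s}} > p'$) and for type $0$ sub-case~1 ($s_i > p_{\vec{s}} > p'$); for type $0$ sub-cases~2 and~3 (where $s_i \leq p_{\vec{s}}$), I would note that if $s_i \leq p'$ then $i$ contributes nothing to the hungry demand at $p'$ in $\vec{s}$, so no bid change of $i$ can reduce that demand and the price cannot drop in the first place.

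Next I would split on whether $s_i' < p'$ or $s_i' = p'$. In the former case, buyer $i$ is uninterested at $p'$ and trivially receives $0$ units. In the latter, buyer $i$ is semi-hungry and by the all-or-nothing rule receives either $\lfloor B_i/p'\rfloor$ units or $0$. Getting ``all'' would require
\[
\sum_{j \neq i,\, s_j > p'} \lfloor B_j/p'\rfloor + \lfloor B_i/p'\rfloor \leq m,
\]
but since $s_i > p'$, the left-hand side is exactly the hungry demand at $p'$ in the original profile $\vec{s}$, which we already know strictly exceeds $m$. Hence ``all'' is infeasible and buyer $i$ again receives $0$ units, closing the utility comparison.

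The main subtlety (and the step that requires the most care) is establishing the uniform condition $s_i > p'$ across all three sub-cases of type $0$ in Observation~\ref{remark1}; sub-cases~2 and~3 are the delicate ones, and the argument there hinges on the simple but essential observation that a buyer whose bid is already at or below $p'$ cannot help lower the price via \emph{any} further deviation. Once this is pinned down, the rest is a clean case analysis combining envy-freeness with the all-or-nothing allocation rule.
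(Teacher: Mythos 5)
Your proof is correct and takes essentially the same route as the paper's. The paper argues by contradiction: an improving deviation that lowers the price would force the deviating buyer to receive ``all'' at the new price (since getting $0$ yields utility $0$, which is not an improvement for a type $+$ or type $0$ buyer), and then that lower price would already have been envy-free on the original profile, contradicting the minimality of $p_{\vec{s}}$. You reach the same contradiction from the other direction, by tracking the hungry demand explicitly: for the minimum envy-free price to drop, buyer $i$ must have been hungry at the new price and must now have bid at or below it, which forces $i$ to receive $0$ units there. The two arguments are contrapositives of the same core observation. One minor imprecision: you write the hungry demand as $\sum_{j:\,s_j>p'}\lfloor B_j/p'\rfloor$, but it should be $\sum_{j:\,s_j>p'}\min\{\lfloor B_j/p'\rfloor,m\}$; the feasibility condition for $i$ getting ``all'' is correctly stated only with the $\min$, and with the cap in place your chain of inequalities is airtight, so this is a notational slip rather than a logical gap.
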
	
\begin{proof}
	Let $p$ be the price before the deviation and assume by contradiction that such a buyer could increase is utility by lowering the price. This implies that at the new price $q$ \textsc{All-Or-Nothing} after the deviation, the buyer receives as many units as it can afford, i.e. {\em all}. Since all the other reports are fixed, this price would be an envy-free price on the original profile, before the deviation, contradicting the minimality of $p$.
\end{proof}

\begin{lemma}\label{fact:noplusdeviation}
	Any type $+$ buyer is already best-responding.
\end{lemma}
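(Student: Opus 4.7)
My plan is a straightforward case analysis on how a putative deviation by a type $+$ buyer $i$ could affect the price computed by \textsc{All-Or-Nothing}. By Observation \ref{remark1}, on the current profile $\vec{s}$ the buyer satisfies $v_i > p_\vec{s}$ and $s_i \geq p_\vec{s}$, and since the mechanism gives a hungry buyer its full demand, the allocation is $x_i = \min\{\lfloor B_i/p_\vec{s}\rfloor, m\}$, yielding strictly positive utility $U = (v_i - p_\vec{s})\cdot x_i$. I want to show that no deviation $s_i'$ can achieve utility greater than $U$. Let $q$ denote the price produced by the mechanism on $(s_i',\vec{s}_{-i})$ and $x_i'$ the resulting allocation.

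I would split into three cases according to the sign of $q - p_\vec{s}$. The case $q < p_\vec{s}$ is immediate from Lemma \ref{fact:nominprice}. For $q = p_\vec{s}$, observe that $x_i$ is already the maximum allocation the buyer could possibly receive at price $p_\vec{s}$ (bounded by its budget and by supply), so $x_i' \leq x_i$ and hence the utility is at most $(v_i - p_\vec{s}) x_i' \leq U$. The remaining case $q > p_\vec{s}$ breaks into two sub-cases: if $s_i' < q$, the buyer is uninterested and gets $x_i' = 0$ units for utility $0 < U$; if $s_i' \geq q$, then the maximum allocation the buyer can receive at $q$ is $\min\{\lfloor B_i/q \rfloor, m\} \leq x_i$, while the per-unit surplus is $v_i - q < v_i - p_\vec{s}$ (which may even be negative), so the resulting utility is strictly smaller than $U$.

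I do not expect any real obstacle here: the argument is entirely monotonicity of utility in the price once the allocation rule for hungry buyers is pinned down, combined with the already-established Lemma \ref{fact:nominprice}. The only mild subtlety is making sure in the $q = p_\vec{s}$ case that no deviation yields a \emph{strictly larger} allocation at the same price, which is immediate because $x_i$ already equals the demand $\min\{\lfloor B_i/p_\vec{s}\rfloor, m\}$, the pointwise upper bound imposed by budget and supply. Concluding, no deviation strictly increases utility, so $s_i$ is already a best response.
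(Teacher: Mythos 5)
Your proof is correct and takes essentially the same approach as the paper: both reduce the claim to Lemma \ref{fact:nominprice} by observing that a type $+$ buyer already receives its full demand at the current price, so the only conceivable improvement would require lowering the price while still receiving a maximal bundle. Your version merely makes explicit the case analysis on $q < p_\vec{s}$, $q = p_\vec{s}$, and $q > p_\vec{s}$ that the paper compresses into one sentence.
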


\begin{proof}
	Let $p$ be the price before the deviation. Since the buyer was getting as many units as it could afford, i.e. {\em all} at price $p$, it must be getting all the units that it can afford at a lower price $q$, outputted by \textsc{All-Or-Nothing} after the deviation, for its utility to be higher. This is not possible, by Lemma \ref{fact:nominprice}.
\end{proof}

\begin{lemma}\label{fact:minustozero}
	A type $-$ buyer can only best-respond to become a type $0$ buyer.
\end{lemma}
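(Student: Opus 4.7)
The plan is to rule out the two possible bad outcomes for the new type and conclude that only type $0$ remains. First I would dispose of the easy case: a type $-$ buyer currently has strictly negative utility, while the alternative report $s_i'=0$ (or any report below the resulting price) makes the buyer uninterested on the new profile and yields zero units and utility $0$. This alternative is strictly better than doing nothing, so any best response must achieve utility $\geq 0$, ruling out that the new type is $-$.

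The main content is to exclude the possibility that the best response turns the buyer into type $+$. I would argue by contradiction: suppose the deviation $s_i'$ produces new price $q$ at which the buyer ends up type $+$. From Observation~\ref{remark1} applied to both profiles we have $v_i < p$ (original, type $-$) and $v_i > q$ (new, type $+$), so $q < v_i < p$ and in particular $q<p$. The goal is then to show that the strictly smaller price $q$ is already an envy-free price on the \emph{original} profile $\vec s$, which contradicts the minimality of $p$ that \textsc{All-or-Nothing} enforces.

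To establish this, I would compare the demands at price $q$ on the two profiles. On $\vec{s}$, buyer $i$'s report satisfies $s_i\geq p>q$, so buyer $i$ is hungry at $q$ with demand $k=\min\{\lfloor B_i/q\rfloor,m\}$; the other reports are unchanged, so the demands of all other buyers at $q$ are identical to those on $(s_i',\vec{s}_{-i})$. Because the new type is $+$, buyer $i$ receives exactly $k$ units at $q$ on $(s_i',\vec{s}_{-i})$, hence the sum of the other hungry buyers' demands at $q$ is at most $m-k$. Adding buyer $i$'s demand $k$ gives total hungry demand at $q$ on $\vec{s}$ at most $m$, so $q$ is envy-free on $\vec{s}$—the desired contradiction.

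The step I expect to be the main obstacle is the bookkeeping in the sub-case $s_i'=q$, where buyer $i$ is only semi-hungry on the deviation profile; there I would invoke item~(2) of Lemma~\ref{observation1} (\textsc{All-or-Nothing} gives semi-hungry buyers \emph{all} or \emph{nothing}) together with the fact that type $+$ forces positive allocation, to conclude that buyer $i$ indeed receives all $k$ units, which is what yields the $m-k$ supply slack transferred back to $\vec{s}$.
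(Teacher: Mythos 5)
Your proposal is correct and follows essentially the same route as the paper: rule out the new type being $-$ by noting that reporting $0$ already guarantees utility $0$, then rule out type $+$ by showing that the post-deviation price $q$ would already have been an envy-free price on the original profile, contradicting the minimality of $p$ enforced by \textsc{All-Or-Nothing}. You make explicit two steps the paper leaves implicit (the derivation $q<v_i<p$ from Observation~\ref{remark1}, and the semi-hungry sub-case $s_i'=q$ handled via item~(2) of Lemma~\ref{observation1}), but the argument is the same one the paper invokes by reference to Lemma~\ref{fact:nominprice}.
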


\begin{proof}
	First note that a switch from a $-$ type to a $-$ type (where the utility is still negative but higher) is not a best-response, since a buyer can always switch to reporting $0$. Since $0$ is not an envy-free price, the mechanism will never output it and therefore the buyer will never be allocated any items at this price, for a utiltiy of $0$. 
	
	Suppose now that the buyer switches from a $-$ type to a $+$ type and let $p$ be the price before the deviation and $q$ be the price after the deviation. For buyer i to be a $-$ type buyer, by Observation \ref{remark1}, it has to be the case that (a) its true value is smaller than $p$ and (b) its reported value is weakly larger than $p$ and it receives \emph{all} at price $p$, and $\emph{all}$ is at least one unit. For the buyer to be a $+$ type buyer, it has to receive {\em all} at $q$. By the same argument as the one used in the proof of Lemma \ref{fact:nominprice}, since all the other buyers are fixed, $q$ would be an envy-free price in the original profile (before the deviation), contradicting the minimality of $p$.
\end{proof}

\begin{lemma}\label{lemma:viable}
	The only viable deviations are 
	\begin{enumerate}
		\item from type $-$ to type $0$. Any such deviation does not increase the price.
		\item from type $0$ to type $+$. Any such deviation does not decrease the price.
	\end{enumerate}
\end{lemma}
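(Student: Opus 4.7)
The plan is a case analysis on the type of the deviating buyer that combines the three preceding lemmas with price-monotonicity. The classification of viable deviations follows directly: Lemma \ref{fact:noplusdeviation} rules out beneficial deviations by type $+$ buyers; Lemma \ref{fact:minustozero} says a type $-$ buyer can only transition to type $0$; and a type $0$ buyer has utility exactly $0$, so any strict improvement takes it to positive utility, i.e.\ to type $+$.

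For the type $-$ to type $0$ transition, my plan is to show that the reported value must strictly decrease, after which price-monotonicity of \textsc{All-or-Nothing} (Lemma \ref{observation1}) yields $q \leq p$. Suppose, toward a contradiction, that the buyer best-responds with $s_i' \geq s_i$. By Observation \ref{remark1}, a type $-$ buyer satisfies $s_i \geq p$ and is served the full $\lfloor B_i/p \rfloor$ units it can afford. Consequently $s_i' \geq p$ as well, so buyer $i$ still appears hungry or semi-hungry at $p$, and its contribution to the effective demand under \textsc{All-or-Nothing} remains the same $\lfloor B_i/p \rfloor$ (since it was already being served in full). As all other reports are fixed, the demand profile at $p$ is unchanged and $p$ remains envy-free. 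Combining envy-freeness of $p$ for the perturbed instance with price-monotonicity (which forces $q \geq p$ when $s_i' \geq s_i$) yields $q = p$, so the allocation and hence the utility of buyer $i$ are unchanged, contradicting that the deviation is strictly improving. Hence $s_i' < s_i$ and price-monotonicity gives $q \leq p$.

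For the type $0$ to type $+$ transition, the price bound $q \geq p$ is immediate from Lemma \ref{fact:nominprice}, since a type $0$ buyer cannot strictly improve its utility by lowering the price.

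The main subtle step is excluding $s_i' \geq s_i$ for a type $-$ buyer. The key point is that in \textsc{All-or-Nothing}, once buyer $i$ is hungry or is semi-hungry with the ``all'' allocation, its demand contribution is pinned down by its budget rather than by the precise reported value, so raising the report alone cannot perturb the envy-freeness of the current price or alter the outcome; this is what lets us convert the monotonicity statement into strict inequality of reports.
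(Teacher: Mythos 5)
Your proof is correct and follows essentially the same route as the paper's: both rely on Lemmas~\ref{fact:noplusdeviation} and \ref{fact:minustozero} for the type classification, and both derive the price bound for the type~$-$ case from the fact that \textsc{All-or-Nothing} selects the \emph{minimum} envy-free price together with price-monotonicity. The only cosmetic difference is where you place the contradiction: the paper assumes $q>p$, infers via monotonicity that the report must have increased, and contradicts minimality of $q$; you instead assume the report did not decrease, show $p$ stays envy-free, force $q=p$, and contradict strict improvement of utility. These are the same argument read forward and backward. Your explicit step that a type~$0$ buyer's strict improvement must land on type~$+$ is a small bit of bookkeeping that the paper leaves implicit.
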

\begin{proof}
	The fact that the only viable deviations are from $-$ to $+$ or from $0$ to $+$ follow from Lemma \ref{fact:noplusdeviation} and Lemma \ref{fact:minustozero}. In order to argue the changes in price, let $p$ be the price before the deviation and $q$ be the price after the deviation. Consider first a deviation from type $-$ to type $0$ and assume by contradiction that it increases the price, i.e. $q < p$. By Lemma \ref{remark1}, type $-$ receives {\em all} at price $p$ (and \emph{all} is a non-zero allocation), but its true valuation is lower than $p$. By assumption, $q > p$ and therefore by the monotonicity of \textsc{All-Or-Nothing} (Lemma \ref{observation1}), it must be the case that buyer i increased its report when deviating. Since everything else is fixed and the buyer received {\em all} before, it still receives {\em all} after deviating. Since the sets of hungry buyers and semi-hungry buyers  at price $p$ a has not changed after the deviation, $p$ is still an envy-free price on the new profile obtained after the deviation, contradicting the minimality of $q$.
	
	Next, consider a deviation from a $0$ type to a $+$ type. For the buyer to be a $0$ type buyer, there are three possibilities, presented in Observation \ref{remark1} above. 
	In cases (1) and (2) of the observation, the buyer can only increase its utility by lowering the price which is not possible by Lemma \ref{fact:nominprice}, therefore these buyers are already best-responding. In case (3), the buyer again can not lower its utility by lowering the price by Lemma \ref{fact:nominprice}, therefore the new price $q$ has to be at least as large as $p$. 
\end{proof}

Given Lemma \ref{lemma:viable}, it suffices to consider deviations of buyers from type $-$ to type $0$ and from type $0$ to type $+$. 

\begin{lemma} \label{lem:minplus}
	From a $(-,+)$ or $(+, -)$ state an equilibrium is reached in at most one round.
\end{lemma}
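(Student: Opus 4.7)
By symmetry, consider the state $(-,+)$ with buyer~$1$ of type~$-$ and buyer~$2$ of type~$+$. By Lemma~\ref{fact:noplusdeviation} buyer~$2$ is already best-responding, so the only possible move is by buyer~$1$; by Lemma~\ref{fact:minustozero} this move transitions buyer~$1$ to type~$0$, and by part~(1) of Lemma~\ref{lemma:viable} the resulting price $q$ does not exceed the original price $p$. The plan is to show that the resulting profile $(s_1',s_2)$ is a pure Nash equilibrium, which is what ``reached in at most one round'' amounts to.

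For the deviating buyer, part~(3) of Lemma~\ref{observation1} (no buyer best-responds twice in a row) effectively implies that buyer~$1$ has no further strictly improving move from $(s_1',s_2)$: were there such a move, concatenating the two deviations starting from $(s_1,s_2)$ would be achievable in a single shot and would beat $s_1'$, contradicting that $s_1'$ was a best response. For buyer~$2$ I would argue it remains type~$+$ at the new profile, after which Lemma~\ref{fact:noplusdeviation} again rules out an improving deviation. Since $v_2>p\ge q$ and $s_2\ge p\ge q$, buyer~$2$ is either hungry ($q<s_2$) or semi-hungry ($q=s_2$) at the new price. In the hungry case, supply-monotonicity (part~(1) of Lemma~\ref{observation1}), together with the fact that buyer~$1$ now consumes zero units, delivers to buyer~$2$ at least its original number of units at a weakly lower price, so it stays type~$+$. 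In the semi-hungry case $q=s_2$, buyer~$1$ is lexicographically first yet receives zero, so the remaining supply for buyer~$2$ is the full $m$ minus the (weakly smaller than before) hungry consumption, and buyer~$2$ receives its full $\lfloor B_2/q\rfloor$ units and again remains type~$+$.

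The main obstacle I anticipate is the careful treatment of the borderline semi-hungry subcase where $q=s_2$, and in particular verifying that $\lfloor B_2/q\rfloor$ actually fits in the remaining supply: if it did not, then a strictly smaller envy-free price at which buyer~$2$ is hungry with its $m$-capped demand would also be envy-free (with buyer~$1$ uninterested or semi-hungry taking zero), contradicting the minimality of $q$ guaranteed by \textsc{All-Or-Nothing}. Assembling these pieces, both buyers are best-responding in $(s_1',s_2)$, so it is a pure Nash equilibrium reached in one round; the symmetric case starting from $(+,-)$ is handled identically by swapping the roles of the two buyers.
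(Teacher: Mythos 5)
Your overall structure matches the paper's: the $+$ buyer cannot move (Lemma~\ref{fact:noplusdeviation}), the $-$ buyer moves to type~$0$ with a weakly lower price (Lemma~\ref{lemma:viable}), that buyer is then best-responding (Lemma~\ref{observation1}(3)), and the task reduces to showing the other buyer stays type~$+$. Where the paper simply observes that the $+$ buyer still receives \emph{all} at the weakly lower price (so a transition to a $(0,0)$ or $(0,-)$ state is impossible), you do an explicit hungry/semi-hungry case split, which is fine in the hungry case. However, your backup reasoning for the semi-hungry subcase is not sound: you argue that if buyer~$2$ did not get its full demand at $q$, a strictly smaller price would be envy-free, ``contradicting minimality of $q$.'' That contradiction does not arise --- a semi-hungry buyer receiving fewer units (or zero) is perfectly consistent with envy-freeness, so $q$ would still be the minimum envy-free price, and moreover the smaller price $q''$ you invoke need not be a valid grid point nor envy-free if buyer~$1$ is hungry there. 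Luckily, the contingency you are guarding against never occurs: buyer~$1$ is type~$0$ at $q=p$ with $v_1<p$, so it must receive zero units (otherwise its utility would be negative), which leaves all $m$ units available for buyer~$2$; hence buyer~$2$ gets $\min\{\lfloor B_2/q\rfloor, m\}$ (note the missing $m$-cap in your statement) and stays type~$+$. So the lemma is proved, but you should delete the minimality argument and rely on the direct supply count; the paper's version, which argues via the utility of buyer~$2$ rather than casework, is cleaner still.
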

\begin{proof}
	By Lemma \ref{fact:noplusdeviation}, a $+$ buyer is already best-responding, so any improving deviation will be carried out by the $-$ buyer. W.l.o.g. let Alice be the $-$ buyer (the argument for the $(+,-)$ state will follow by symmetry), $p$ be the current price, and $q$ the price after Alice's best response. By Lemma \ref{lemma:viable}, the price weakly decreases when a $-$ buyer best responds, so $q \leq p$. There are three cases. If the new state is $(0,+)$, Alice just deviated so by Lemma \ref{observation1}, she is now best-responding. Bob is a type $+$ buyer, so by Lemma \ref{fact:noplusdeviation} he is also best-responding and the new profile is an equilibrium. If the new state is $(0,0)$, then we observe that this case is in fact not possible, since $q \leq p$, Bob received all the units he could afford at $p$, and his report did not change and the price did not increase, so he must still receive all the units he can afford at $q$, which implies his new utility cannot be zero. If the new state is $(0,-)$ we obtain the same contradiction in Bob's utility, so this cannot state is not reachable either. 
	Thus we obtain that the $(-,+)$ state is either an equilibrium to begin with or results in an equilibrium in one round.
\end{proof}

\begin{lemma}
	In $(0,+), (+,0) , \ldots, (+,0), (0,+), \ldots$ sequence, if both buyers best-respond consecutively without changing the price, then we are at an equilibrium.
\end{lemma}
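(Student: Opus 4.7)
The plan is to argue by contradiction. Suppose that at time $t$ the state is $(0,+)$ at price $p$, Alice best-responds at $t\to t+1$ yielding $(+,0)$ at the same $p$, Bob best-responds at $t+1\to t+2$ yielding $(0,+)$ again at the same $p$, and yet the resulting state at $t+2$ still admits an improving deviation. By Lemma~\ref{fact:noplusdeviation} only Alice, now type~$0$, could possibly deviate; my aim is to show she cannot, which forces the state at $t+2$ to be a Nash equilibrium.

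First I would pin down the structure of the $t\to t+1$ transition. Since Alice must move from type~$0$ to type~$+$ with $v_A>p$ while keeping the price fixed, Observation~\ref{remark1} places her in case~(3) at time $t$; Bob, being type~$+$ at price $p$, has $v_B>p$ and $s_B^t\geq p$. Because lexicographic tie-breaking among semi-hungry buyers depends only on identity, the only effective report change that flips Alice's allocation from $0$ to positive at the same price is to raise $s_A$ strictly above $p$ so Alice becomes hungry and claims her demand $d_A=\min\{\floor{B_A/p},m\}$. For Bob to then end up type~$0$, he must be semi-hungry at $p$ (so $s_B^t=p$, since if he were hungry the same price would still deliver him his full demand) and receive $0$ by the all-or-nothing rule (Lemma~\ref{observation1}(2)); this happens exactly when $d_B>m-d_A$, i.e. the strict supply infeasibility $d_A+d_B>m$ is forced.

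Next I would use this inequality to analyze the $t+1\to t+2$ transition. For Bob to become type~$+$ at the same price $p$, raising $s_B$ above $p$ makes both buyers hungry with combined demand $d_A+d_B>m$, which is not envy-free at $p$ and so by price-monotonicity (Lemma~\ref{observation1}(1)) forces the minimum envy-free price strictly above $p$, contradicting the constant-price assumption. Leaving $s_B$ at or below $p$ keeps Bob semi-hungry or uninterested at $p$ and the supply picture unchanged from $t+1$, so his allocation remains $0$ and the move is not an improvement. Hence Bob's ``best response'' at constant price is forced to be the null move, i.e.\ the sequence actually terminates at $t+1$; interpreted at $t+2$, this is precisely the statement that no further improving deviation exists.

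To upgrade this to a Nash-equilibrium conclusion rather than merely ``no constant-price deviation,'' I would apply a symmetric analysis to Alice from state $(0,+)$ at $t+2$: her report satisfies $s_A^{t+2}=s_A^{t+1}>p$, so any further deviation either raises the price (again contradicting the premise of consecutive constant-price best-responses) or fails to strictly improve her utility because the same $d_A+d_B>m$ obstruction blocks any gain at $p$, and Lemma~\ref{lemma:viable} rules out price-decreasing deviations of type~$0$ to type~$+$. The main technical obstacle will be a careful enumeration of Observation~\ref{remark1}'s subcases for type~$0$ (in particular ruling out cases~(1) and~(2), which cannot produce a type-$+$ transition at price $p$ by the $v_i>p$ requirement) so that the whole case analysis collapses to the single supply-infeasibility inequality $d_A+d_B>m$, from which the equilibrium conclusion falls out.
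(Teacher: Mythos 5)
Your proposal departs from the paper's proof at a crucial step, and the departure introduces a gap. The paper establishes that Bob is semi-hungry at $p$ throughout, that Alice after her deviation is \emph{either} hungry \emph{or} semi-hungry receiving \emph{all}, and then argues that for Bob to best-respond at constant price, Alice must in fact be semi-hungry (not hungry). Your proof instead asserts that Alice's deviation at $t\to t+1$ must make her \emph{hungry}, justifying this with the claim that lexicographic tie-breaking ``depends only on identity.'' This justification is incomplete: the tie-breaking order is fixed, but the \emph{set} of semi-hungry buyers changes when Alice's report changes. In particular, Observation~\ref{remark1} case~(3) allows $s_A^t < p$ (Alice uninterested, not merely semi-hungry getting $0$); if Alice then moves to $s_A' = p$ she enters the semi-hungry set, and if she precedes Bob lexicographically she gets a positive allocation without ever becoming hungry. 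So Alice may well be \emph{semi-hungry} at $t+1$, which is exactly the case your argument never treats.

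This matters because your contradiction at the $t+1\to t+2$ step (``Bob raising above $p$ makes both hungry with combined demand $>m$, forcing the price up'') relies on Alice being hungry. If Alice is semi-hungry at $t+1$, Bob can raise $s_B$ above $p$, become hungry, and the mechanism satisfies Bob first and gives Alice $0$ by the all-or-nothing rule; the minimum envy-free price remains $p$ because Alice's semi-hungry demand is flexible. In other words, the constant-price transition $(+,0)\to(0,+)$ is perfectly consistent in this case, and your derived contradiction simply does not apply. Your conclusion ``the sequence actually terminates at $t+1$'' therefore only shows that the Alice-hungry scenario is vacuous under the lemma's hypothesis; it does not establish that the reached $(0,+)$ state is a Nash equilibrium, which is what the lemma asserts. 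The paper's route instead uses the surviving (Alice semi-hungry) case to show that Bob ends up hungry at $t+2$, so no report of Alice's at the same price can strip Bob of his full demand, and hence Alice has no improving deviation. Your sketch of a ``symmetric analysis'' gestures at this but would need to be carried out in the Alice-semi-hungry case, which is where the real content of the lemma lives.
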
	

\begin{proof}
Consider the point in the sequence where the first best-response that does not change the price occurs and let $p$ be that price. Also, assume without loss of generality that Alice is best-responding next, hence we are at a $(0,+)$ type profile. 

First note that Bob appears (before and after Alice's deviation) semi-hungry at price $p$. This is true because if Bob was \begin{itemize}
	\item[-] hungry at price $p$ (and since he was not irrelevant, as he was a type $+$ buyer) and the price did not change, he would still be a $+$ type buyer after Alice's deviation. 
	\item[-] uninterested at price $p$, he wouldn't be a type $+$ buyer before Alice's deviation.
\end{itemize}
Additionally, 
\begin{itemize}
	\item[-] \emph{before Alice's deviation}, Bob received as many units as he could afford (i.e. \emph{all}, with \emph{all} being non-zero) at price $p$. This follows from the fact that Bob was a type $+$ buyer before Alice's deviation.
	\item[-] \emph{after Alice's deviation}, Bob receives $0$ units at price $p$. This follows from the fact that Bob is a type $0$ buyer after Alice's deviation and the price has not changed.
\end{itemize}
Now consider Alice after her deviation. Since she is a type $+$ buyer, she either
\begin{itemize}
	\item [-] appears hungry at price $p$ or
	\item [-] appears semi-hungry at price $p$ and receives as many units as she can afford at $p$, i.e. \emph{all} by the mechanism, where \emph{all} is a non-zero quantity.
\end{itemize}
Finally, consider Bob's best-response (to the $(+,0)$ profile next in the sequence), which by assumption, does not change the price $p$. By the discussion above regarding Alice and Bob after Alice's deviation, it follows that Bob can only become a type $+$ buyer if he appears hungry after his best-response, which is only possible if Alice was semi-hungry before Bob's best-response (as otherwise his best-response would result in a  $(+,+)$ profile, contradicting the nature of the sequence). Regardless of what Alice reports next, Bob can not become a type $0$ buyer again, if the price remains the same. Therefore, since by assumption the next state is $(0,+)$, we are at an equilibrium.
\end{proof}

\begin{lemma} \label{lem:zeroplus}
	From a $(0,+)$ or $(+,0)$ state an equilibrium is always reached.
\end{lemma}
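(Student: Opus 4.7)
The plan is to analyze the dynamic starting from a $(0,+)$ profile; the case $(+,0)$ follows by symmetric relabeling. First I would note that by Lemma \ref{fact:noplusdeviation}, the type $+$ buyer is already best-responding, so the only buyer who can deviate is the type $0$ buyer. By Lemma \ref{lemma:viable}, this deviation must be a move from type $0$ to type $+$, and the price weakly increases.

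Next I would perform a case split on the type of the non-deviating buyer after the deviation. If the resulting profile is $(+,+)$, it is already an equilibrium by Lemma \ref{fact:noplusdeviation}. If it is $(+,-)$, then Lemma \ref{lem:minplus} guarantees convergence in at most one further step. The only remaining case is $(+,0)$, in which the roles of the two buyers are now swapped relative to where we started; applying the same argument, together with the fact that by item 3 of Lemma \ref{observation1} the buyer who just deviated cannot move again, the dynamic either terminates via one of the two preceding cases or enters an alternating sequence of $(0,+)$ and $(+,0)$ profiles.

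The main step is to show that such an alternating sequence cannot be infinite. Along the sequence every move is a $0 \to +$ deviation, so by Lemma \ref{lemma:viable} the price is non-decreasing throughout. The intermediate lemma preceding Lemma \ref{lem:zeroplus} asserts that if two consecutive best-responses in the sequence leave the price unchanged, then we have already reached an equilibrium. Hence, if the dynamic fails to terminate, the price must strictly increase at least once in every two consecutive rounds. The key observation I would use is that a successful $0 \to +$ deviation by buyer $i$ requires the post-deviation price to be strictly less than $v_i$ (otherwise the deviator would end up as type $0$ or $-$, not $+$), so all prices appearing in the sequence lie in $[0, \max(v_1, v_2))$. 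Since the output domain is a discrete grid of step $\delta > 0$, only finitely many prices are available in this bounded interval, so the chain of strict increases must terminate; at that point the intermediate lemma forces an equilibrium, which is the main obstacle to the argument and where I would spend the most care in the writeup.
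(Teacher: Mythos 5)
Your proof is correct and follows essentially the same structure as the paper's: you use Lemma \ref{fact:noplusdeviation} to identify the deviator, Lemma \ref{lemma:viable} for the type transition and price monotonicity, the unnamed lemma preceding Lemma \ref{lem:zeroplus} to rule out two consecutive price-preserving best responses, and the discreteness of the output grid together with a finite bound on the price to terminate the alternating $(0,+),(+,0),\ldots$ sequence. The only cosmetic differences are that you invoke Lemma \ref{lem:minplus} for the $(+,-)$ case where the paper argues directly, and you bound the price strictly below $\max(v_1,v_2)$ (justified because a post-deviation type $+$ buyer must have true value strictly above the price) whereas the paper bounds it by the sum of budgets; both are valid finite bounds and yield the same conclusion.
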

\begin{proof}
	The $+$ buyer is already best responding, so an improving deviation can only be obtained by the $0$ buyer.
	W.l.o.g. let Alice be the $0$ buyer, $p$ be the price at the initial $(0,+)$ state, and $q$ the price after her deviation. By Lemma \ref{lemma:viable}, $q \geq p$. The change of state after Alice's deviation is given by one of the following:
	\begin{description}
		\item[1.] $(0,+) \rightarrow (+,-)$: Bob received all the units he could afford before and still receives all the units he can afford in the new state (at a possibly higher price $q$) but is not wrongfully interested. By Lemma \ref{lemma:viable}, he best responds with becoming a type $0$ buyer and new chosen price $q'$ is not larger than $q$. Therefore, since Bob just best-responded and Alice is still a $+$ type at the new price, by Lemma \ref{fact:noplusdeviation}, \emph{we are at an equilibrium}.\medskip
		
		\item[2.] $(0,+) \rightarrow (+,+)$: By Lemma \ref{fact:noplusdeviation}, both buyers are already best responding, so this is an equilibrium.\medskip
		
		\item[3.] $(0,+) \rightarrow (+,0)$: If Bob has no improving deviation, we reached an equilibrium. Otherwise, after the deviation Bob will become a type $+$, yielding one of the states:\medskip
		\begin{description}
			\item[a.] $(+,+)$: An equilibrium by Lemma \ref{fact:noplusdeviation}.\medskip
			\item[b.] $(-,+)$: Leads to an equilibrium by Lemma \ref{lem:minplus}.\medskip
			\item[c.] $(0,+)$: This is the case of interest, where the price increased as a result of Bob's best-response and Bob became $+$ while turning Alice into $0$. If Alice has any best-response at this point, she can only become a $+$ as a result of it while turning Bob into one of $-,0,+$. If Bob becomes $-$ or $+$, we fall into cases $3.b$ and $3.a$, respectively, which lead to an equilibrium.\medskip
			
			Thus to complete the argument we must show that an alternating sequence of the form $(0,+), (+,0) , \ldots, (+,0), (0,+), \ldots$ will converge to an equilibrium. By Lemma \ref{lemma:viable}, the price cannot decrease along such a sequence. If both buyers best-respond along the sequence without changing the price, then we are at an equilibrium by Lemma. Otherwise, if the price is increased in every round along the sequence, the process must stop before the price is higher than the sum of budgets (at such a price no buyer can afford anything, and at least one of them had strictly positive utility before).
		\end{description}
	\end{description}
\end{proof}
\begin{lemma} \label{lem:zerozero}
	From a $(0,0)$ state an equilibrium is always reached.
\end{lemma}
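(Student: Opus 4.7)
My plan is to reduce the $(0,0)$ case to the states already handled by Lemmas~\ref{lem:minplus}, \ref{lem:zeroplus}, and \ref{fact:noplusdeviation}. First I would observe that if neither buyer has an improving deviation, we are already at an equilibrium, so without loss of generality I may assume Alice best-responds.

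Since Alice is of type $0$, Lemma~\ref{lemma:viable} restricts her only viable deviation to be of the form $0 \to +$: after her best-response, Alice has type $+$ and the price weakly increases. Bob's reported value is unchanged, so his new type is determined by the updated price and resulting allocation; since his utility is a well-defined real number, his new type must be exactly one of $+$, $0$, or $-$. I would then enumerate these three possibilities. If Bob becomes type $+$, the resulting state $(+,+)$ is already an equilibrium by Lemma~\ref{fact:noplusdeviation}. If Bob becomes type $0$, the state $(+,0)$ converges to an equilibrium by Lemma~\ref{lem:zeroplus}. If Bob becomes type $-$, the state $(+,-)$ converges to an equilibrium by Lemma~\ref{lem:minplus}. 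Together these cover all possibilities.

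The main thing to verify is that the case analysis is truly exhaustive and that the invoked lemmas do not secretly depend on how the current state was reached during the dynamic. Inspecting those lemmas, each is phrased purely in terms of the current type configuration rather than any history, so the reduction is clean. A minor sanity check using Observation~\ref{remark1} is also worthwhile, to confirm that Bob can in principle transition to any of $+$, $0$, $-$ (for instance, a semi-hungry Bob with $s_{Bob} = p$ who was previously excluded by lexicographic tie-breaking may now receive units once Alice turns hungry). Consequently the entire argument collapses to a one-step reduction, with Lemma~\ref{lemma:viable} serving as the key structural fact that rules out Alice either staying at type $0$ or moving to type $-$ through an improving deviation.
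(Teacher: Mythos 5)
Your proof is correct and follows essentially the same route as the paper's: assume w.l.o.g.\ that Alice deviates, use Lemma~\ref{lemma:viable} to pin down that she becomes type $+$, and then enumerate Bob's resulting type. The one place you actually improve on the paper's exposition is the $(+,0)$ case: the paper unnecessarily re-derives convergence from $(+,0)$ by analyzing Bob's subsequent best response, whereas you simply invoke Lemma~\ref{lem:zeroplus}, which is stated for both $(0,+)$ and $(+,0)$ and is proved earlier, so your shortcut is valid and cleaner.
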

\begin{proof}
	W.l.o.g. Alice is best responding from this state. If the new state is $(+,-)$, we will reach an equilibrium by Lemma \ref{lem:minplus}. If the new state is $(+,+)$ then both buyers are already best responding. (recall that by Lemma \ref{fact:noplusdeviation}, a $+$ buyer is already best responding.). If the new state is $(+,0)$, then after Alice's deviation, Bob is a $0$ type and by Lemma \ref{lemma:viable}, he will best-respond (if he can) to become a $+$ type. If he does not change Alice's type, we have a $(+,+)$ profile which by Lemma \ref{fact:noplusdeviation}, is an equilibrium. The two remaining cases for the profile obtained after Bob's deviation are $(-,+)$, which leads to an equilibrium by Lemma \ref{lem:minplus} and $(0,+)$, which leads to an equilibrium by Lemma \ref{lem:zeroplus}.
	Thus an equilibrium is always reached as required.
\end{proof}

\begin{lemma} \label{lem:minzero_min}
	From a $(-,0)$ or $(0,-)$ state an equilibrium is always reached if the $-$ buyer is best responding next.
\end{lemma}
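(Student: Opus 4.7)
The plan is to reduce the $(-,0)$ case (the $(0,-)$ case being handled symmetrically) to the previously-established convergence lemmas via one recursive step, and then to close the recursion using a two-level potential argument on the price and the buyers' reports. Let Alice denote the $-$ buyer and Bob the $0$ buyer, with current price $p$. First I would invoke Lemma~\ref{fact:minustozero}, which states that Alice's best response can only transform her type to $0$, together with part~(1) of Lemma~\ref{lemma:viable}, which ensures the new price $q$ satisfies $q \leq p$. Hence the post-move profile has the form $(0, X)$ for some $X \in \{-, 0, +\}$.

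Next I would perform a case analysis on $X$. If $X = +$, Lemma~\ref{lem:zeroplus} immediately yields convergence to an equilibrium. If $X = 0$, Lemma~\ref{lem:zerozero} does the same. The only non-trivial case is $X = -$: the profile becomes $(0,-)$ and, by Lemma~\ref{observation1}(3), Bob (not Alice) must move next. This puts us back in the hypothesis of the current lemma with the two roles interchanged, so the argument iterates.

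To close the induction I would deploy a two-level potential. The outer level is the price, which weakly decreases at every recursive step by part~(1) of Lemma~\ref{lemma:viable}; since prices lie on a discrete grid bounded below by a strictly positive constant, strict decreases can happen only finitely many times, and the price eventually stabilizes at some $p^*$. The inner level tracks type flips at the stable price. When a buyer moves from type $-$ to type $0$ at price $p^*$, their new report $s'$ must satisfy $s' \leq p^*$: if $s' < p^*$, they become permanently uninterested at $p^*$ and cannot reappear as type $-$, since their own report is fixed until they themselves move again; if $s' = p^*$, they are semi-hungry and may be flipped back to type $-$ by the reallocation induced when the other buyer subsequently moves, but then their second best response can only be to report strictly below $p^*$ (reporting $p^*$ again yields the same outcome, and reporting above $p^*$ is strictly worse), after which they are permanently type $0$ at $p^*$. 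Consequently each buyer performs at most two $-\!\to\!0$ transitions at any stable price, so the total number of moves at each price is bounded by a constant; after finitely many price drops and finitely many moves per stable price the process must exit the recursive subcase into one of $(0,0), (0,+), (+,0)$, at which point Lemmas~\ref{lem:zerozero} and~\ref{lem:zeroplus} complete the proof.

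The hard part will be the inner-level bound: making precise the claim that a buyer who reports strictly below the stable price is permanently excluded from type $-$ at $p^*$, and that at most two $-\!\to\!0$ transitions per buyer can occur at a fixed price. Both rely on a careful reading of Observation~\ref{remark1}---which shows that type $-$ requires the buyer's reported value to be at least the price---together with part~(2) of Lemma~\ref{lemma:viable}, which ensures that a type $0$ buyer's best response can only target type $+$ and cannot decrease the price, so the buyer will not voluntarily modify their own report into a range that would place them back into type $-$ at $p^*$. Verifying these structural properties against the lexicographic semi-hungry tie-breaking rule of \textsc{All-or-Nothing} is the main technical step.
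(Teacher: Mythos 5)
Your proof is correct and follows essentially the same structure as the paper's: both proofs reduce the $(-,0)$ case via Lemma~\ref{fact:minustozero} and Lemma~\ref{lemma:viable} to a case split on the type $X$ of the resulting $(0,X)$ state, invoking Lemma~\ref{lem:zeroplus} and Lemma~\ref{lem:zerozero} for $X \in \{+,0\}$, and then close the remaining $(-,0)\leftrightarrow(0,-)$ recursion with a potential argument. The difference is in how that recursion is killed. The paper's argument is a one-step observation: every $-\!\to\!0$ deviation weakly decreases the price, and whenever the price stays constant the deviator's own reported valuation must strictly drop (from $\geq p$ to $\leq p$, with equality impossible on both sides since that would not be a deviation), so the pair (price, sum of reports) strictly decreases lexicographically on a bounded discrete grid. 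Your argument instead partitions the sequence into phases of constant price and bounds the number of $-\!\to\!0$ transitions per buyer per phase by two (report drops to $p^*$ once, possibly gets re-allocated units after the opponent withdraws, then must drop strictly below $p^*$). This is correct and the additional structural observation about re-allocation flipping a semi-hungry buyer back to type $-$ is genuine (supply-monotonicity of \textsc{All-or-Nothing} does allow it), but it is heavier machinery than needed: the paper's direct observation that the deviator's report strictly decreases on every move already yields a single-level potential without tracking per-buyer flip counts or separating price phases.
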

\begin{proof}
	W.l.o.g. Alice is the $-$ buyer and Bob the $0$ buyer; the $(0,-)$ case holds by symmetry. By the condition in the lemma, Alice is best responding next. The change of state after her deviation is given by one of the following:
	\begin{description}
		\item[1.] $(-,0) \rightarrow (0,+)$: Alice just deviated so she is best responding. Bob is $+$, so he is already best responding by Lemma \ref{fact:noplusdeviation} and the new profile is an equilibrium.\medskip
		
		\item[2.] $(-,0) \rightarrow (0,0)$: By Lemma \ref{lem:zeroplus}, this state always leads to an equilibrium.\medskip
		
		\item[3.] $(-,0) \rightarrow (0,-)$: Since Alice just moved from $-$ to $0$, it's Bob's turn to deviate. Bob's type is $-$, so after his deviation he will become $0$ and the price will weakly decrease. We have several subcases depending on Alice's type after Bob's deviation. If the new state is: \medskip
		\begin{description}
		\item[a.] $(0,0)$: by Lemma \ref{lem:zeroplus} we reach an equilibrium. \medskip
		\item[b.] $(+,0)$: then we are at an equilibrium since Bob just best responded and Alice is already best responding since she is a $+$ type. \medskip
		\item[c.] $(-,0)$: then if we do not enter one of the states above (a) or (b) at all, there exists an alternating sequence of deviations where each state in the sequence is $(-,0)$ or $(0,-)$. Since the initial state is such that the $-$ buyer is best responding, the buyer with $-$ type will always be the one deviating along such a sequence. First, by Lemma \ref{lemma:viable}, the price weakly decreases with every such deviation, since the deviator always moves from $-$ to $0$. Second, whenever the price stays constant in such a deviation, the deviator must decrease his reported utility. Thus along every such sequence either the price decreases or the reported valuation of the deviator. This means that the process stops in finite time.
		\end{description}
	\end{description}
	Thus we reach an equilibrium in all cases as required.
\end{proof}

Combining Lemmas \ref{lem:minzero_min} and \ref{lem:minzero_zero}, we obtain that an equilibrium is always reached from a $(-,0)$ or $(0,-)$ state.
\begin{lemma} \label{lem:minzero}
	From a $(-,0)$ or $(0,-)$ state an equilibrium is always reached.
\end{lemma}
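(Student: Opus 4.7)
The plan is to split into cases based on which buyer best-responds next from the $(-,0)$ state; the $(0,-)$ case follows by symmetry. If the $-$ buyer best-responds next, Lemma \ref{lem:minzero_min} immediately gives that an equilibrium is reached. Thus the only task is to handle the case where the $0$ buyer best-responds next, which is the content of Lemma \ref{lem:minzero_zero}. With both subcases in hand, the statement follows by exhaustion over the identity of the deviator.

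For the $0$ buyer subcase, I would follow the template used in the proof of Lemma \ref{lem:zeroplus}. By Lemma \ref{lemma:viable}, any viable deviation of a $0$ buyer turns it into a $+$ buyer and weakly increases the price. Starting from $(-,0)$, after the $0$ buyer deviates the new profile must therefore be $(h, +)$ for some $h \in \{-,0,+\}$, since the $-$ buyer's type can only change in reaction to a (weakly) higher price. I would analyze these three cases: the $(-,+)$ outcome leads to an equilibrium in at most one round by Lemma \ref{lem:minplus}; the $(0,+)$ outcome reaches equilibrium by Lemma \ref{lem:zeroplus}; and the $(+,+)$ outcome is already an equilibrium by Lemma \ref{fact:noplusdeviation}.

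The main obstacle I anticipate is handling chains of deviations that oscillate through $(0,+)$ and $(+,0)$, since these could in principle continue indefinitely. This is already resolved inside Lemma \ref{lem:zeroplus}: along such an alternating sequence the price is weakly monotone (non-decreasing) by Lemma \ref{lemma:viable}, and whenever both buyers in turn best-respond without changing the price one is at an equilibrium; otherwise the price must strictly increase in some round and cannot grow beyond the sum of budgets, giving termination. Since Lemma \ref{lem:minzero_min} and Lemma \ref{lem:minzero_zero} together cover both possible next deviators, combining them yields Lemma \ref{lem:minzero}, completing the argument.
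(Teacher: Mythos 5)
Your proposal is correct and matches the paper's approach exactly: the paper proves Lemma \ref{lem:minzero} as an immediate corollary of Lemma \ref{lem:minzero_min} (the $-$ buyer moves next) and Lemma \ref{lem:minzero_zero} (the $0$ buyer moves next), and your case analysis for the $0$-buyer subcase---landing in $(-,+)$, $(0,+)$, or $(+,+)$ and invoking Lemmas \ref{lem:minplus}, \ref{lem:zeroplus}, and \ref{fact:noplusdeviation} respectively---is precisely the content of the paper's Lemma \ref{lem:minzero_zero}.
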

\begin{proof}
	This is an immediate corollary of Lemmas \ref{lem:minzero_min} and \ref{lem:minzero_zero}.
\end{proof}

\begin{lemma} \label{lem:minzero_zero}
	From a $(-,0)$ or $(0,-)$ state an equilibrium is always reached if the $0$ buyer is best responding next.
\end{lemma}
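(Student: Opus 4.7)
The plan is to dispatch this via a short case analysis on Bob's best response and then invoke the previously proved convergence lemmas for the resulting profiles. Write without loss of generality Alice as the $-$ buyer and Bob as the $0$ buyer; the $(0,-)$ case follows by symmetry of the argument (the lexicographic tie-breaking in the mechanism does not affect the structural part of this proof, since we only need qualitative statements about prices and types).

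First I would apply Lemma \ref{lemma:viable} to Bob's best response: because Bob is a $0$-type buyer, any improving deviation must turn him into a $+$-type, and the post-deviation price $q$ satisfies $q \geq p$, where $p$ is the pre-deviation price. So the profile immediately after Bob's move has Bob of type $+$; the task is to determine Alice's type.

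The key sub-claim, and the one piece of real content, is that Alice cannot become a $+$-type after Bob's deviation. Since Alice was of type $-$ on the original profile, Observation \ref{remark1} gives $v_{\textit{Alice}} < p \leq q$. A buyer whose true value lies strictly below the current price obtains non-positive utility from any allocation (in fact negative from any positive allocation), so Alice must be of type $-$ or $0$ in the new profile. Hence the post-deviation profile is either $(-,+)$ or $(0,+)$.

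From $(-,+)$, Lemma \ref{lem:minplus} gives convergence to a Nash equilibrium in at most one further round; from $(0,+)$, Lemma \ref{lem:zeroplus} gives convergence. This covers every case in which Bob actually best-responds. If Bob has no strictly improving deviation from $(-,0)$, then the next move in the dynamic must be Alice's, and Lemma \ref{lem:minzero_min} handles that scenario. The main (and only) obstacle is the sub-claim that Alice cannot jump directly to type $+$ when Bob deviates; once that is pinned down using Observation \ref{remark1} together with the price-monotonicity statement of Lemma \ref{lemma:viable}, the rest of the argument is simply an appeal to the lemmas already established for $(-,+)$ and $(0,+)$ profiles.
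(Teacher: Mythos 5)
Your proof is correct and takes essentially the same route as the paper's: invoke Lemma \ref{lemma:viable} to conclude that the $0$-buyer's best response moves them to type $+$ with a weakly larger price, then enumerate the type of the other buyer and dispatch each case to Lemma \ref{lem:minplus} or Lemma \ref{lem:zeroplus}. The one small difference is that you rule out the $(+,+)$ outcome explicitly (correctly, since the non-deviating buyer's true value is strictly below $p \leq q$, so by Observation \ref{remark1} they cannot have positive utility), whereas the paper lists $(+,+)$ as a third case and just observes that it is trivially an equilibrium by Lemma \ref{fact:noplusdeviation}. Your version is a hair tighter; the paper's is a hair more defensive. Both are fine.
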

\begin{proof}
	W.l.o.g. Alice is the $0$ buyer, so the state is $(0,-)$. Let $p$ be the current price and $q$ the price after her deviation. By Lemma \ref{lemma:viable}, $q \geq p$. If the new state is $(+,-)$, we reach an equilibrium by Lemma \ref{lem:minplus}. If the new state is $(+,0)$, we reach an equilibrium by Lemma \ref{lem:zeroplus}. If the new state is $(+,+)$: By Lemma \ref{fact:noplusdeviation}, at $(+,+)$ both buyers are already best-responding. 
\end{proof}

\begin{lemma} \label{lem:minmin}
	From a $(-,-)$ state an equilibrium is always reached. 
\end{lemma}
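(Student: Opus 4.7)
The plan is to reduce the $(-,-)$ case to cases already handled by Lemmas~\ref{lem:minzero}, \ref{lem:zerozero}, and \ref{lem:zeroplus}. If neither buyer has a strictly improving deviation, then the starting profile is already a Nash equilibrium and we are done. Otherwise, without loss of generality, suppose Alice is the first to best-respond. Since Alice is of type $-$, by Lemma~\ref{fact:minustozero} her best response transforms her into a type $0$ buyer, and by Lemma~\ref{lemma:viable} the new price $q$ satisfies $q \le p$, where $p$ was the price before the deviation.

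The next step is to classify Bob's type in the new profile. By Observation~\ref{remark1}, Bob's reported value satisfies $s_B \ge p \ge q$, so Bob still appears interested at $q$, and since Bob could afford a non-zero allocation at $p$ (he was type $-$ and received \emph{all}), he can afford at least one unit at $q$ as well. Thus Bob's new utility sign is entirely determined by the comparison of $v_B$ with $q$: he becomes type $-$ if $v_B < q$, type $0$ if $v_B = q$ or if tie-breaking among semi-hungry buyers gives him \emph{nothing}, and type $+$ if $v_B > q$ and he receives \emph{all}. In each case the resulting state is $(0,-)$, $(0,0)$, or $(0,+)$, respectively.

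It then remains to invoke the appropriate convergence lemma for each sub-case: Lemma~\ref{lem:minzero} handles $(0,-)$, Lemma~\ref{lem:zerozero} handles $(0,0)$, and Lemma~\ref{lem:zeroplus} handles $(0,+)$. In every branch the best-response dynamic is guaranteed to reach a pure Nash equilibrium in finitely many rounds, which concludes the argument. The only subtlety (and the point I would verify most carefully) is the claim that Bob's type after Alice's deviation falls cleanly into $\{-,0,+\}$: this relies on the supply-monotonicity piece of Lemma~\ref{observation1} and on the \emph{all-or-nothing} allocation rule, and must be checked against the edge case $s_B = q$ where tie-breaking can deprive Bob of units (landing him in type $0$) rather than producing some intermediate allocation.
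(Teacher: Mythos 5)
Your proof is correct and takes essentially the same route as the paper: apply Lemma \ref{fact:minustozero} to turn the first deviator (WLOG Alice) into type $0$, then branch on Bob's type in the new profile and invoke Lemmas \ref{lem:minzero}, \ref{lem:zerozero}, \ref{lem:zeroplus} for the $(0,-)$, $(0,0)$, $(0,+)$ sub-cases respectively. The only cosmetic differences are that you invoke Lemma \ref{lem:zeroplus} for the $(0,+)$ case where the paper observes the state is already an equilibrium (Alice just best-responded, Bob is $+$ so best-responding by Lemma \ref{fact:noplusdeviation}), and you include an extra characterization of which type Bob lands in, which is not needed since $\{-,0,+\}$ already exhausts the possibilities (also note the initial ``no improving deviation'' branch is vacuous: by the argument in Lemma \ref{fact:minustozero}, a type $-$ buyer can always improve by reporting $0$, so a $(-,-)$ state is never an equilibrium).
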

\begin{proof}
	W.l.o.g., Alice is deviating next. By Lemma \ref{fact:minustozero}, she will be become a $0$ type. If the new state is $(0,+)$, then since Alice just deviated she is already best-responding, and Bob is also best responding by Lemma \ref{fact:noplusdeviation}, so the new profile is an equilibrium. If the new state is $(0,0)$, we reach an equilibrium by Lemma \ref{lem:zerozero}, and if it is $(0,-)$, then we reach an equilibrium by Lemma \ref{lem:minzero}.
\end{proof}

We can now prove the convergence theorem.

\begin{proof}[Proof of Theorem \ref{thm:appaon_converge}][\ref{thm:aon_converge} in main text]
	Let $\mathcal{M} = (m,\vec{B}, \vec{v})$ be a market with two buyers, Alice and Bob.
	Consider any strategy profile $\vec{s} = (s_1, s_2)$.
	By Lemma \ref{fact:noplusdeviation}, only buyers with strictly negative or zero utility at $\vec{s}$ can have improving deviations from $\vec{s}$. Thus if $\vec{s}$ is a $(+,+)$ state, it is already an equilibrium. Otherwise, by Lemmas \ref{lem:minmin}, \ref{lem:minzero}, \ref{lem:minplus}, \ref{lem:zerozero}, \ref{lem:minzero}, the best response sequence stops when initiated from any of the states $(-,-)$, $(-,0)$, $(0,-)$, $(-,+)$, $(+,-)$, $(0,0)$, $(0,+)$, $(+,0)$.
\end{proof}

\end{document}